\newcommand{\tabincell}[2]{\begin{tabular}{@{}#1@{}}#2\end{tabular}}
\crefname{hypothesis}{Hypothesis}{Hypotheses}
\title{An Example Article\thanks{Submitted to the editors DATE.
\funding{This work was funded by the Fog Research Institute under contract no.~FRI-454.}}}
\author{Dianne Doe\thanks{Imagination Corp., Chicago, IL 
  (\email{ddoe@imag.com}, \url{http://www.imag.com/\string~ddoe/}).}
\and Paul T. Frank\thanks{Department of Applied Mathematics, Fictional University, Boise, ID 
  (\email{ptfrank@fictional.edu}, \email{jesmith@fictional.edu}).}
\and Jane E. Smith\footnotemark[3]}
\begin{document}
	\bibliographystyle{unsrt}
    	\title{Phase Retrieval with Background Information: Decreased References and Efficient methods}
	
	\author{Ziyang Yuan\thanks{First author,
			Academy of Military Sciences, Beijing and Department of Mathematics, National University of Defense Technology,
			Changsha, Hunan, 410073, P.R.China.(\email{yuanziyang11@nudt.edu.cn})}
		\and HaoXing Yang\thanks{Second author,
				Department of Mathematics, National University of Defense Technology,
				Changsha, Hunan, 410073, P.R.China.  (\email{saber1222@foxmail.com})}
		\and NingYi Leng\thanks{Third author,
				Department of Mathematics, National University of Defense Technology,
				Changsha, Hunan, 410073, P.R.China.  (\email{623567098@qq.com})}
    \and Hongxia Wang\thanks{Corresponding author,
				Department of Mathematics, National University of Defense Technology,
				Changsha, Hunan, 410073, P.R.China.  Corresponding author.(\email{wanghongxia@nudt.edu.cn})}
	}
\maketitle
\begin{abstract}
 	Fourier phase retrieval(PR) is a severely ill-posed inverse problem that arises in various applications. To guarantee a unique solution and relieve the dependence on the initialization, background information can be exploited as a structural priors \cite{Yuan_2019}. However, the requirement for the background information may be challenging when moving to the high-resolution imaging. At the same time, the previously proposed projected gradient descent(PGD) method also demands much background information. 
  
  In this paper, we present an improved theoretical result about the demand for the background information, along with two Douglas Rachford(DR) based methods. Analytically, we demonstrate that the background required to ensure a unique solution can be decreased by nearly $1/2$ for the 2-D signals compared to the 1-D signals. By generalizing the results into $d$-dimension, we show that the length of the background information more than $(2^{\frac{d+1}{d}}-1)$ folds of the signal is sufficient to ensure the uniqueness. At the same time, we also analyze the stability and robustness of the model when measurements and background information are corrupted by the noise. 
  Furthermore, two methods called Background Douglas-Rachford (BDR) and Convex Background Douglas-Rachford (CBDR) are proposed. BDR which is a kind of non-convex method is proven to have the local R-linear convergence rate under mild assumptions. Instead, CBDR method uses the techniques of convexification and can be proven to own a global convergence guarantee as long as the background information is sufficient. To support this, a new property called F-RIP is established. We test the performance of the proposed methods through simulations as well as real experimental measurements, and demonstrate that they achieve a higher recovery rate with less background information compared to the PGD method.
\end{abstract}

\begin{keywords}
  Fourier phase retrieval, Background information, Uniqueness, Douglas-Rachford method
\end{keywords}
\begin{AMS}
12D05, 49N30, 49N45, 90C26
\end{AMS}

\section{Introduction}

	The Fourier PR problem, which aims to recover a signal from its intensity only Fourier spectrum, arises in a variety of applications such as X-ray crystallography, astronomy, coherent diffraction imaging(CDI) and Fourier ptychography\cite{Miao1999Extending,fienup1987phase,Zheng2013Wide}. 
	
\subsection{Problem formulation}

Mathematically, the 1-D discrete Fourier PR can be formulated as 
\begin{eqnarray}\label{model}
&\mathrm{Find}~\mathbf{x\in\mathbb{C}^n}\nonumber\\
&\mathrm{s.t.}~|\mathbf{F}_i^{\textrm{H}}\mathbf{x}|^2=b_i,~i=1,\cdots,m,
\end{eqnarray}
where $\{\mathbf{F}_i:=[1,e^{\frac{2\pi j(i-1)}{m}},\cdots,e^{\frac{2\pi j(i-1)(n-1)}{m}}]^\text{T}, i=1,\cdots,m\}$ is the Fourier basis with $j=\sqrt{-1}$ , $[\cdot]^{\text{H}}$ denotes the conjugate transpose. $\mathbf{b} = [b_1,b_2,\cdots,b_m]^\text{T}$ is the measurement.  	In short, the Fourier PR problem is to recover the signal of interest $\mathbf{x}$ from the intensity only Fourier spectrum $\mathbf{b}$.

\indent
It is evident that the solutions of \eqref{model} are not unique. For example, if $\mathbf{x}_1=\mathbf{x}_2e^{j\mathbf{\theta}}$, where $\theta\in[0,2\pi)$, then $|\mathbf{F}_i^{\mathrm{H}}\mathbf{x}_1|=|\mathbf{F}_i^{\mathrm{H}}\mathbf{x}_2|$, $i=1,\cdots,m.$
Additionally, the conjugate transpose and circular shift of $\mathbf{x}$ also yield the same measurements $\mathbf{b}$. Therefore, the uniqueness of the Fourier PR problem is considered up to these transformations which are called trivial solutions. Recent studies have focused on understanding the properties of the solutions and how to mitigate the effects caused by the non-trivialities\cite{Beinert2016Enforcing,grohs2020phase,bendory2017fourier,beinert2015ambiguities,Hofstetter1964Construction}. 

\subsection{Previous works}

\indent
 For the 1-D Fourier PR problem \eqref{model}, it has been proven that there are at most $2^{n-1}$ different solutions up to a global phase when $m\geq2n-1$\cite{Hofstetter1964Construction}. However, the results change for the problem in the multiple dimension. According to \cite{Hayes1982The}, if the dimension $d\geq2$ and $m_i\geq 2n_i-1,i=1,\cdots,d$, the uniqueness can be guaranteed except for a set of signals with zero measure, where $n_i$ and $m_i$ are the length of the signal and measurements respectively in each dimension. This sheds light on the applications of the 2-D Fourier PR problem. A series of algorithms has been subsequently been developed. Representatives are the alternating projection or reflection methods such as GS method, HIO method, RAAR method and ADMM method.\cite{Fienup1982Phase,Gerchberg1971A,wen2012alternating,luke2004relaxed,bauschke2003hybrid}. These methods update iterations by projection or reflection between different domains. However, these methods are liable to stagnate namely iterations generated by the algorithms stagnate before converging to the neighbor of the ground truth  especially when the measurements are corrupted by the noise. At the same time, although many theoretical and algorithmic results about PR for the general frames\footnote{General frames here mean the 
transforming vectors $\mathbf{F}_i,i=1,\cdots,m$ in \eqref{model1} are not orthogonal basis but a kind of frames.} have made a great progress\cite{grohs2020phase,Balan2006On,Candes2014Phase,grohs2019stable}, these works usually fail when dealing with the Fourier PR problem. They are instructive but cannot be applied to the Fourier PR problem directly. 

 Utilizing priors is one of the efficient ways to constrain the feasible solutions and relieve the illness of the Fourier PR problem. For instance, additional information about the unknown signal may be helpful to guarantee the unique solution \cite{kim1990phase,langemann2008phase,seifert2006multilevel}. Sparsity is one of the widely used priors. It has been shown in \cite{Ranieri2013Phase} that the full auto-correlation series of a 1-D $s$-sparse  real signal \footnote{$s$-sparse signal means there are up to $s$ nonzero elements in a signal.} $\mathbf{x}$ is sufficient to uniquely determine $\mathbf{x}$ as long as $s\neq 6$ and the auto-correlation sequence is collision free
\footnote{A signal $\mathbf{x}\in\mathbb{R}^n$ is said to be collision free if for any distinct $i_1$, $i_2$, $i_3$, $i_4\in$ $1,2,\cdots,n$, we have $x_{i_1}-x_{i_2}\neq x_{i_3}-x_{i_4}$.}. 
 In addition, if $m$ is a prime, \cite{Ohlsson2013On} proves that $m\geq s^{2}-s+1$ measurements are sufficient to uniquely determine the auto-correlation series of $\mathbf{x}$. For the $d$-dimension signal $\mathbb{R}^{n\times\cdots\times n}$, under the condition that the support of the signal is distributed as the Gaussian process, 
\cite{novikov2021support} proves that the signal can be recovered from the auto-correlation function of the signal with high probability as long as the sparsity is at most $\mathcal{O}(n^{d\theta})$ for any $\theta<\frac{1}{2}.$
Meanwhile, a large number of highly efficient optimization algorithms are come up to deal with the sparse Fourier PR problem \cite{Szameit2012Sparsity,Jaganathan2012Recovery,Shechtman2013GESPAR}. Interested readers can refer to \cite{Beinert2016Enforcing, beinert2015ambiguities} for a comprehensive reviews about the different priors which may enforce the Fourier PR problem to be unique. In \cite{kim1990phase,2013Vectorial}, they attempt to use interference with a known or unknown reference signal to achieve the uniqueness
of the Fourier PR problem. In \cite{2019Holographic}, they propose a general mathematical framework and recovery algorithm for the PR problem under different references.  Notice that in \cite{Learned}, it uses the deep neural network to learn a data-adaptive reference so that algorithm proposed can perform better.
 
While a series of priors have been introduced, most of them only guarantee uniqueness up to trivial solutions. This means that it is impossible to distinguish $\bm{x}$ from its trivial ambiguities, which significantly degrades the quality of the reconstruction. Therefore, more efficient priors are required to eliminate this issue. In \cite{Yuan_2019}, a model that utilizes background information is proposed as follows: 
\begin{eqnarray}\label{model1}
&\mathrm{Find}~\mathbf{z}\in\mathbb{C}^{n+k}\nonumber\\
&\mathrm{s.t.}~|\mathbf{F}_i^{\textrm{H}}\mathbf{z}|^2=b_i,~i=1,\cdots,m,\nonumber\\
&z_{n+l}=y_l,~l=1,\cdots,k,&
\end{eqnarray}
where $\mathbf{y}\in\mathbb{R}^k$ represents the background information that is known in advance.
Utilizing this information, the uniqueness of Fourier PR can be  guaranteed  if $k\geq3n-1$, $y_l\overset{i.i.d}{\sim}\mathcal{N}(0,\sigma^2)$, where $\sigma^2>0$, and $l=1,2,\cdots,k$. It should be noted that the solution here is \textbf{exactly unique}. Numerical tests have shown that the projected gradient descent(PGD) method can recover the ground truth without requiring an elaborate initialization and is robust to the noisy measurements. However, the demand for the background information about $3n$ is still large in the practical applications, especially when the dimension is high.
Recall that for the Fourier PR problem with dimension $d$, $m_i\geq2n_i-1,i=1,2\cdots,d$ is required to ensure uniqueness up to the trivial solutions\cite{Hofstetter1964Construction}. However, according to \cite{miao1998phase}, samples less than $2n_i-1,i=1,2\cdots,d$ is also sufficient, and $m_i\geq2^{\frac{1}{d}}n_i,i=1,2,\cdots,d$ can work in practice. This conjecture hints us to utilize the correlations between different dimensions to lower the demand of background information.

\subsection{Works in this paper}
This paper presents theories about the uniqueness, stability and robustness of \eqref{model1}. These results require less background information compared to \cite{Yuan_2019}. Additionally, we also proposes two new methods that guarantee local and global convergence respectively. Meanwhile, numerical simulations and optical experiments demonstrate the efficiency of the proposed methods. Specifically, works in this paper can be classified into four main aspects.
\begin{itemize}
	\item Firstly, we prove the uniqueness of the solution for \eqref{model1} in the 2-dimension situation namely ground truth $\mathbf{X}\in\mathbb{R}^{n_1\times n_2}$. Specifically, we show that $ (n_1+k_1)(n_2+k_2)\geq(2n_1-1)(3n_2-1)+n_2$ ensures the uniqueness when each element of the background $\mathbf{Y}$ satisfies the normal distribution, where $k_1$ and $k_2$ represent the length of the background in two directions, respectively. Furthermore, we generalize the results into $d$-dimension and prove that $$\prod_{i=1}^d (n_i+k_i)\geq2\prod_{i=1}^d n_i+\prod_{i=1}^d(2n_i-1),i=1,2,\cdots,d,$$ is sufficient to guarantee the uniqueness. These results also develop the theories of affine Fourier phase retrieval, which can seen in Table \ref{TB}. 
	\item Secondly, we prove the stability of  \eqref{model1} in the 2-dimension situation, specifically
	$$C_1\|\mathbf{X}_1-\mathbf{X}_2\|_{\mathrm{F}}\leq\|\mathrm{vec}(\mathbf{I}_1)-\mathrm{vec}(\mathbf{I}_2)\|_2\leq C_2\|\mathbf{X}_1-\mathbf{X}_2\|_{\mathrm{F}},$$
	where $C_1$ and $C_2$ are constants determined by $n_1$, $n_2$ and $\mathbf{Y}$. $\mathbf{I}_1$ and $\mathbf{I}_2$ are the intensity-only measurements generated by $\mathbf{X}_1$ and $\mathbf{X}_2$ with the same $\mathbf{Y}$. Moreover, the error bound of the distance between the estimation $\mathbf{X}^*$ and the ground truth $\mathbf{X}$ is also derived when the background information and intensity-only measurements are corrupted by the noise $\varepsilon_1$ and $\varepsilon_2$ respectively and $\tilde{\mathbf{Y}}$ and $\tilde{\mathbf{I}}$ are known:
$$\|\mathbf{X}^*-\mathbf{X}\|_{\mathrm{F}}\leq C_1(c_1,c_2)+C_2(c_2)\|\text{vec}(\tilde{\mathbf{Y}})\|_{1}+\sqrt{C_3(c_2)\|\text{vec}(\tilde{\mathbf{I}})\|_{1}+C_4(c_1,c_2)},$$	
where $c_1$ and $c_2$ are the noise levels of $\varepsilon_1$ and $\varepsilon_2$ satisfying $\|\bm{\varepsilon}_1\|_{\infty}\leq c_1$ and $\|\bm{\varepsilon}_2\|_{\infty}\leq c_2$, and $C_1(c_1,c_2)$, $C_2(c_2)$, $C_3(c_2)$, $C_4(c_1,c_2)$ are constants determined by $c_1$ and $c_2$.
	\item Next, to improve the performance of the PGD method when the background information is insufficient, a method called the Background Douglas Rachford(BDR) is proposed. Theoretical results demonstrate that it can escape the fixed points of the method, which may not be the ground truth. Furthermore, under mild assumptions, we prove that the BDR method has an R-linear convergence rate if the initialization is close to the ground truth
    \item At the same time, we also use convex relaxation techniques, transforming \eqref{model1} into a convex problem \eqref{conbkmodel}. Theoretical results show that \eqref{conbkmodel} and \eqref{model1} can keep the same solution with probability $1$ as if background information is sufficient. To demonstrate this, a property called Fourier Spectrum Restricted Isometry Property (F-RIP) is established. 
    \begin{eqnarray*}
(c_1(n,k,\mathbf{x})-\delta)\|\mathbf{h}\|_2^2\leq\frac{1}{k}\|\mathbf{L}_1\mathbf{h}\|_2^2\leq(c_2(n,k,\mathbf{x})+\delta)\|\mathbf{h}\|_2^2,\forall\mathbf{h}\in\mathbf{C}_2,
\end{eqnarray*}
where $c_1(n,k,\mathbf{x})$ and $c_1(n,k,\mathbf{x})$ are two constants depending on $n,k$ and ground truth, $c_1(n,k,\mathbf{x})>\delta$, $\mathbf{C}_2=\{\mathbf{h}\in\mathbb{R}^{n+k}| |\mathbf{F}^{\mathrm{H}}_i\mathbf{h}|^2\leq4,i=1,\cdots,n+k\},$ and $\mathbf{L}_1\in\mathbb{R}^{k\times(n+k)}$ is a special partial random circulant matrix where part of the variables in $\mathbf{L}_1$ are determined. 
    Additionally, we introduce the Convex Background Douglas Rachford (CBDR) method for solving the convex problem \eqref{conbkmodel}, which has a guaranteed global convergence.
	\item Last, through numerous tests including simulated Fourier measurements and optical Charge-coupled Device(CCD) measurements, we have verified the high efficiency of the newly proposed methods which require less background information to achieve a higher successfully rate compared to the PGD method.
\end{itemize}
\indent 
In this article, bold uppercase and lowercase letters represent matrices or sets and vectors respectively.  $|\cdot|$ denotes the module of a number, the determinant of matrix or the cardinality of a set. The notation $\textrm{vec}(\cdot)$ represents the vectorizing operation, while $\odot$ denotes the Hadamard product and  $\otimes$ represents the Kronecker product. The abbreviation $\textrm{i.i.d}$ stands for 'independent and identically distributed', and
$\|\cdot\|_{\mathrm{F}}$ is the Frobenius norm. The symbol $\|\cdot\|_1$ denotes the $1$-norm of a vector or matrix. $\mathbf{B}_1^{n+k}$ is the $\ell_1$ Ball in $\mathbb{R}^{n+k}$, and $\mathbf{B}_2^{n+k}$ is the $\ell_2$ Ball in $\mathbb{R}^{n+k}$. 
\section{Theoretical Analysis of the model}
For the sake of brevity, we only analyze the problem when the signal of interest is real, but the complex situation can be generalized. The theoretical analysis is specialized to the multi-dimension array, thus we don't use the same notations as in \cite{Yuan_2019}.
\subsection{Uniqueness of the solution}
The physical setup for the phase retrieval with background information, as studied in this paper, is depicted in Figure \ref{psm}. An unknown sample $\mathbf{X}\in\mathbb{R}^{n_1\times n_2}$ is centered on a reference $\mathbf{Y}\in\mathbb{R}^{(n_1+k_1)\times (n_2+k_2)}$. The region of the reference that overlaps with the sample is transparent, i.e., $\mathbf{Y}|_{\bm{\Omega}}=\mathbf{0}$ where $\bm{\Omega}$ is the support of $\mathbf{X}$ and is known in advance. The reference object can be an SLM (Spatial Light Modulator), which is set to modulate the light passing through the overlapped area. The CCD records only the phase-free intensity  $\mathbf{I}\in\mathbb{R}^{m_1\times m_2}$ which is the diffraction pattern resulting from the  combination of $\mathbf{X}$ and $\mathbf{Y}$.


If we denote the combination of $\mathbf{X}$ and $\mathbf{Y}$ as $\mathbf{Z}\in\mathbb{R}^{(n_1+k_1)\times (n_2+k_2)}$, that is to say,
$$
\mathbf{Z}_{i,h} = \begin{cases}
\mathbf{X}_{i,h}, & (i, h)\in \bm{\Omega} \\
\mathbf{Y}_{i,h}, & \mathrm{otherwise}.
\end{cases}
,$$
then the mathematical model of the PR problem with background information can be formulated as:
\begin{eqnarray}
\begin{array} { c } 
{ \textrm{ Find } \mathbf { Z } } \\ { \text { s.t. }  \left| \mathbf { F}_{(n_1+k_1)\times m_1} ^ {\text{H}} \mathbf { Z } \mathbf { F}_{m_2\times(n_2+k_2)}^{\text{H}} \right| ^ { 2 } }=\mathbf { I }  \\
{\mathbf { Z }_{i , h} = \left\{ \begin{array} { l l } { \mathbf { X }_{i , h}, } & { ( i , h ) \in\bm{\Omega }} \\ { \mathbf { Y }_{i , h} , } & { \text { otherwise } } \end{array} \right.,}
\end{array}\label{e15}
\end{eqnarray}
where 
$\mathbf{F}_{(n_1+k_1)\times m_1}^{\text{H}}\in\mathbb{C}^{m_1\times (n_1+k_1)}$ and $\mathbf{F}_{m_2\times(n_2+k_2)}^{\text{H}}\in\mathbb{C}^{ (n_2+k_2)\times m_2}$ are sub-matrices extracted  separately from Fourier matrices $\mathbf{F}_{m_1\times m_1}^{\text{H}}$ and $\mathbf{F}_{m_2\times m_2}^{\text{H}}$.

\begin{figure}
	\includegraphics[width = 1\textwidth]{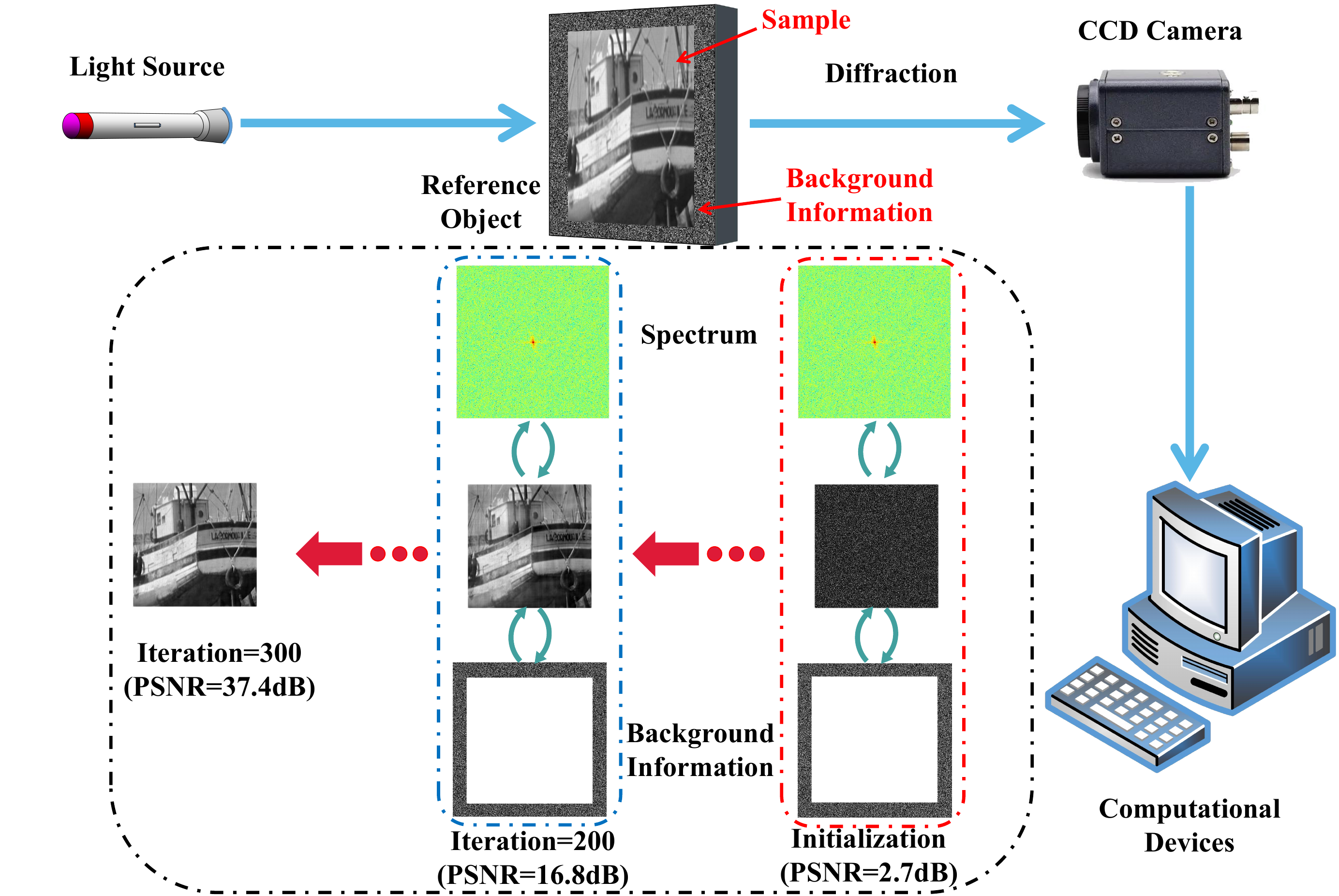}
	\centering
	\caption{Procedure of phase retrieval with background information, where PSNR stands for the Peak Signal to Noise Ratio.}
	\label{psm}
\end{figure}

The generalization of the theorem in \cite{Yuan_2019} states that if one of the following two conditions is satisfied, then (\ref{e15}) can guarantee the exact uniqueness of the solution namely excluding all trivial ambiguities.\\[5pt]
\begin{itemize}
	\item[1.]  $m_1\geq2(n_1+k_1)-2$, $k_1\geq n_1$, $k_2\geq0$ and $\mathbf{Z}_{n_1+k_1, h}\neq0,~h=0,\cdots,n_2-1$.
	\item[2.] $m_1=n_1+k_1+p$, $p\geq 0$, $k_2\geq0$,
	\begin{eqnarray*}
		k_1\geq
		\left\{
		\begin{aligned}
			&\text{max}~(3n_1-2-p,n_1),~m_1-2n_1+1\text{~is odd}&\\
			&\text{max}~(3n_1-1-p,n_1),~m_1-2n_1+1\text{~is even}&
		\end{aligned}
		\right.
	\end{eqnarray*}
	and $\mathbf{Y}_{i,h}\overset{i.i.d}{\sim}\mathcal{N}(\mu,\sigma^2)$, $\sigma^2>0$, $(i,h)\notin\bm{\Omega}$.\\[5pt]
\end{itemize}

When there is no oversampling, i.e., $m_1=n_1+k_1$, the result above shows that $k_1$ should be at least $3$ times as large as $n_1$. This requirement may exceed the resolution of CCD in practice especially when $n_1$ is large. Note that there are no extra requirements for $k_2$. Intuitively, we can constrain $k_2$ by utilizing the correlations of $\mathbf{X}$ along two directions, which can reduce the requirement for the background information. As stated in \cite{Yuan_2019}, oversampling can generally decrease the requirements for the background information. Thus, Theorem \ref{t9} only consider the worst-case scenario when $m_i=n_i+k_i,i=1,2$.
\begin{theorem}\label{t9}
	Consider model \eqref{e15} with $\mathbf{Y}_{i,h}\overset{i.i.d}{\sim}\mathcal{N}(\mu,\sigma^2)$ and $\sigma>0$. If $(n_1+k_1)(n_2+k_2)\geq(2n_1-1)(3n_2-1)+n_2$, the unique solution of (\ref{e15}) can be guaranteed. Especially, if $n=n_1=n_2$ and $k=k_1=k_2$, then $k\geq(\sqrt{6}-1)n\approx1.45n$ is sufficient.
\end{theorem}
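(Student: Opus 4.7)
The plan is to convert the intensity equality into an autocorrelation identity and then rule out nontrivial solutions via a genericity argument over $\mathbf{Y}$. Suppose two candidates $\mathbf{X}_1$ and $\mathbf{X}_2$, both supported on $\bm{\Omega}$, yield the same $\mathbf{I}$ when combined with the same $\mathbf{Y}$. Since the Fourier grid has full size $(n_1+k_1)\times(n_2+k_2)$, the equality $|\hat{\mathbf{Z}}_1|^2=|\hat{\mathbf{Z}}_2|^2$ on the grid is equivalent to the circular-autocorrelation identity $\mathbf{Z}_1\star\mathbf{Z}_1=\mathbf{Z}_2\star\mathbf{Z}_2$. Writing $\mathbf{Z}_j=\mathbf{Y}+\mathbf{X}_j$ (zero-extending $\mathbf{X}_j$), $\mathbf{D}=\mathbf{X}_1-\mathbf{X}_2$ and $\mathbf{S}=\mathbf{X}_1+\mathbf{X}_2$, a direct bilinear expansion collapses the identity to
\begin{equation*}
(\mathbf{Y}+\tfrac12\mathbf{S})\star\mathbf{D} \;+\; \mathbf{D}\star(\mathbf{Y}+\tfrac12\mathbf{S}) \;=\; \mathbf{0}
\end{equation*}
on the circular grid. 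This is a homogeneous real-linear system in the $n_1n_2$ unknown entries of $\mathbf{D}$ whose coefficient matrix $\mathbf{M}(\mathbf{S},\mathbf{Y})$ depends affinely on $\mathbf{Y}$, and my goal reduces to proving that $\mathbf{M}$ has trivial kernel with probability one.

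For that I would invoke a Zariski-type genericity argument. For any fixed $\mathbf{S}$, each $n_1n_2\times n_1n_2$ minor of $\mathbf{M}(\mathbf{S},\mathbf{Y})$ is a polynomial in the entries of $\mathbf{Y}$; provided one such minor is not identically zero, the rank-deficient set is a proper algebraic variety and is avoided with probability one by the absolutely continuous law $\mathcal{N}(\mu,\sigma^2)$. Certifying nonvanishing of a single minor reduces to exhibiting one deterministic $\mathbf{Y}^{\star}$ supported on $\bm{\Omega}^c$ that renders the corresponding block of $\mathbf{M}$ invertible. A natural recipe concentrates $\mathbf{Y}^{\star}$ in the outer strip of $\bm{\Omega}^c$, so that the circular-grid indices lying outside the $(2n_1-1)\times(2n_2-1)$ support of $\mathbf{D}\star\mathbf{S}$ yield a purely linear subsystem in $\mathbf{D}$ that decouples into shifted row-wise problems, each of which falls under the 1-D sufficient condition $k\geq 3n-1$ of \cite{Yuan_2019}.

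The inequality $(n_1+k_1)(n_2+k_2)\geq(2n_1-1)(3n_2-1)+n_2$ is then exactly the count required for this row-by-row reduction to succeed: roughly $3n_2-1$ independent equations per row times $2n_1-1$ row-offset indices, plus an additional $n_2$ boundary equations that break the reflection ambiguity produced by Hermitian symmetry of the autocorrelation. Specialising to $n_1=n_2=n$ and $k_1=k_2=k$ and solving the resulting quadratic recovers $k\geq(\sqrt{6}-1)n$. The hardest step will be engineering $\mathbf{Y}^{\star}$ and the chosen minor carefully enough that the $n_1$ row-wise 1-D uniqueness arguments patch into a single block-triangular rank computation on the full 2-D operator $\mathbf{M}$; verifying that the $+n_2$ boundary correction genuinely suffices to cancel the cross-row dependencies introduced by 2-D convolutional coupling, while simultaneously excluding the trivial conjugate/shift ambiguities (which the known placement of $\mathbf{Y}$ already rules out geometrically), is where the proof's novelty over the 1-D case really lies.
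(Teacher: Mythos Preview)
Your overall strategy (pass to the circular autocorrelation, extract a linear subsystem at shifts where the quadratic $\mathbf{X}\star\mathbf{X}$ contribution vanishes, then argue genericity in $\mathbf{Y}$) is exactly the paper's strategy. But your framing and your interpretation of the count both diverge from the paper in ways that make your plan harder than necessary.

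First, the paper does \emph{not} introduce two candidate solutions and work with $\mathbf{D}=\mathbf{X}_1-\mathbf{X}_2$ and $\mathbf{S}=\mathbf{X}_1+\mathbf{X}_2$. It writes, for each circular shift $(i,h)$, the identity $\mathbf{R}_{i,h}=\mathrm{vec}(\mathbf{Z}(-i,-h))^{\mathrm T}\mathrm{vec}(\mathbf{Z})$ and observes that whenever the support of $\mathbf{X}$ and its $(i,h)$-shift do not overlap, this is a \emph{linear} equation in $\mathrm{vec}(\mathbf{X})$ whose coefficients are entries of $\mathbf{Y}$ only. Stacking those equations gives $\mathbf{M}\,\mathrm{vec}(\mathbf{X})=\mathbf{R}_3$ with $\mathbf{M}=\mathbf{M}(\mathbf{Y})$. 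The unknown $\mathbf{S}$ never enters, so a single Zariski/measure-zero argument over $\mathbf{Y}$ suffices; there is no need to worry about uniformity over all $\mathbf{S}$ or to build a block-triangular certificate that is robust to the $\mathbf{S}$-perturbation. Your restriction to indices ``outside the $(2n_1-1)\times(2n_2-1)$ support of $\mathbf{D}\star\mathbf{S}$'' is precisely this set of non-overlapping shifts, so you are arriving at the same subsystem, just by a longer road.

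Second, the inequality in the theorem is not a row-by-row 1-D count. It is simply the requirement that the number of non-overlapping shifts, after accounting for the two-fold Hermitian symmetry of the autocorrelation, be at least $n_1n_2$: one needs
\[
(n_1+k_1)(n_2+k_2)-(2n_1-1)(2n_2-1)\;\geq\;2n_1n_2,
\]
and expanding the left side gives exactly $(n_1+k_1)(n_2+k_2)\geq(2n_1-1)(3n_2-1)+n_2$. The ``extra $n_2$'' is pure algebra from this rearrangement, not a set of boundary equations that breaks reflection ambiguity. No reduction to the 1-D $k\geq 3n-1$ result, and no engineered $\mathbf{Y}^\star$ with row-wise triangular structure, is used; the paper just selects $n_1n_2$ non-repeated rows of $\mathbf{M}(\mathbf{Y})$, notes that the corresponding minor is a nonzero multivariate polynomial in the entries of $\mathbf{Y}$, and concludes by absolute continuity of the Gaussian law.
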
 

\remark{Comparing to the traditional Fourier PR problem, although the number of measurements required to guarantee the uniqueness of \eqref{e15} is not reduced, namely, $m_i\geq\sqrt{6}n_i\geq2n_i-1,i=1,2$, the solution can be free from trivial ambiguities. This usually makes methods easier to find the ground truth. Additionally, it is worth noting that \eqref{e15} is a special case of the affine phase retrieval problem, achieved by splitting $\mathbf{Z}$ into $\mathbf{X}$ and $\mathbf{Y}$. The 1-D affine phase retrieval problem is formulated as follows:} 
\begin{eqnarray}
\begin{array} { c } 
{ \textrm{ Find } \mathbf {x} } \\ { \text { s.t. } \mathbf {b} = \left| \mathbf { A}^{\text{H}} \mathbf {x}+\mathbf{d} \right| ^ { 2 }}, 
\end{array}\label{ea15}
\end{eqnarray}
where $\mathbf{A}\in\mathbb{C}^{n\times m}$ is the measurement matrix, $\mathbf{b}\in\mathbb{R}^m$ and $\mathbf{d}\in\mathbb{C}^m$ are known in advance.
There are several studies that investigate the uniqueness of \eqref{ea15} under different conditions. Some of these results are summarized in Table \ref{TB}. It can be observed that the affine phase retrieval problem is guaranteed to be unique if $m\geq 4n$, regardless of whether the measurements are general or Fourier.

\begin{table}
\caption{Summary of uniqueness results. It should be noted that the uniqueness of the phase retrieval problem refers to uniqueness up to trivial solutions. However, for affine phase retrieval, uniqueness means that the solution is exactly unique.}
\label{TB}
\centering
\begin{tabular}{|l|l|l|l|}
\hline  & \multicolumn{1}{c|}{\tabincell{c}{Type of\\the signal}} & \multicolumn{1}{c|}{Phase retrieval}  & Affine phase retrieval  \\ \hline
\multicolumn{1}{|c|}{\multirow{3}{*}{\begin{tabular}[c]{@{}c@{}}\\\\\\~\\General \\ measurement\end{tabular}}}
 &\multicolumn{1}{c|}{$\mathbf{x}\in\mathbb{R}^n$}&\tabincell{c}{$m\geq2n-1$, a full-spark\\ random $\mathbf{A}\in\mathbb{R}^{m\times n}$\\ guarantees uniqueness\\ with high probability\cite{Balan2006On}} & \multicolumn{1}{c|}{\tabincell{c}{$m\geq2n+1$ , a generic\\ $(\mathbf{A}, \mathbf{d})\in\mathbb{R}^{m \times(n+1)}$ is\\ sufficient to guarantee\\ the uniqueness\cite{gao2018phase}}}  \\ \cline{2-4} 
&\multicolumn{1}{c|}{$\mathbf{x}\in\mathbb{C}^n$}& \multicolumn{1}{c|}{\tabincell{c}{$m\geq4n-4$, a generic\\$\mathbf{A}\in\mathbb{C}^{m\times n}$ is\\ sufficient to guarantee\\ the uniqueness\cite{Bandeira2013Saving,conca2015algebraic}}}  & \multicolumn{1}{c|}{\tabincell{c}{$m\geq4n+1$, a generic\\$(\mathbf{A}, \mathbf{d})\in\mathbb{C}^{m\times(n+1)}$ is\\ sufficient to guarantee\\ the uniqueness\cite{gao2018phase}}}  \\ \cline{2-4} 
& \tabincell{c}{$s-$sparse\\ $\mathbf{x}\in\mathbb{R}^n(\mathbb{C}^n)$}&\multicolumn{1}{c|}{\tabincell{c}{$m\geq4s-1(\text{or}~8s-2)$,\\ a generic\\ $\mathbf{A}\in\mathbb{R}^{m\times n}(\text{or}~\mathbb{C}^{m\times n})$\\ is sufficient to guaran-\\tee the uniqueness\cite{2012Sparse}}}&\multicolumn{1}{c|}{\tabincell{c}{$m\geq2s+1(\text{or}~4s+1)$,\\ a generic $(\mathbf{A}, \mathbf{d})\in$\\$ \mathbb{R}^{m \times(n+1)}(\text{or}~\mathbb{C}^{m \times(n+1)})$\\
is sufficient to guaran-\\tee the uniqueness\cite{gao2018phase}}} 
\\ \hline
\multirow{2}{*}{\tabincell{c}{\\\\Fourier\\ measurement}} &\multicolumn{1}{c|}{$\mathbf{x}\in\mathbb{C}^n$}   & \multicolumn{1}{c|}{\tabincell{c}{No uniqueness\cite{Hofstetter1964Construction}}}& \multicolumn{1}{c|}{\tabincell{c}{$m\geq4n-1$, a generic\\ $\mathbf{d}\in\mathbb{R}^m$ is sufficient to \\ guarantee unique-\\ness\cite{Yuan_2019}}}\\ \cline{2-4} 
&\multicolumn{1}{c|}{\tabincell{c}{$\mathbf{x}\in$\\$
\mathbb{C}^{n_1\times\cdots\times n_d}$\\$d\geq2$}}&\multicolumn{1}{c|}{\tabincell{c}{$m_i\geq2n_i-1,i=1,2$\\ $,\cdots,d$ is sufficient to \\guarantee the unique-\\ness for real non-redu-\\cible signals\cite{Hayes1982The}}} &\multicolumn{1}{c|}{\tabincell{c}{ $\prod_{i=1}^d m_i\geq2\prod_{i=1}^d n_i+$\\ $\prod_{i=1}^d(2n_i-1)$,a gene-\\ric $\mathbf{d}\in\mathbb{C}^{m_1\times m_2\times\cdots m_d}$,\\ is sufficient to \\ guarantee the uni-\\queness\textbf{[This paper]}}}  \\ \hline
\end{tabular}
\end{table}

Before proving Theorem \ref{t9}, some preliminaries will be introduced. Let $m_i=n_i+k_i$, $i=1,2.$ First, the 2-D circular auto-correlation series of $\mathbf{X}\in\mathbb{R}^{n_1\times n_2}$ is defined as:
\begin{eqnarray}\label{124}
\mathbf{R}_{l_1,l_2} := \mathrm{vec}(\mathbf{X}(-l_1,-l_2))^{\text{T}}\mathrm{vec}(\mathbf{X}),\tabincell{c}0\leq l_1\leq m_1-1,0\leq l_2\leq m_2-1,
\label{e14}
\end{eqnarray}
where $\mathrm{vec}(\mathbf{X})$ is the vectorization of $\mathbf{X}$ obtained by stacking each row of $\mathbf{X}$, and $\mathbf{R}_{l_1,l_2}$ is the element in the $l_1$th row and $l_2$th column of $\mathbf{R}$. Furthermore, $\mathbf{X}_{i,h}(-l_1,-l_2)=\mathbf{X}_{i-l_1,h-l_2}=\mathbf{X}_{i_1,h_1}$,  $i_1\in\{0,\cdots,n_1-1\},h_1\in\{0,\cdots,n_2-1\}$, $i-l_1\equiv i_1\bmod m_1$, $h-l_2\equiv h_1\bmod m_2$. The Wiener-Khinchin Theorem leads to the following lemma, which establishes a connection between $\mathbf{R}$ and $\mathbf{I}$.
\begin{lemma}\label{theoremequation1}
	For the problem shown in \eqref{e15}, the spectrum $\mathbf{I}$ and circular auto-correlation function $\mathbf{R}$ are related by the equation below:
	\begin{equation}\label{theoremequation}
	\mathbf{F}_{m_2\times m_2}^{\mathrm{H}}\mathbf{R}\mathbf{F}_{m_1\times m_1}^{\mathrm{H}}=\mathbf{I}.
	\end{equation}
\end{lemma}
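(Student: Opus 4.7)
The plan is a direct entry-wise computation that reduces the identity to the classical 2-D Wiener–Khinchin theorem. Since $m_i=n_i+k_i$, both Fourier factors in \eqref{e15} are full square $m_i\times m_i$ Fourier matrices, so the measurement is simply $\mathbf{I}_{p,q}=|\widehat{\mathbf{Z}}_{p,q}|^{2}$, where $\widehat{\mathbf{Z}}$ denotes the 2-D DFT of the composite array $\mathbf{Z}$. Because $\mathbf{Z}$ agrees with $\mathbf{X}$ on $\bm{\Omega}$ and with $\mathbf{Y}$ off it, the auto-correlation in \eqref{e14} (with the circular-shift convention stated there) should be read as the usual circular auto-correlation of $\mathbf{Z}$.

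First, I would expand the squared modulus coordinate-wise,
\[
\mathbf{I}_{p,q}
=\widehat{\mathbf{Z}}_{p,q}\,\overline{\widehat{\mathbf{Z}}_{p,q}}
=\sum_{i,h}\sum_{i',h'}\mathbf{Z}_{i,h}\,\mathbf{Z}_{i',h'}\,\exp\!\Bigl(-\tfrac{2\pi j(i-i')p}{m_1}-\tfrac{2\pi j(h-h')q}{m_2}\Bigr).
\]
Second, I would change summation variables to $l_1\equiv i-i'\pmod{m_1}$ and $l_2\equiv h-h'\pmod{m_2}$ and interchange the order of summation. The inner sum $\sum_{i,h}\mathbf{Z}_{i,h}\mathbf{Z}_{i-l_1,h-l_2}$, with indices read modulo $m_1,m_2$, is by \eqref{e14} precisely $\mathbf{R}_{l_1,l_2}$. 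The surviving factor $\exp(-2\pi j(l_1p/m_1+l_2q/m_2))$ factors as a product of an entry of $\mathbf{F}_{m_1\times m_1}^{\mathrm{H}}$ and an entry of $\mathbf{F}_{m_2\times m_2}^{\mathrm{H}}$; reassembling the ranges of $p,q$ back into matrix form produces the 2-D DFT of $\mathbf{R}$, which after matching the row-stacking convention of $\mathrm{vec}(\cdot)$ with the left/right placement of the two Fourier factors is exactly \eqref{theoremequation}.

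The only real obstacle is bookkeeping. One must keep track of the fact that the shift in \eqref{e14} is a \emph{circular} shift of period $m_i$, so that the change of variables above is a bijection on $\{0,\dots,m_1-1\}\times\{0,\dots,m_2-1\}$; that the combined array $\mathbf{Z}$, rather than a zero-padded $\mathbf{X}$, is the object to which Wiener–Khinchin is applied (since outside $\bm{\Omega}$ the entries are the known $\mathbf{Y}$, not zero); and that the $\mathrm{vec}$ convention is consistent with the placement of the two Fourier factors on opposite sides of $\mathbf{R}$. Once these conventions are pinned down, the lemma is an immediate restatement of the 2-D Wiener–Khinchin theorem, with no deeper obstruction.
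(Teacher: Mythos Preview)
Your proposal is correct and follows essentially the same approach as the paper: the paper does not give a detailed proof of this lemma but simply appeals to the Wiener--Khinchin theorem, and your entry-wise computation is precisely the standard derivation of that theorem in the 2-D discrete setting. Your explicit remark that the auto-correlation in \eqref{e14} must be read as that of the composite array $\mathbf{Z}$ (rather than a zero-padded $\mathbf{X}$) is exactly the convention the paper silently adopts in \eqref{e21}.
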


Now, we will prove Theorem \ref{t9}. The main idea is to find sufficient linear combinations of the elements of $\mathbf{X}$ from the circular auto-correlation matrix $\mathbf{R}$.
\begin{proof}[Proof of Theorem 2.1]
By using Lemma \ref{theoremequation1}, we can calculate $\mathbf{R}$ from the 2-D inverse Fourier transform of $\mathbf{ I }$. For the element in the $i$th row and $h$th column of $\mathbf{R}$, we have 
	\begin{eqnarray}\label{e21}
	\mathbf{R}_{i,h}=\text{vec}(\mathbf{Z}(-i,-h))^{\text{T}}\text{vec}(\mathbf{Z}).
	\end{eqnarray}
    With the aid of the known background information $\mathbf{Y}$, a series of linear combinations of $\text{vec}(\mathbf{X})$ can be obtained from (\ref{e21}). If these combinations are sufficient, then the uniqueness of solutions will be guaranteed. The remaining task is to determine how many linear combinations can be obtained from  to $\mathbf{R}$.\\
	\begin{figure}
		\includegraphics[width=2in,height=2in]{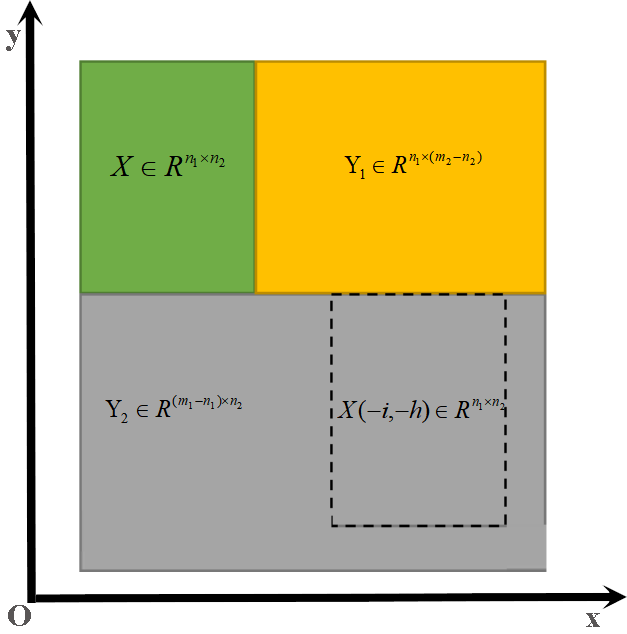}
		\centering
		\caption{Three blocks in $\mathbf{Z}$.}
		\label{f1}
	\end{figure}
	\indent
	For better analysis, we divide $\mathbf{Z}$ into three blocks: $\mathbf{X}$, $\mathbf{Y}_1$ and $\mathbf{Y}_2$ as shown in Figure \ref{f1} using three different colors. $\mathbf{Z}(-i,-h)$ is equivalent to shift $\mathbf{X}$ by $i$ units and $h$ units along two directions namely $\mathbf{X}(-i,-h)$, which is the area surrounded by the dash line. If $\mathbf{ X}$ and $\mathbf{X}(-i,-h)$ do not overlap, the linear combinations of $\text{vec}(\mathbf{X})$ can be obtained from $\mathbf{R}_{i,h}$. Next, we will analyze the numbers of the linear combinations of $\textrm{vec}(\mathbf{X})$, assuming that $\mathbf{X}$ and $\mathbf{X}(-i,-h)$ do not overlap. 
 
	\indent
    \textbf{Firstly, we assumes that $\mathbf{X}(-i,h)$ is completely within the region of $\mathbf{Y}_2$}. Given $h$, let $\mathbf{X}$ only shift along y-axis. Then we have
	\begin{eqnarray}
	\left[
	\begin{array}{c}
	\text{vec}(\mathbf{Z}(n_1,h))_{\mathbf{X}}^{\mathrm{T}}\\
	\text{vec}(\mathbf{Z}(n_1+1,h))_{\mathbf{X}}^{\mathrm{T}}\\
	\vdots\\
	\text{vec}(\mathbf{Z}(k_1,h)_{\mathbf{X}}^{\mathrm{T}}\\
	\end{array}
	\right]
	\text{vec}(\mathbf{X})+\left[
	\begin{array}{c}
	\text{vec}(\mathbf{Z}(k_1,-h)_{\mathbf{X}})^{\mathrm{T}}\\
	\text{vec}(\mathbf{Z}(k_1-1,-h))_{\mathbf{X}}^{\mathrm{T}}\\
	\vdots\\
	\text{vec}(\mathbf{Z}(n_1,-h))_{\mathbf{X}}^{\mathrm{T}}\\
	\end{array}
	\right]
	\text{vec}(\mathbf{X})=\mathbf{R}_1,\label{e22}
	\end{eqnarray}
	where $\text{vec}(\mathbf{Z}(i,h))_{\mathbf{X}}$ are those elements constrained in the region of $\mathbf{X}$. $\mathbf{R}_1$ is determined by $\mathbf{R}_{i,h}$ and $\mathbf{Y}$ ,where $i=n_1,\cdots,k_1$, and  $h=0,\cdots,n_2+k_2-1$. Sliding $\mathbf{X}$ in the region of $\mathbf{Y}_2$ yields a total of $(k_1-n_1+1)(n_2+k_2)$ different linear combinations of $\mathbf{X}$. 
 
	\indent
	\textbf{Secondly, we assumes that $\mathbf{X}(-i,-h)$ overlaps with $\mathbf{Y_1}$}. Given $i$, $\mathbf{X}$ is shifted along x-axis. Similarly, we have
	\begin{eqnarray}
	\left[
	\begin{array}{c}
	\text{vec}(\mathbf{Z}(i,n_2))_{\mathbf{X}}^{\mathrm{T}}\\
	\text{vec}(\mathbf{Z}(i,n_2+1))_{\mathbf{X}}^{\mathrm{T}}\\
	\vdots\\
	\text{vec}(\mathbf{Z}(i,k_2)_{\mathbf{X}}^{\mathrm{T}}\\
	\end{array}
	\right]
	\text{vec}(\mathbf{X})+\left[
	\begin{array}{c}
	\text{vec}(\mathbf{Z}(-i,-n_2)_{\mathbf{X}}^{\mathrm{T}}\\
	\text{vec}(\mathbf{Z}(-i,-n_2-1))_{\mathbf{X}}^{\mathrm{T}}\\
	\vdots\\
	\text{vec}(\mathbf{Z}(-i,-k_2))_{\mathbf{X}}^{\mathrm{T}}\\
	\end{array}
	\right]
	\text{vec}(\mathbf{X})=\mathbf{R}_2, \label{e24}
	\end{eqnarray}
	where $-n_1+1\leq i\leq n_1-1$, or $\mathbf{X}$ is completely in the region of  $\mathbf{Y}_2$. As a result, the total number of linear combinations of $\text{vec}(\mathbf{X})$ in the second situation is $(k_2-n_2+1)(2n_1-1)$.
	
	To conclude, by combining equations \eqref{e22} and \eqref{e24} together, we obtain the following equations: 
	\begin{eqnarray}\label{equal}
	\mathbf{M}\text{vec}(\mathbf{X})=\mathbf{R}_3.
	\end{eqnarray}
Here, $\mathbf{M}\in\mathbb{R}^{\left((k_1-n_1+1)(n_2+k_2)+(k_2-n_2+1)(2n_1-1)\right)\times n_1n_2}$ represents combinations of all the matrices on the left side of the equations. Meanwhile, $\mathbf{R}_{3}$ is the vector stacked by the vectors on the right side of the equations. 

It is worth noticing that, due to symmetry, at most half of $(k_1-n_1+1)(n_2+k_2)+(k_2-n_2+1)(2n_1-1)$ equations will repeat. Since $\mathbf{Y}_{i,h}\overset{i.i.d}{\sim}\mathcal{N}(\mu, \sigma^2)$, $\sigma>0$, if $(k_1-n_1+1)(n_2+k_2)+(k_2-n_2+1)(2n_1-1)\geq 2n_1n_2$, then $\mathbf{M}$ is full column rank with probability 1. We only need to prove that $\text{rank}(\mathbf{M})\geq n_1n_2$ to obtain this conclusion above, because $\text{rank}(\mathbf{M})\leq n_1n_2$, .
 
Due to symmetry, we can construct a sub-matrix $\mathbf{M}_1\in\mathbb{R}^{n_1n_2\times n_1n_2}$ from $\mathbf{M}$ where no two rows are the same. This is because $(k_1-n_1+1)(n_2+k_2)+(k_2-n_2+1)(2n_1-1)\geq 2n_1n_2$. By picking one row from each pair of equivalent rows, we can construct a sub-matrix $\mathbf{M}_1$ with a number of rows greater than $n_1n_2$.
	
Because the determinant of $\mathbf{M}_1$, denoted by $|\mathbf{M}_1|$, is a multivariate polynomial with variables $\mathbf{Y}_{i,h}\overset{i.i.d}{\sim}\mathcal{N}(\mu,\sigma^2)$, the set of roots of $|\mathbf{M}_1|$ is defined as $\mathbf{T}$. Without loss of generality, we assume that there are $d$ independent variables $\textrm{Y}_{1},\textrm{Y}_2,\cdots,\textrm{Y}_d$ in $\mathbf{M}_1$.

Using Lemma 4.1 in \cite{multi}, which states that the roots of the finite real multivariate polynomial in $d$ variables have a measure of zero in $\mathbb{R}^{d}$, we can conclude that $\mathbf{T}$ has a measure of $0$ in $\mathbb{R}^d$. Define the probability density function of each random variable $\textrm{Y}_i$ as $\rho(\textrm{Y}_i)$,  then the probability that 
	\begin{eqnarray*}
\mathbb{P}\left((\textrm{Y}_1,\textrm{Y}_2,\cdots,\textrm{Y}_d)\in\mathbf{T}\right)&=&\int_{(\textrm{Y}_1,\textrm{Y}_2,\cdots,\textrm{Y}_d)\in\mathbf{T}}\rho(\textrm{Y}_1)\rho(\textrm{Y}_2)\cdots\rho(\textrm{Y}_d)\textrm{d}\textrm{Y}_1\textrm{d}\textrm{Y}_2\cdots\textrm{d}\textrm{Y}_d\\
&\leq&\int_{(\textrm{Y}_1,\textrm{Y}_2,\cdots,\textrm{Y}_d)\in\mathbf{T}}d\textrm{Y}_1\textrm{d}\textrm{Y}_2\cdots\textrm{d}\textrm{Y}_d\\
&=&0.
	\end{eqnarray*}
The probability that $|\mathbf{M}_1|$ is nonzero is 1, meaning that $\text{rank}(\mathbf{M})\geq\text{rank}(\mathbf{M}_1)=n_1n_2$. Therefore, $\text{rank}(\mathbf{M})=n_1n_2$, and the signal of interest $\mathbf{X}$ can be uniquely determined.
\end{proof}

Next, we will generalize the result into higher dimensions.
\begin{lemma}\label{my}
Consider the d-dimensional problem described by, 
\begin{eqnarray}
\begin{array} { c } 
\mathrm{ Find }~\mathbf { Z } \\ { \text { s.t. } \mathbf { I } = \left| \mathcal{ F}(\mathbf { Z })  \right| ^ { 2 } } \\
{\mathbf { Z }_{i_1,i_2,\cdots,i_d} = \left\{ \begin{array} { l l } { \mathbf { X }_{i_1,i_2,\cdots,i_d}, } & { ( i_1,i_2,\cdots,i_d) \in\bm{\Omega }} \\ { \mathbf { Y }_{i_1,i_2,\cdots,i_d} , } & { \text { otherwise } } \end{array} \right.,}
\end{array}\label{arbitary}
\end{eqnarray}
where $\mathcal{F}(\cdot)$ is the $d$ dimensional discrete Fourier transform, $\mathbf {I }\in\mathbb{R}^{(n_1+k_1)\times(n_2+k_2)\cdots\times(n_d+k_d)}$, $\mathbf { Z }\in\mathbb{R}^{(n_1+k_1)\times(n_2+k_2)\cdots\times(n_d+k_d)}$, $\mathbf{Y}\in\mathbb{R}^{(n_1+k_1)\times(n_2+k_2)\times\cdots\times(n_d+ k_d)}$ is the known background with $\mathbf{Y}_{i_1,i_2,\cdots,i_d} \overset{i.i.d}{\sim}\mathcal{N}(\mu,\sigma^2)$ where $(i_1,i_2,\cdots,i_d)\notin\bm{\Omega}$, else $\mathbf{Y}_{i_1,i_2,\cdots,i_d}=0$. If $$\prod_{i=1}^d (n_i+k_i)\geq2\prod_{i=1}^d n_i+\prod_{i=1}^d(2n_i-1),$$ then the solution of \eqref{arbitary} is unique. In particular, when $n_i\equiv n$ and $k_i\equiv k,i=1,2\cdots,d$, then $k\geq(2^{\frac{d+1}{d}}-1)n$ is sufficient to guarantee the unique solution.
\end{lemma}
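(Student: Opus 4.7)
The plan is to follow the two-dimensional proof of Theorem \ref{t9} almost verbatim, redoing only the geometric counting step in $d$ variables. First, I would invoke the $d$-dimensional Wiener--Khinchin identity to recover the full cyclic auto-correlation tensor $\mathbf{R}$ from the intensity $\mathbf{I}$; its entry at a shift $\mathbf{l}\in\prod_i\{0,\ldots,m_i-1\}$ with $m_i=n_i+k_i$ is
\begin{equation*}
\mathbf{R}_{\mathbf{l}}=\sum\nolimits_{\mathbf{p}}\mathbf{Z}_{\mathbf{p}-\mathbf{l}}\,\mathbf{Z}_{\mathbf{p}}.
\end{equation*}
Splitting $\mathbf{Z}$ into $\mathbf{X}$ on $\bm{\Omega}=\prod_i\{0,\ldots,n_i-1\}$ and the known $\mathbf{Y}$ off $\bm{\Omega}$, each summand is one of $\mathbf{X}\cdot\mathbf{X}$, $\mathbf{X}\cdot\mathbf{Y}$, or $\mathbf{Y}\cdot\mathbf{Y}$. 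The quadratic part disappears if and only if $\bm{\Omega}\cap(\bm{\Omega}+\mathbf{l})=\emptyset$ in cyclic coordinates, in which case $\mathbf{R}_{\mathbf{l}}$ is a known constant plus a linear functional of $\mathrm{vec}(\mathbf{X})$ whose coefficient on $\mathbf{X}_{\mathbf{q}}$ equals $\mathbf{Y}_{\mathbf{q}-\mathbf{l}}+\mathbf{Y}_{\mathbf{q}+\mathbf{l}}$.

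The main combinatorial step is to count these non-overlap shifts. In a single direction $j$, two length-$n_j$ cyclic intervals overlap if and only if $l_j\in\{0,\ldots,n_j-1\}\cup\{m_j-n_j+1,\ldots,m_j-1\}$, a set of $2n_j-1$ values; since $\bm{\Omega}$ and $\bm{\Omega}+\mathbf{l}$ are Cartesian products, they intersect if and only if they intersect in \emph{every} coordinate, so the overlap count is $\prod_{i=1}^d(2n_i-1)$ and the non-overlap count is $\prod m_i-\prod(2n_i-1)$. The reality of $\mathbf{Z}$ gives $\mathbf{R}_{\mathbf{l}}=\mathbf{R}_{-\mathbf{l}}$, which halves this; requiring the halved count to meet the number of unknowns $\prod n_i$ yields exactly the hypothesis $\prod(n_i+k_i)\ge 2\prod n_i+\prod(2n_i-1)$. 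Taking $n_i\equiv n$, $k_i\equiv k$ and using $(2n-1)^d<2^dn^d$ then gives $2n^d+(2n-1)^d<2^{d+1}n^d$, so $n+k\ge 2^{(d+1)/d}n$ suffices, which is the advertised closed-form bound.

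With this count in hand I would assemble the linear system $\mathbf{M}\,\mathrm{vec}(\mathbf{X})=\mathbf{R}_3$ whose entries are linear in $\mathbf{Y}$, extract a square $\prod n_i\times\prod n_i$ submatrix $\mathbf{M}_1$ by picking one representative from each symmetry pair, and repeat the Gaussian-measure argument of Theorem \ref{t9}: $|\mathbf{M}_1|$ is a polynomial in the i.i.d.\ Gaussian entries of $\mathbf{Y}$, so Lemma 4.1 of \cite{multi} yields $|\mathbf{M}_1|\neq 0$ almost surely provided the polynomial is not identically zero. The main obstacle is precisely this non-triviality claim, which is less transparent in $d\ge 3$ than in the planar case; I would settle it by exhibiting one deterministic $\mathbf{Y}$ for which $\mathbf{M}_1$ is invertible, built as a tensor product of the one-dimensional full-rank patterns that underlie Theorem \ref{t9} on each axis, so that a suitable row/column reordering turns $\mathbf{M}_1$ into a matrix with Kronecker-style block structure whose invertibility is visible. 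Every other piece of the argument transports unchanged from the two-dimensional proof.
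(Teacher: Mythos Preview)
Your proposal is correct and follows essentially the same approach as the paper's own proof: recover the $d$-dimensional auto-correlation via Wiener--Khinchin, count the non-overlap shifts as $\prod m_i-\prod(2n_i-1)$, halve by the symmetry $\mathbf{R}_{\mathbf{l}}=\mathbf{R}_{-\mathbf{l}}$, and then reuse the polynomial/measure-zero argument from Theorem~\ref{t9}. If anything you are more careful than the paper, which does not explicitly verify that the determinant polynomial is not identically zero; your tensor-product witness is a reasonable way to close that gap.
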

\begin{proof}[Proof of Lemma 2.4]
   The sketch of the proof is similar to that of Theorem \ref{t9}. The only difference is in the method used to count the number of linear equations of vec$(\mathbf{X})$.
    
    For \eqref{arbitary}, $\prod_{i=1}^{d}(n_i+k_i)$ equations of vec$(\mathbf{X})$ will be obtained from $\mathbf{I}$ by using the Wiener-Khinchin theorem and the definition of the high dimensional circular auto-correlation function below 
    \begin{eqnarray*}
    \mathbf{R}_{i_1,i_2,\cdots,i_d}=\textrm{vec}(\mathbf{Z}(-i_1,-i_2,\cdots,-i_d))^{\textrm{T}}\textrm{vec}(\mathbf{Z}),0\leq i_h\leq n_h+k_h,h=1,\cdots,d.
    \end{eqnarray*}
    Then, we can find that there are $\prod_{i=1}^{m}(2n_i-1)$ nonlinear equations of vec$(\mathbf{X})$(see Fig.\ref{f1} to consider 2-D as a special case). Thus, there are $\prod_{i=1}^{d}(n_i+k_i)-\prod_{i=1}^{m}(2n_i-1)$ linear combinations of vec$(\mathbf{X})$. Considering the symmetry of the equations, there are at least $\lfloor\frac{\prod_{i=1}^{d}(n_i+k_i)-\prod_{i=1}^{m}(2n_i-1)}{2}\rfloor$ different equations with probability $1$. By utilizing the same ideas in the proof of Theorem \ref{t9}, we can conclude the proof. 
\end{proof}
\remark{In Lemma \ref{my}, we have $m_i\geq2\cdot2^{\frac{1}{d}}n_i,i=1,2\cdots,d$, which is an extra factor of 2 compared to the number of measurements required by the traditional Fourier PR to guarantee a unique solution, namely $m_i\geq2^{\frac{1}{d}}n_i,i=1,2,\cdots,d$ \cite{miao1998phase}. Although \eqref{arbitary} requires more measurements, it can guarantee an exact unique solution. This property cannot be achieved by traditional Fourier PR through sampling alone.}

\subsection{Stability of the solution}
In this subsection, we will demonstrate another advantage of model \eqref{e15}, namely its stability. This can be used to estimate the distance between the iterations and the ground truth in real experiments.
\begin{theorem}
\label{stability}
    For the model \eqref{e15}, we assume that the background information $\mathbf{Y}_{i,h}\overset{i.i.d}{\sim}\mathcal{N}(\mu,\sigma^2)$, $m_i=n_i+k_i,i=1,2$, and $m_1m_2\geq2n_1n_2+(2n_1-1)(2n_2-1)$. If $\mathbf{Y}$ is given, then for any $\mathbf{X}_1\in\mathbb{R}^{n_1\times n_2}$ and $\mathbf{X}_2\in\mathbb{R}^{n_1\times n_2}$, we have
    \begin{eqnarray*}
    \|\mathbf{X}_1-\mathbf{X}_2\|_{\mathrm{F}}\leq\frac{\delta_1\delta_2}{m_1m_2}\|\mathrm{vec}(\mathbf{I}_1-\mathbf{I}_2)\|_{1}.
    \end{eqnarray*}
    Here $\delta_1$ and $\delta_2$ are the largest singular values of $(\mathbf{M}^{\mathrm{T}}\mathbf{M})^{-1},$ and $\mathbf{M}$ respectively.
\end{theorem}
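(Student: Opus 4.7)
The plan is to linearize the problem by fixing the background $\mathbf{Y}$ and working with the difference $\mathbf{X}_1-\mathbf{X}_2$, thereby reducing the stability estimate to a linear inversion bound built around exactly the matrix $\mathbf{M}$ that appeared in the proof of Theorem~\ref{t9}. The first step is to apply the construction of \eqref{e22}--\eqref{equal} to each $\mathbf{X}_j$ separately. Because $\mathbf{Y}$ is common to both, the coefficient matrix $\mathbf{M}$ is identical and the right-hand sides $\mathbf{R}_3^{(j)}$ differ only in their purely $\mathbf{X}$-dependent parts (the $\mathbf{Y}$-only contributions cancel in the subtraction). Hence
\begin{equation*}
\mathbf{M}\,\mathrm{vec}(\mathbf{X}_1-\mathbf{X}_2)=\mathbf{R}_3^{(1)}-\mathbf{R}_3^{(2)},
\end{equation*}
and the vector on the right is just a selection of entries of $\mathbf{R}^{(1)}-\mathbf{R}^{(2)}$, where $\mathbf{R}^{(j)}$ denotes the circular auto-correlation matrix associated with $\mathbf{Z}_j$.

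The next step is to verify that the standing hypothesis $m_1 m_2\ge 2n_1 n_2+(2n_1-1)(2n_2-1)$ is algebraically equivalent to $(k_1-n_1+1)(n_2+k_2)+(k_2-n_2+1)(2n_1-1)\ge 2n_1 n_2$, which is exactly the row-count condition used in the proof of Theorem~\ref{t9}. Combined with $\mathbf{Y}_{i,h}\overset{i.i.d}{\sim}\mathcal{N}(\mu,\sigma^2)$, the multivariate-polynomial argument of Theorem~\ref{t9} then shows that $\mathbf{M}$ has full column rank $n_1 n_2$ with probability one, so $\mathbf{M}^{\mathrm{T}}\mathbf{M}$ is invertible almost surely and $\mathrm{vec}(\mathbf{X}_1-\mathbf{X}_2)=(\mathbf{M}^{\mathrm{T}}\mathbf{M})^{-1}\mathbf{M}^{\mathrm{T}}(\mathbf{R}_3^{(1)}-\mathbf{R}_3^{(2)})$. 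Taking Euclidean norms and applying submultiplicativity twice yields
\begin{equation*}
\|\mathbf{X}_1-\mathbf{X}_2\|_{\mathrm{F}}=\|\mathrm{vec}(\mathbf{X}_1-\mathbf{X}_2)\|_2\le\delta_1\delta_2\,\|\mathbf{R}_3^{(1)}-\mathbf{R}_3^{(2)}\|_2,
\end{equation*}
where $\delta_1$ and $\delta_2$ are the operator norms of $(\mathbf{M}^{\mathrm{T}}\mathbf{M})^{-1}$ and $\mathbf{M}$ exactly as stated.

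The last step is to convert the correlation difference into an intensity difference using Lemma~\ref{theoremequation1}. Inverting the Wiener--Khinchin identity gives $\mathbf{R}^{(1)}-\mathbf{R}^{(2)}=\frac{1}{m_1 m_2}\mathbf{F}_{m_2\times m_2}(\mathbf{I}_1-\mathbf{I}_2)\mathbf{F}_{m_1\times m_1}$, so every entry of $\mathbf{R}^{(1)}-\mathbf{R}^{(2)}$ is a Fourier coefficient whose modulus is bounded by $\frac{1}{m_1 m_2}\|\mathrm{vec}(\mathbf{I}_1-\mathbf{I}_2)\|_1$. Since the entries of $\mathbf{R}_3^{(1)}-\mathbf{R}_3^{(2)}$ form a subset of these, passing from the entry-wise bound to an $\ell_2$ bound on $\mathbf{R}_3^{(1)}-\mathbf{R}_3^{(2)}$ produces the advertised factor $\frac{1}{m_1 m_2}$, and combining with the previous display yields the claimed inequality.

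The main obstacle I expect is the bookkeeping in this final step: one must be careful about (i) the Fourier normalization convention used when inverting Lemma~\ref{theoremequation1}, (ii) the cardinality of the selected index set defining $\mathbf{R}_3$ when converting an entry-wise $\ell_\infty$ estimate into the $\ell_2$ estimate that matches $\delta_2$, and (iii) the inequality $\|\cdot\|_2\le\|\cdot\|_1$ on the intensity side, so that the dimensional constants collapse cleanly into the stated prefactor $\delta_1\delta_2/(m_1 m_2)$ rather than leaving residual factors of $\sqrt{n_1 n_2}$ or $\sqrt{m_1 m_2}$. By contrast, the linearization and the almost-sure invertibility of $\mathbf{M}$ follow immediately from machinery already developed in the proof of Theorem~\ref{t9}, so the stability estimate is essentially a corollary of the uniqueness result once the right norms are tracked.
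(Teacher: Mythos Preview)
Your proposal is essentially identical to the paper's proof: invoke the full-column-rank of $\mathbf{M}$ from Theorem~\ref{t9}, write $\mathrm{vec}(\mathbf{X}_1-\mathbf{X}_2)=(\mathbf{M}^{\mathrm{T}}\mathbf{M})^{-1}\mathbf{M}^{\mathrm{T}}(\mathbf{R}_3^{1}-\mathbf{R}_3^{2})$, bound the norms by $\delta_1\delta_2\|\mathbf{R}_3^{1}-\mathbf{R}_3^{2}\|_2$, and then appeal to the Wiener--Khinchin relation to absorb the factor $1/(m_1m_2)$ and pass to $\|\mathrm{vec}(\mathbf{I}_1-\mathbf{I}_2)\|_1$. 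The paper treats that last passage in a single line (``derived by the Wiener--Khinchin Theorem''), so your honest flagging of the $\ell_\infty\to\ell_2$ bookkeeping is, if anything, more scrupulous than the original; but the route and the ingredients are the same.
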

\begin{proof}
 As demonstrated in the proof of Theorem \ref{t9}, $\mathbf{M}$ is of full column rank with probability $1$, then we have
    \begin{eqnarray}
    \left\|\mathbf{X}_1-\mathbf{X}_2\right\|_{\text{F}}&=&\left\|\text{vec}(\mathbf{X}_1)-\text{vec}(\mathbf{X}_2)\right\|_2\nonumber\\
    &=&\left\|(\mathbf{M}^{\text{T}}\mathbf{M})^{-1}\mathbf{M}^{\text{T}}(\mathbf{R}_3^1-\mathbf{R}_3^2)\right\|_2\nonumber\\
    &\leq& \left\|(\mathbf{M}^{\text{T}}\mathbf{M})^{-1}\right\|_2\left\|\mathbf{M}^{\text{T}}\right\|_2\left\|\mathbf{R}_3^1-\mathbf{R}_3^2\right\|_2\nonumber\\
    &\overset{(a)}{\leq}&\frac{1}{m_1m_2} \left\|(\mathbf{M}^{\text{T}}\mathbf{M})^{-1}\right\|_2
    \left\|\mathbf{M}^{\text{T}}\right\|_2\left\|\mathrm{vec}(\mathbf{I}_1-\mathbf{I}_2)\right\|_{1}\label{cm1},
    \end{eqnarray}
    where (a) is derived by the Wiener-Khinchin Theorem.     
\end{proof}
\begin{remark}
 Suppose for all $\mathbf{X}$ are bounded by a constant $C$ namely $\|\mathbf{X}\|_{\mathrm{F}}\leq C$. We can also derive the upper bound
\begin{eqnarray*}
\left\|\mathrm{vec}(\mathbf{I}_1-\mathbf{I}_2)\right\|_{1}&\leq&m_2\left\|\mathbf{I}_1-\mathbf{I}_2\right\|_{1}\leq\sqrt{m_1}m_2\|\mathbf{I}_1-\mathbf{I}_2\|_{\mathrm{F}}\leq2Cm_1^{\frac{3}{2}}m_2^2\|\mathbf{X}_1-\mathbf{X}_2\|_{\mathrm{F}},
\end{eqnarray*}
where the second inequality is derived by the Cauchy-Schwartz inequality. The last inequality is achieved by:
\begin{eqnarray*}
\|\mathbf{I}_1-\mathbf{I}_2\|_{\mathrm{F}}^2&=&\sum_{l=1}^{m_1}\sum_{h=1}^{m_2}(\mathbf{I}_{1_{l,h}}-\mathbf{I}_{2_{l,h}})^2\\
&=&\sum_{l=1}^{m_1}\sum_{h=1}^{m_2}(\sqrt{\mathbf{I}_{1_{l,h}}}+\sqrt{\mathbf{I}_{2_{l,h}}})^2(\sqrt{\mathbf{I}_{1_{l,h}}}-\sqrt{\mathbf{I}_{2_{l,h}}})^2\\
&\leq&\sum_{l=1}^{m_1}\sum_{h=1}^{m_2}2(\mathbf{I}_{1_{l,h}}+\mathbf{I}_{2_{l,h}})(\sqrt{\mathbf{I}_{1_{l,h}}}-\sqrt{\mathbf{I}_{2_{l,h}}})^2\\
&\overset{(a)}{\leq}&4C^2m_1m_2\sum_{l=1}^{m_1}\sum_{h=1}^{m_2}(\sqrt{\mathbf{I}_{1_{l,h}}}-\sqrt{\mathbf{I}_{2_{l,h}}})^2\\
&\overset{(b)}{\leq}& 4C^2m_1^2m_2^2\sum_{l=1}^{m_1}\sum_{h=1}^{m_2}(\mathbf{X}_{1_{l,h}}-\mathbf{X}_{2_{l,h}})^2,
\end{eqnarray*}
where (a) is derived by the Parseval's theorem, and (b) is achieved by triangle inequality and Parseval's theorem.
\end{remark}
\subsection{Robustness of the solution}
Next, we analyze the situation in which the known background information $\mathbf{Y}$ has bias and the intensity-only measurements $\mathbf{I}$ are corrupted by noise. Under these conditions, we estimate the solution $\mathbf{X}^*$ by solving the least square problem:
\begin{eqnarray}
\mathbf{X}^*\in {\arg\min}_{\mathbf{X}\in\mathbb{R}^{n_1\times n_2}} \|\mathbf{M}\text{vec}(\mathbf{X})-\tilde{\mathbf{R}}_3\|_2^2,\label{LS}
\end{eqnarray}
where $\tilde{\mathbf{R}}_3$ is the corruption of $\mathbf{R}_3$.
If $\mathbf{M}$ has full column rank, then $\mathbf{X}^*$ has the closed form: $$\text{vec}(\mathbf{X}^*)=(\mathbf{M}^{\text{T}}\mathbf{M})^{-1}\mathbf{M}^{\text{T}}\tilde{\mathbf{R}}_3.$$

\noindent Next, we provide a bound for the distance between $\mathbf{X}^*$ and the ground truth $\mathbf{X}$. Although we only consider the 2-D case, it can be generalized to higher-dimensional situations accordingly.
\begin{theorem}\label{robustness}
    For model \eqref{e15}, we assume that $\mathbf{Y}_{i,h}\overset{i.i.d}{\sim}\mathcal{N}(\mu,\sigma^2)$, $m_i=n_i+k_i,i=1,2$,  and $m_1m_2\geq2n_1n_2+(2n_1-1)(2n_2-1)$. $\tilde{\mathbf{I}}$ is the intensity-only measurements that are corrupted by bounded random noise $\bm{\varepsilon}_1$ such that $\max(|\bm{\varepsilon}_1|)\leq c_1$. The known background information $\tilde{\mathbf{Y}}$ also contains bounded random bias $\bm{\varepsilon}_2$ with $\max(|\bm{\varepsilon}_2|)\leq c_2$.
With probability $1$, the estimation $\mathbf{X}^*$ calculated by \eqref{LS} and the ground truth $\mathbf{X}$ satisfy the following inequality:
    \begin{eqnarray*}
    \|\mathbf{X}^*-\mathbf{X}\|_{\mathrm{F}}&\leq&\delta_1\delta_2\left(c_1+c_2\left(2\|\mathrm{vec}(\tilde{\mathbf{Y}})\|_{1}+c_2m_1m_2\right)+\frac{\sqrt{C_2(c_2)\|\mathrm{vec}(\tilde{\mathbf{I}})\|_{1}+C_1(c_1,c_2)}}{m_1m_2}\right),
    \end{eqnarray*}
    where $\delta_1$ and $\delta_2$ are the largest singular values of $(\tilde{\mathbf{M}}^{\mathrm{T}}\tilde{\mathbf{M}})^{-1},$ and $\tilde{\mathbf{M}}$ respectively. $\tilde{\mathbf{M}}$ is the corruption of $\mathbf{M}$. $C_1$ and $C_2$ are constants depending on $c_1$ and $c_2$.
\end{theorem}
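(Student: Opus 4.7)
The strategy is to write the estimation error in closed form via the normal equations and then bound its norm by decomposing the residual into contributions coming from the three different sources of perturbation. First, I would observe that, applying the argument used in the proof of Theorem~\ref{t9} with $\mathbf{Y}$ replaced by $\tilde{\mathbf{Y}}$, the matrix $\tilde{\mathbf{M}}$ has full column rank with probability $1$, so the least--squares estimator satisfies $\mathrm{vec}(\mathbf{X}^*)=(\tilde{\mathbf{M}}^{\mathrm{T}}\tilde{\mathbf{M}})^{-1}\tilde{\mathbf{M}}^{\mathrm{T}}\tilde{\mathbf{R}}_3$. Since $\mathrm{vec}(\mathbf{X})=(\tilde{\mathbf{M}}^{\mathrm{T}}\tilde{\mathbf{M}})^{-1}\tilde{\mathbf{M}}^{\mathrm{T}}\tilde{\mathbf{M}}\mathrm{vec}(\mathbf{X})$, subtracting gives
\[
\|\mathbf{X}^*-\mathbf{X}\|_{\mathrm{F}} \le \|(\tilde{\mathbf{M}}^{\mathrm{T}}\tilde{\mathbf{M}})^{-1}\|_2\,\|\tilde{\mathbf{M}}^{\mathrm{T}}\|_2\,\|\tilde{\mathbf{R}}_3-\tilde{\mathbf{M}}\mathrm{vec}(\mathbf{X})\|_2 \le \delta_1\delta_2\,\|\tilde{\mathbf{R}}_3-\tilde{\mathbf{M}}\mathrm{vec}(\mathbf{X})\|_2,
\]
so the task reduces to bounding the residual evaluated at the true signal.

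Next, I would inspect a single entry of this residual using the same decomposition as in the proof of Theorem~\ref{t9}. Each row corresponds to some shift $(i,h)$ and rests on the identity $\mathbf{R}_{i,h}=\sum \mathbf{X}\mathbf{Y}+\sum \mathbf{Y}\mathbf{Y}$ (the $\mathbf{X}\mathbf{X}$ block vanishes by non--overlap). Substituting $\tilde{\mathbf{Y}}=\mathbf{Y}+\bm{\varepsilon}_2$ and $\tilde{\mathbf{I}}=\mathbf{I}+\bm{\varepsilon}_1$, and invoking Lemma~\ref{theoremequation1} to pass between $\tilde{\mathbf{I}}$ and $\tilde{\mathbf{R}}$, the $(i,h)$ entry of $\tilde{\mathbf{R}}_3-\tilde{\mathbf{M}}\mathrm{vec}(\mathbf{X})$ splits into three pieces: (i) an intensity--noise piece $\tilde{\mathbf{R}}_{i,h}-\mathbf{R}_{i,h}$, bounded by $c_1$ via $\|\mathcal{F}^{-1}(\bm{\varepsilon}_1)\|_{\infty}\le c_1$; (ii) a quadratic--background piece produced by replacing $\sum \mathbf{Y}\mathbf{Y}$ by $\sum \tilde{\mathbf{Y}}\tilde{\mathbf{Y}}$, equal to $\sum(2\tilde{\mathbf{Y}}\bm{\varepsilon}_2-\bm{\varepsilon}_2^2)$ and bounded by $c_2(2\|\mathrm{vec}(\tilde{\mathbf{Y}})\|_1+c_2 m_1 m_2)$; and (iii) a matrix--perturbation piece $-\sum \bm{\varepsilon}_2\mathbf{X}$ coming from $(\tilde{\mathbf{M}}-\mathbf{M})\mathrm{vec}(\mathbf{X})$.

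Piece (iii) contains the unobservable quantity $\|\mathbf{X}\|$, so I would replace it using Parseval: $\|\mathbf{X}\|_{\mathrm{F}}^2\le\|\mathbf{Z}\|_{\mathrm{F}}^2=\tfrac{1}{m_1m_2}\|\mathrm{vec}(\mathbf{I})\|_1\le\tfrac{1}{m_1m_2}(\|\mathrm{vec}(\tilde{\mathbf{I}})\|_1+c_1 m_1 m_2)$, combined with $\sum|\mathbf{X}|\le\sqrt{n_1 n_2}\,\|\mathbf{X}\|_{\mathrm{F}}$. This is the step that introduces the observable quantity $\|\mathrm{vec}(\tilde{\mathbf{I}})\|_1$ under a square root, with the constants $C_1(c_1,c_2)$ and $C_2(c_2)$ absorbing the factors $c_2^2$, $n_1n_2$, and $c_2^2 c_1 m_1 m_2$. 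Assembling the three entrywise bounds across all rows and passing from the pointwise estimate to $\|\cdot\|_2$ then yields the stated inequality, once the Wiener--Khinchin normalization factor $1/(m_1 m_2)$ is folded into the square--root term exactly as in the proof of Theorem~\ref{stability}.

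The main obstacle I expect is bookkeeping rather than anything conceptual: the residual vector has $O(m_1 m_2)$ entries that must be bounded uniformly and then aggregated, and one must be careful with the normalization coming from the discrete Fourier inversion so that the first two contributions appear without a $1/(m_1 m_2)$ prefactor while the Parseval--derived third contribution carries exactly the $1/(m_1 m_2)$ shown in the theorem. Matching the precise form of $C_1(c_1,c_2)$ and $C_2(c_2)$ also requires some care, since the $c_2$-dependent cross terms mix with the $c_1$ intensity noise through the inequality $\|\mathrm{vec}(\mathbf{I})\|_1\le\|\mathrm{vec}(\tilde{\mathbf{I}})\|_1+c_1 m_1 m_2$ used inside the square root.
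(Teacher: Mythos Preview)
Your proposal is correct and follows essentially the same route as the paper: both arguments reduce to bounding $\|\tilde{\mathbf{R}}_3-\tilde{\mathbf{M}}\,\mathrm{vec}(\mathbf{X})\|_2$, split this residual via $\mathbf{M}\,\mathrm{vec}(\mathbf{X})=\mathbf{R}_3$ into the two pieces $\tilde{\mathbf{R}}_3-\mathbf{R}_3$ and $(\mathbf{M}-\tilde{\mathbf{M}})\,\mathrm{vec}(\mathbf{X})$, then bound the first piece by separating the autocorrelation part from the background-product part and the second by Parseval on $\|\mathrm{vec}(\mathbf{X})\|_2$. The only stylistic difference is that you work entrywise before aggregating, whereas the paper bounds directly in matrix norms (e.g.\ $\|\mathbf{M}-\tilde{\mathbf{M}}\|_2\le\|\mathbf{M}-\tilde{\mathbf{M}}\|_{\mathrm F}\le 2\sqrt{n_1n_2m_1m_2}\,c_2$ and $\|\tilde{\mathbf{R}}-\mathbf{R}\|_{\mathrm F}\le\|\bm{\varepsilon}_1\|_{\mathrm F}$), which spares some of the bookkeeping you anticipated.
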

\begin{proof}
    Utilizing the same techniques as in proving Theorem \ref{t9}, we can demonstrate that $\tilde{\mathbf{M}}$ is also of full column rank with probability $1$ when being corrupted by bounded random noise $\varepsilon_1$.
    Then we can have
    \begin{eqnarray}
    \left\|\mathbf{X}^*-\mathbf{X}\right\|_{\text{F}}&=&\left\|\text{vec}(\mathbf{X}^*)-\text{vec}(\mathbf{X})\right\|_2\nonumber\\
    &=&\left\|(\tilde{\mathbf{M}}^{\text{T}}\tilde{\mathbf{M}})^{-1}\tilde{\mathbf{M}}^{\text{T}}\tilde{\mathbf{R}}_3-\text{vec}(\mathbf{X})\right\|_2\nonumber\\
    &\leq& \left\|(\tilde{\mathbf{M}}^{\text{T}}\tilde{\mathbf{M}})^{-1}\right\|_2\left\|\tilde{\mathbf{M}}^{\text{T}}\left(\tilde{\mathbf{R}}_3-\tilde{\mathbf{M}}\text{vec}(\mathbf{X})\right)\right\|_2\nonumber\\
    &\overset{(a)}{=}& \left\|(\tilde{\mathbf{M}}^{\text{T}}\tilde{\mathbf{M}})^{-1}\right\|_2\left\|\tilde{\mathbf{M}}^{\text{T}}\left(\mathbf{M}\text{vec}(\mathbf{X})-\mathbf{R}_3+\tilde{\mathbf{R}}_3-\tilde{\mathbf{M}}\text{vec}(\mathbf{X})\right) \right\|_2\nonumber\\
    &\leq& \left\|(\tilde{\mathbf{M}}^{\text{T}}\tilde{\mathbf{M}})^{-1}\right\|_2
    \left\|\tilde{\mathbf{M}}^{\text{T}}\right\|_2\left( \left\|\mathbf{M}-\tilde{\mathbf{M}}\right\|_2\left\|\text{vec}(\mathbf{X})\right\|_2+\left\|\tilde{\mathbf{R}}_3-\mathbf{R}_3\right\|_2\right)\label{cm},
    \end{eqnarray}
    where (a) is derived by the \eqref{equal}, and $\tilde{\mathbf{R}}_3$ is the corruption of $\mathbf{R}_3$.     
    Next, we will divide \eqref{cm} into three parts to analyze. 
    
    For $ \left\|\mathbf{M}-\tilde{\mathbf{M}}\right\|_2$, the relationship below can be held
    \begin{eqnarray}\label{noise1}
   \left\|\mathbf{M}-\tilde{\mathbf{M}}\right\|_2&\leq& \left\|\mathbf{M}-\tilde{\mathbf{M}}\right\|_{\text{F}} \nonumber\\
    &\leq&2\sqrt{n_1n_2m_1m_2}c_2.
    \end{eqnarray}
Specifically, define $\mathbf{D}:=\mathbf{M}-\tilde{\mathbf{M}}$, where the total number of elements of $\mathbf{D}$ is less than $n_1n_2m_1m_2$. Then based on the structure of $\mathbf{M}$ and $\tilde{\mathbf{M}}$ , we have 
$$|\mathbf{D}_{i,k}|=|\mathbf{M}_{i,k}-\tilde{\mathbf{M}}_{i,k}|\leq 2c_2,$$
where the inequality is derived by the Cauchy-Schwarz inequality.

Next, for $\left\|\text{vec}(\mathbf{X})\right\|_2$, we can obtain the following relationship using Parseval's theorem and the triangle inequality:
\begin{eqnarray}\label{noise2}
\left\|\text{vec}(\mathbf{X})\right\|_2^2&=&\frac{1}{m_1m_2}\sum_{i=1}^{m_1}\sum_{k=1}^{m_2}(\tilde{\mathbf{I}}_{i,k}-\varepsilon_{1_{i,k}})\nonumber\\
&\leq&\frac{1}{m_1m_2}\left\|\text{vec}(\tilde{\mathbf{I}})\right\|_{1}+c_1.
\end{eqnarray}
Finally, for $\left\|\tilde{\mathbf{R}}_3-\mathbf{R}_3\right\|_2,$ it can be divided into two parts based on the structures of $\mathbf{R}_3$: auto-correlation $\mathbf{R}$ and multiplications of background information. With a similar idea, by using the Parseval's theorem, Wiener-Khinchin theorem and Cauchy-Schwarz inequality, relationship can be held below
\begin{eqnarray}\label{noise3}
\left\|\tilde{\mathbf{R}}_3-\mathbf{R}_3\right\|_2&\leq& m_1m_2\left(c_1+2c_2\|\text{vec}(\tilde{\mathbf{Y}})\|_{1}+c_2^2m_1m_2\right).
\end{eqnarray}
Accordingly, define $\mathbf{D}:=\tilde{\mathbf{R}}_3-\mathbf{R}_3.$ For each element $D_{i}$, it can be divided into two parts namely the difference of the auto-correlation $\tilde{\mathbf{R}}_{i,k}-\mathbf{R}_{i,k}$ and the difference of the multiplications of background information $\sum_{i_1,k_1,i_2,k_2}(\tilde{\mathbf{Y}}_{i_1,k_1}\tilde{\mathbf{Y}}_{i_2,k_2}-\mathbf{Y}_{i_1,k_1}\mathbf{Y}_{i_2,k_2})$. 

Because 
\begin{eqnarray*}
&&\left|\sum_{i_1,k_1,i_2,k_2}(\tilde{\mathbf{Y}}_{i_1,k_1}\tilde{\mathbf{Y}}_{i_2,k_2}-\mathbf{Y}_{i_1,k_1}\mathbf{Y}_{i_2,k_2})\right|\\
&\leq&\sum_{i_1,k_1,i_2,k_2}\left|\tilde{\mathbf{Y}}_{i_1,k_1}\tilde{\mathbf{Y}}_{i_2,k_2}-(\tilde{\mathbf{Y}}_{i_1,k_1}-\bm{\varepsilon}_{2_{i_1,k_1}})(\tilde{\mathbf{Y}}_{i_2,k_2}-\bm{\varepsilon}_{2_{i_2,k_2}})\right|\\
&\leq&\sum_{i_1,k_1,i_2,k_2}\left|\tilde{\mathbf{Y}}_{i_1,k_1}\bm{\varepsilon}_{2_{i_2,k_2}}+\tilde{\mathbf{Y}}_{i_2,k_2}\bm{\varepsilon}_{2_{i_1,k_1}}-\bm{\varepsilon}_{2_{i_1,k_1}}\bm{\varepsilon}_{2_{i_2,k_2}}\right|\\
&\leq&\sum_{i_1,k_1,i_2,k_2}c_2\left|\tilde{\mathbf{Y}}_{i_1,k_1}\right|+c_2\left|\tilde{\mathbf{Y}}_{i_2,k_2}\right|+c_2^2\\
&\leq&2c_2\|\text{vec}(\tilde{\mathbf{Y}})\|_{1}+c_2^2m_1m_2.
\end{eqnarray*}
Then, we have
\begin{eqnarray*}
\|\mathbf{D}\|_2&\leq&\|\text{vec}(\tilde{\mathbf{R}}-\mathbf{R})\|_{1}+m_1m_2
\left(2c_2\|\text{vec}(\tilde{\mathbf{Y}})\|_{1}+c_2^2m_1m_2\right)\\
&\leq&\sqrt{m_1m_2\|\tilde{\mathbf{R}}-\mathbf{R}\|_{\text{F}}^2}+m_1m_2
\left(2c_2\|\text{vec}(\tilde{\mathbf{Y}})\|_{1}+c_2^2m_1m_2\right)\\
&\leq&\|\tilde{\mathbf{I}}-\mathbf{I}\|_{\text{F}}+m_1m_2
\left(2c_2\|\text{vec}(\tilde{\mathbf{Y}})\|_{1}+c_2^2m_1m_2\right)\\
&\leq&\|\bm{\varepsilon}_1\|_{\text{F}}+m_1m_2
\left(2c_2\|\text{vec}(\tilde{\mathbf{Y}})\|_{1}+c_2^2m_1m_2\right).
\end{eqnarray*}
\noindent Thus, combining \eqref{noise1},\eqref{noise2}, and \eqref{noise3}, we can finish the conclusion.
\end{proof}
\remark{In Theorem \ref{robustness}, the right hand side of the inequality is composed of three parts. The first part is the influence caused by measurement noise $c_1$. The second part is the influence caused by the bias of the background information $C_2(c_2)\|\mathrm{vec}(\tilde{\mathbf{I}})\|_{1}$, and $c_2\left(2\|\mathrm{vec}(\tilde{\mathbf{Y}}\|_{1})+c_2m_1m_2\right)$. The third part is the coupling between the measurement error and bias of the background information namely $C_1(c_1,c_2)$. When $c_1=0$, that is to say there is no measurement noise, then we have $$ \|\mathbf{X}^*-\mathbf{X}\|_{\mathrm{F}}\leq   \delta_1\delta_2\left(c_2\left(2\|\mathrm{vec}(\tilde{\mathbf{Y}})\|_{1}+c_2m_1m_2\right)+\frac{\sqrt{C_2(c_2)\|\mathrm{vec}(\tilde{\mathbf{I}})\|_{1}}}{m_1m_2}\right).$$
When $c_2=0$ namely there is no bias of the background information, we can achieve the relationship below
$$ \|\mathbf{X}^*-\mathbf{X}\|_{\mathrm{F}}\leq c_1\delta_1\delta_2.$$
}
\section{Methods to find the ground truth}
\indent
In this section, we assume that the length of the background information is sufficient to ensure that (\ref{bkmodel}) has a unique solution. We will introduce two methods based on the Douglas-Rachford method to find the ground truth. These methods include the convex and non-convex types. For simplicity, we will only discuss the 1-D formulation of the problem, but the situation in higher dimensions can be generalized.
\subsection{Background Douglas Rachford method}
Denoting $\mathbf{z}=[z_1,\cdots,z_{n+k}]^{\text{T}}$, the 1-D Fourier PR model with background information can be formulated as below 
\begin{eqnarray}
\begin{array} { c } { \text { Find } \mathbf { z } } \\[5pt] { \text { s.t. } \left| \mathbf { F } _ { i } ^{\mathrm{H}} \mathbf { z } \right| ^ { 2 }=b_i , i = 1 , \cdots , m } \\[5pt] { z_{n + l} = y_l , l = 1,\cdots,k.} \end{array}\label{bkmodel}
\end{eqnarray} 
Define set $\mathbf{A}$ and $\mathbf{B}$ as below:
\begin{eqnarray*}
	&\mathbf{A}&:=\left\{\mathbf{z}\in\mathbb{R}^{n+k}: \vert\mathbf{F}^{\mathrm{H}}\mathbf{ z}\vert=\mathbf{ b}^{\frac{1}{2}} \right\}\\
	&\mathbf{B}&:=\left\{\mathbf{z}\in\mathbb{R}^{n+k}: z_{n+i}= y_i,~i=1,\cdots, k\right\}.
\end{eqnarray*}
$\mathbf{ A}$ and $\mathbf{ B }$ are closed sets, however $\mathbf{A}$ is not convex.  $\mathbf{ A}\cap\mathbf{B}$ is actually the set of solution in (\ref{bkmodel}). 
 The projection of $\mathbf{z}\in\mathbb{R}^{n+k}$ onto a closed set $\mathbf{C}$ is defined as $$
	\mathbb{ P }_{\mathbf{ C}}(\mathbf{z}) := \arg\min _ { \mathbf{x} \in \mathbf{C} } \left\| \mathbf{x} - \mathbf{z} \right\|_2^2.$$
Notice that,  when $z_i=0,~i\in\{1,\cdots,n+k\}$, $\mathbb{P}_{\mathbf{C}}(\mathbf{z})$ may be not unique.
In the previous work, \cite{Yuan_2019} reformulated (\ref{bkmodel}) as a constrained optimization problem below 
\begin{eqnarray}  { \underset { \mathbf { z } \in \mathbb { R } ^ { n + k } } { \operatorname { minimize } } }~& f ( \mathbf { z } )& = \frac { 1 } { 2 ( n + k ) } \sum _ { i = 1 } ^ { n + k } \left( \left| \mathbf { F } _ { i } ^{\mathrm{H}} \mathbf { z } \right| - b _ { i } ^ { \frac { 1 } { 2 } } \right) ^ { 2 } \nonumber  \\ 
\text{s.t.}~&\mathbf { z } \in \mathbf{B} &\label{e27}. \end{eqnarray}
Then, the projected gradient descent(PGD) method is utilized to search for the ground truth. The main procedure in the $p$th step is:
$$\left\{	
   \begin{aligned}
	&\tilde{\mathbf{z}} ^ { p } = \mathbf { z } ^ { p-1 } - \lambda \partial f \left( \mathbf { z } ^ { p-1 } \right)&\\
	&\mathbf { z } ^{ p } = \mathbb { P } _ { \mathbf{B}} \left( \tilde{\mathbf { z }} ^ { p  }  \right)&
	\end{aligned}
	\right.,$$
where $\partial f(\mathbf{z}^{p-1})=\mathbf { z }^{p-1} - \mathbb{P}_{\mathbf{A}}(\mathbf{z}^{p-1})$, and $\lambda$ is the learning rate. When $\lambda$ is set to $1$, it ensures that $f(\mathbf{z}^{p})\leq f(\mathbf{z}^{p-1})$. In this case, the PGD method is equivalent to the alternating minimization method\cite{Fienup1982Phase}. Although \cite{Yuan_2019} provides numerous tests to demonstrate the advantages of the PGD method. However, it is prone to stagnation and demands a large number of background information, nearly $k\geq 6n$, to achieve a higher recovery rate. 

In many literatures, the traditional Fourier PR problem has been solved using the Douglas Rachford(DR) method \cite{Fienup1982Phase}, which has demonstrated better performance than the PGD method. In this work, we apply a reflection-based method called background Douglas-Rachford(BDR) method to solve \eqref{bkmodel}. See Fig.\ref{ff1} for an example. The length of the background information is 10. Both methods have a common initialization of $(-0.82,1.26,0.56)$. We observe that the PGD method easily stagnates, but the BDR method can overcome such traps and converge to the ground truth. The details of the BDR method are shown in Algorithm \ref{A1}.
	\begin{figure}
	\includegraphics[width=1\textwidth]{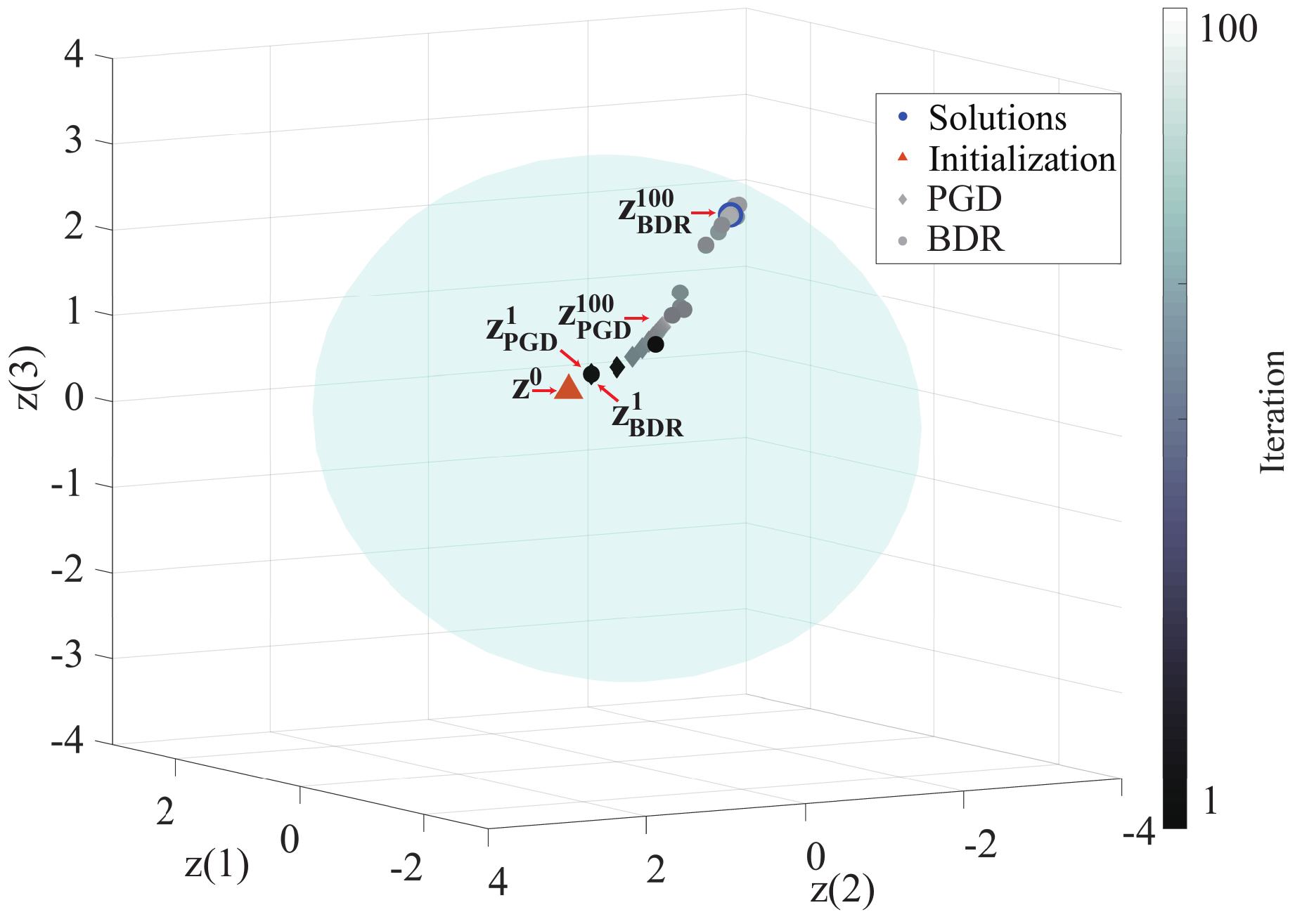}
	\centering
	\caption{The comparison between Projected Gradient Descent(PGD) method and the background Douglas-Rachford (BDR) method, both of which aim to find the ground truth value of $(-1.52,-0.24,2.61)$. The background information has a length of 10, and both methods are initialized with the same values of $(-0.82,1.26,0.56)$. While PGD stagnates easily, the BDR method can escape the trap and converge to the ground truth.}
	\label{ff1}
\end{figure}
\begin{algorithm}[!htb] 
	\renewcommand{\algorithmicrequire}{\textbf{Input:}}
	\renewcommand\algorithmicensure {\textbf{Output:} }
	\caption{ The Background Douglas Rachford method} 
	\label{A1} 
	\begin{algorithmic}[1] 
		\REQUIRE $\{\mathbf{b},\mathbf{y},\varepsilon,\mathrm{T}\}$ ~~\\ 
		$\mathbf{b}$: the intensity only Fourier measurement.\\
		$\mathbf{y}$: the background information.\\
		$\varepsilon$:~~the allowed error bound.\\
		$\text{T}$:~the maximum allowed iteration.\\
		\ENSURE ~~\\ 
		$\overline{\mathbf{x}}$: an estimation of the real signal $\mathbf{x}$.\\
		\vskip 4mm
		\hrule
		\vskip 2mm
	\end{algorithmic}
	$\mathbf{Initialization:}$
	\begin{algorithmic}[1]
		\STATE $\mathbf{z}^0=\mathbb{P}_{\mathbf{ B}}(\frac{1}{n+k}\mathbf{F}\cdot\mathbf{b}^{\frac{1}{2}})$,  $\mathbf{B}=\{\mathbf{z}:z_{n+i}=y_i,~i=1,\cdots,k\}$.\\
		\STATE $p=1$.
	\end{algorithmic}
	$\mathbf{General~step}$
	\begin{algorithmic}[1]
		\STATE  $\tilde{\mathbf{z}}^{p-1}=\mathbb{P}_{\mathbf{ A}}(\mathbf{z}^{p-1})$, where $\mathbf{A}=\{\mathbf{z}:|\mathbf{F}^{\mathrm{H}}\mathbf{ z}|=\mathbf{ b}^{\frac{1}{2}} \}$\\
		\STATE $z^{p}_t=\tilde{z}^{p-1}_t, t=1,\cdots,n$,\\ $z^{p}_{n+t}=z^{p-1}_{n+t}-\tilde{z}^{p-1}_{n+t}$+$y_t,t=1,\cdots,k$.\\
		\IF{$\big|\big|\mathbf{z}^{p}-\mathbf{ z}^{p-1}\big|\big|_2\leq\varepsilon$ \textbf{or} $p=\textrm{T}+1$}
		\STATE $\overline{\mathbf{z}}=\mathbf{z}^{p}$.\\
				\STATE $\overline{\mathbf{x}}=[\overline{z}_1,\overline{z}_2,\cdots,\overline{z}_n]^{\mathrm{T}}$. 
		\STATE Break.
		\ENDIF
		\STATE $p = p+1$
	\end{algorithmic}
\end{algorithm}
\indent

As discussed above, the most significant difference between the PGD and BDR methods is the general step in Algorithm \ref{A1}. For the PGD method, this involves creating a sequence $\{\mathbf{z}^p\}$ by applying alternative projection to approximate $\mathbf{A}\cap\mathbf{B}$:
\begin{eqnarray*}
\mathbf{z}^p = \mathbb{P}_{\mathbf{B}}\mathbb{P}_{\mathbf{A}}(\mathbf{z}^{p-1}).
\end{eqnarray*}
Because $\mathbb{P}_{\mathbf{B}}\mathbb{P}_{\mathbf{ A}}$ is not a strict contraction operator, the PGD method is liable to converge to a fixed point which is usually not the ground truth. For the BDR method, it defines an operator $\mathbb{T}(\cdot)$, and
\begin{eqnarray}\label{e28}
\mathbf{z}^p =\mathbb{T}(\mathbf{z}^{p-1})=\frac{1}{2} (\mathbb{R}_{\mathbf{B}}\mathbb{R}_{\mathbf{A}}+\mathbb{I})(\mathbf{z}^{p-1}),
\end{eqnarray}
where $\mathbb{R}_{\mathbf{A}}(\cdot)$ and $\mathbb{R}_{\mathbf{B}}(\cdot)$ are the reflectors namely
\begin{eqnarray*}
	\mathbb{R}_{\mathbf{A}}(\cdot)=2\mathbb{P}_{\mathbf{A}}(\cdot)-\mathbb{I}(\cdot),~~~~~\mathbb{R}_{\mathbf{B}}(\cdot)=2\mathbb{P}_{\mathbf{B}}(\cdot)-\mathbb{I}(\cdot),~~~\textrm{where}~\mathbb{I}(\cdot)~\text{is the unit projector}.
\end{eqnarray*}
By reformulating (\ref{e28}) as below
\begin{eqnarray*}
	\mathbf{z}^p &=& \frac{1}{2}\big((2\mathbb{P}_{\mathbf{B}}-\mathbb{I})(2\mathbb{P}_{\mathbf{A}}-\mathbb{I})+\mathbb{I}\big)(\mathbf{z}^{p-1})\\
	&=&\big(\mathbb{ P }_{\mathbf{ B}}(2\mathbb{ P }_{\mathbf{A}}-\mathbb{I})+(\mathbb{ I}-\mathbb{P}_{\mathbf{A}})\big)(\mathbf{z}^{p-1})\\
	&\overset{(a)}{=}&\bm{1}_{n}\odot\big(2\mathbb{ P }_{\mathbf{A}}(\mathbf{z}^{p-1})-\mathbf{z}^{p-1}\big)+\big(\mathbf{z}^{p-1}-\mathbb{P}_{\mathbf{A}}(\mathbf{z}^{p-1})\big)+\tilde{\mathbf{y}}\\
	&=&\bm{1}_{n}\odot\mathbb{ P }_{\mathbf{A}}(\mathbf{z}^{p-1})+(\bm{1}-\bm{1}_{n})\odot\big(\mathbf{z}^{p-1}-\mathbb{ P }_{\mathbf{A}}(\mathbf{ z}^{p-1})+\tilde{\mathbf{y}}\big),
\end{eqnarray*} 
where $(a)$ is deduced from the definition of $\mathbb{P}_{\mathbf{B}}(\cdot)$, where $\bm{1}_{n}\in\mathbb{R}^{n+k}$ is a vector whose first $n$ elements are $1$, and the rest are $0$, and $\tilde{ \mathbf { y} }=[\bm{0};\mathbf{y}]\in\mathbb{R}^{n+k}$. This allows us to obtain the main step in Algorithm \ref{A1}.

One advantage of Algorithm \ref{A1} is that if we can find the fixed points, we can derive the ground truth from them. This is proven by the following theorem.

\begin{theorem}\label{t12}
If Algorithm \ref{A1} converges to $\overline{\mathbf{z}}$, which is one of the fixed points of $\mathbb{T}$, then $\mathbb{P}_{\mathbf{B}}\overline{\mathbf{z}}$ is a solution to (\ref{bkmodel}).
\end{theorem}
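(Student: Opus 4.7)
The plan is to unpack the fixed-point condition $\mathbb{T}(\overline{\mathbf{z}}) = \overline{\mathbf{z}}$ directly and then use the affine structure of $\mathbf{B}$ (only the last $k$ coordinates are constrained) to extract both membership in $\mathbf{A}\cap\mathbf{B}$ and the identification with $\mathbb{P}_{\mathbf{B}}\overline{\mathbf{z}}$.

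First, I would rewrite $\mathbb{T}(\overline{\mathbf{z}}) = \overline{\mathbf{z}}$ as $\mathbb{R}_{\mathbf{B}}\mathbb{R}_{\mathbf{A}}(\overline{\mathbf{z}}) = \overline{\mathbf{z}}$. Introduce the shorthand $\mathbf{u} := \mathbb{P}_{\mathbf{A}}(\overline{\mathbf{z}})$, so that $\mathbb{R}_{\mathbf{A}}(\overline{\mathbf{z}}) = 2\mathbf{u}-\overline{\mathbf{z}}$. Expanding $\mathbb{R}_{\mathbf{B}} = 2\mathbb{P}_{\mathbf{B}} - \mathbb{I}$ in the fixed-point equation and simplifying yields the key identity
\begin{equation*}
\mathbb{P}_{\mathbf{B}}(2\mathbf{u}-\overline{\mathbf{z}}) = \mathbf{u}.
\end{equation*}
By definition of $\mathbf{A}$ we already have $\mathbf{u}\in\mathbf{A}$, so the remaining task is to verify $\mathbf{u}\in\mathbf{B}$ and that $\mathbb{P}_{\mathbf{B}}\overline{\mathbf{z}} = \mathbf{u}$.

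Next, I would exploit the explicit form of $\mathbb{P}_{\mathbf{B}}$: it leaves the first $n$ coordinates untouched and overwrites the last $k$ coordinates with $y_1,\ldots,y_k$. Reading the identity $\mathbb{P}_{\mathbf{B}}(2\mathbf{u}-\overline{\mathbf{z}}) = \mathbf{u}$ coordinate-wise therefore gives $2u_i - \overline{z}_i = u_i$ for $i = 1,\ldots,n$ and $u_{n+i} = y_i$ for $i=1,\ldots,k$. The first block of equations forces $u_i = \overline{z}_i$ on the signal coordinates, and the second block forces $\mathbf{u}\in\mathbf{B}$. Combined with $\mathbf{u}\in\mathbf{A}$, we conclude that $\mathbf{u}$ solves (\ref{bkmodel}).

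Finally, I would compute $\mathbb{P}_{\mathbf{B}}\overline{\mathbf{z}}$ explicitly: its first $n$ coordinates are $\overline{z}_i = u_i$ (by the identity just derived), and its last $k$ coordinates are $y_i = u_{n+i}$ (by the constraint defining $\mathbf{B}$). Hence $\mathbb{P}_{\mathbf{B}}\overline{\mathbf{z}} = \mathbf{u}\in\mathbf{A}\cap\mathbf{B}$, which is exactly the claim. The only subtle point, and the main obstacle if any, is the non-uniqueness of $\mathbb{P}_{\mathbf{A}}$ noted in the excerpt (projection onto the non-convex phase set is not single-valued when a Fourier coefficient vanishes); one should read $\mathbf{u}$ as the particular selection produced by the algorithm and verify that the identity $\mathbb{R}_{\mathbf{B}}\mathbb{R}_{\mathbf{A}}(\overline{\mathbf{z}}) = \overline{\mathbf{z}}$ holds for this selection, which is guaranteed since $\overline{\mathbf{z}}$ is assumed to be the limit of the actual BDR iterates.
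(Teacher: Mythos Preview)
Your proof is correct and follows essentially the same route as the paper: both arguments read the fixed-point condition coordinate-wise to conclude that $\mathbf{u}:=\mathbb{P}_{\mathbf{A}}(\overline{\mathbf{z}})$ (the paper's $\tilde{\overline{\mathbf{z}}}$) agrees with $\overline{\mathbf{z}}$ on the first $n$ entries and equals $\mathbf{y}$ on the last $k$, hence $\mathbf{u}=\mathbb{P}_{\mathbf{B}}\overline{\mathbf{z}}\in\mathbf{A}\cap\mathbf{B}$. The only cosmetic difference is that you start from the operator identity $\mathbb{P}_{\mathbf{B}}(2\mathbf{u}-\overline{\mathbf{z}})=\mathbf{u}$ while the paper invokes the explicit update rules of Algorithm~\ref{A1}; these are equivalent.
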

\begin{proof}[Proof of Theorem 3.1]
	If $\overline{\mathbf{z}}$ is one of the stationary points namely $\overline{\mathbf{z}}=\mathbb{T}(\overline{\mathbf{z}})$. Recalling Algorithm \ref{A1}, we have
	\begin{eqnarray*}
		\tilde{ \overline{z}}_t &=&\mathbb{T}(\overline{\mathbf{z}})_t=\overline{z}_t,~1\leq t \leq n\\
		\tilde{ \overline{z} }_t&=&y _{t-n},~n+1\leq t\leq n+k.	
	\end{eqnarray*}
	Because $|\mathbf{F}^{\mathrm{H}}\tilde{ \overline{\mathbf{z}}}|^2=\mathbf{b}$, $\tilde{ \overline{\mathbf{z}}}=\mathbb{P}_{\mathbf{B}}(\overline{\mathbf{z}})$ is one of the solutions of (\ref{bkmodel}). In particular, if $\mathbf{y}$ is sufficient, Theorem \ref{t9} guarantees that the solution is uniquely determined.
\end{proof}

Next, we will prove the local R-linear convergence of Algorithm \ref{A1}. Before proceeding, we will make some assumptions.
\begin{assumption}\label{assumption}~
	\begin{enumerate}
		\item $m\geq 2(n+k)-1$ namely we have oversampled the Fourier amplitude $\mathbf{b}$ with over $2(n+k)-1$ measurements.
		\item $\mathbf{x}\neq\bm{0}$ and the z-transform of  the ground truth $\mathbf{z}=[\mathbf{x};\mathbf{y}]$ has no zeros in the reciprocal pairs.
	\end{enumerate}
\end{assumption}
The z-transform of a vector $\mathbf{x}$ is denoted by $X(z)$. There are no zeros in the reciprocal pairs means that if $X(z_1)=0$, then  $X(1/\overline{z_1})\neq0.$  By changing the values of background information,  the second assumption can be satisfied.

\begin{theorem}\label{mt}
For model (\ref{bkmodel}), define
\begin{eqnarray*}
	&\mathbf{A}:=\{\mathbf{z}\in\mathbb{R}^{n+k}:|\mathbf{F}^{\mathrm{H}}\mathbf{ z}|=\mathbf{ b}^{\frac{1}{2}},\mathbf{b}\in\mathbb{R}^m \}&\\[5pt]
	&\mathbf{B}:=\{\mathbf{z}\in\mathbb{R}^{n+k}:z_{n+i}= y_i,~i=1,\cdots, k\}.&
\end{eqnarray*}
$\mathbf{F}\in\mathbb{C}^{n\times m}$ is an oversampling Fourier matrix, where $m\geq2(n+k)-1$. Suppose there exists $\delta>0$ such that the initialization $\mathbf{z}^0$ is sufficient close to the ground truth $\mathbf{z}\in\mathbf{ A}\cap\mathbf{B}$, specially $$\left\|\mathbf{z}^{0}-\mathbf{z}\right\|_2 \leq \frac{\delta}{2},$$ 
then the sequence $\{\mathbf{z}^p\}$ generated by Algorithm \ref{A1} converges to $\mathbf{z}$ with R-linear rate, i.e., there exists a constant $\gamma\in[0,1)$ such that
		 $$\left\|\mathbf{z}^{p}-\mathbf{z}\right\|_2 \leq\frac{ \left\|\mathbf{z}^{0}-\mathbf{z}\right\|_2(1+\gamma)}{1-\gamma}\gamma^{p}.$$
\end{theorem}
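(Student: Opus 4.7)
The plan is to treat the ground truth $\mathbf{z}$ as a fixed point of the Douglas-Rachford operator $\mathbb{T} = \frac{1}{2}(\mathbb{R}_{\mathbf{B}}\mathbb{R}_{\mathbf{A}} + \mathbb{I})$ and show that $\mathbb{T}$ is a local contraction there. Since $\mathbf{z} \in \mathbf{A} \cap \mathbf{B}$, both $\mathbb{P}_{\mathbf{A}}(\mathbf{z}) = \mathbf{z}$ and $\mathbb{P}_{\mathbf{B}}(\mathbf{z}) = \mathbf{z}$ hold, so $\mathbb{T}(\mathbf{z}) = \mathbf{z}$. Once I produce a neighborhood $U$ of $\mathbf{z}$ and a constant $\gamma \in [0, 1)$ with $\|\mathbb{T}(\mathbf{w}) - \mathbf{z}\|_2 \leq \gamma\|\mathbf{w} - \mathbf{z}\|_2$ for all $\mathbf{w} \in U$, the stated bound follows by the standard Banach fixed-point estimate: induction gives $\|\mathbf{z}^{p+1} - \mathbf{z}^p\|_2 \leq \gamma\|\mathbf{z}^p - \mathbf{z}^{p-1}\|_2$, a geometric sum yields $\|\mathbf{z}^p - \mathbf{z}\|_2 \leq \frac{\gamma^p}{1-\gamma}\|\mathbf{z}^1 - \mathbf{z}^0\|_2$, and $\|\mathbf{z}^1 - \mathbf{z}^0\|_2 \leq (1 + \gamma)\|\mathbf{z}^0 - \mathbf{z}\|_2$ by the triangle inequality, delivering exactly the prefactor $(1 + \gamma)/(1 - \gamma)$ in the claim; the hypothesis $\|\mathbf{z}^0 - \mathbf{z}\|_2 \leq \delta/2$ is used only to keep the iterates inside $U$.

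The first technical step is to verify smoothness of both projections at $\mathbf{z}$. As $\mathbf{B}$ is an affine subspace, $\mathbb{P}_{\mathbf{B}}$ is globally affine. For $\mathbf{A}$, Assumption \ref{assumption} item 2 precludes a $z$-transform zero on the unit circle (such a zero would pair with itself), so $|\mathbf{F}_i^{\mathrm{H}}\mathbf{z}| > 0$ for every $i$. Together with the oversampling $m \geq 2(n + k) - 1$, this guarantees that the defining equations $|\mathbf{F}_i^{\mathrm{H}}\cdot|^2 = b_i$ have Jacobian of maximal rank at $\mathbf{z}$, so $\mathbf{A}$ is locally a smooth embedded submanifold and $\mathbb{P}_{\mathbf{A}}$ is single-valued and $C^\infty$ nearby, with derivative $Q := \mathbb{P}_{T_{\mathbf{z}}\mathbf{A}}$. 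Writing $P := \mathbb{P}_V$ for the orthogonal projector onto the linear part $V = \mathbb{R}^n \times \{0\}^k$ of $\mathbf{B}$, the linearization of $\mathbb{T}$ at $\mathbf{z}$ is
\begin{equation*}
L := \tfrac{1}{2}\bigl((2P - I)(2Q - I) + I\bigr).
\end{equation*}

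The crux of the proof is then to bound $\|L\|_2 = \gamma_0 < 1$. By the classical formula relating such a DR-type operator norm to the cosine of the Friedrichs angle between two subspaces, this reduces to establishing the transversality $T_{\mathbf{z}}\mathbf{A} \cap V = \{\mathbf{0}\}$ (the complementary condition $(T_{\mathbf{z}}\mathbf{A})^\perp \cap V^\perp = \{\mathbf{0}\}$ then follows by a dimension count). A nonzero $\mathbf{h} \in T_{\mathbf{z}}\mathbf{A} \cap V$ would be a vector supported on the first $n$ coordinates satisfying $\mathrm{Re}\bigl(\overline{\mathbf{F}_i^{\mathrm{H}}\mathbf{z}} \cdot \mathbf{F}_i^{\mathrm{H}}\mathbf{h}\bigr) = 0$ for all $i$, i.e., an infinitesimal signal perturbation preserving the Fourier amplitudes to first order; this is exactly a nontrivial kernel element of the linearized phase retrieval problem with background, which is ruled out by the same combination of oversampling and non-reciprocal-zero genericity that underlies the uniqueness results, cf.\ Theorem \ref{t9} and \cite{Yuan_2019}. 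Once $\|L\|_2 = \gamma_0 < 1$ is secured, a Taylor expansion of $\mathbb{P}_{\mathbf{A}}$ at $\mathbf{z}$ contributes a quadratic remainder that is absorbed by any $\gamma \in (\gamma_0, 1)$ after shrinking $\delta$ appropriately, which closes the argument. The main obstacle is the transversality step: converting the linearized uniqueness question into a quantitative positive Friedrichs-angle bound requires careful identification of the tangent space $T_{\mathbf{z}}\mathbf{A}$, but once that is in hand the remaining contraction-to-geometric-rate argument is routine.
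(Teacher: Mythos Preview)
Your linearization-and-Friedrichs-angle route is genuinely different from the paper's, which instead verifies \emph{super-regularity} of $\mathbf{A}$ and $\mathbf{B}$ at $\mathbf{z}$ and \emph{strong regularity} of the pair $\{\mathbf{A},\mathbf{B}\}$ (i.e.\ $N_{\mathbf{A}}(\mathbf{z})\cap(-N_{\mathbf{B}}(\mathbf{z}))=\{0\}$) and then invokes Theorem~4.3 of \cite{phan2016linear}. The paper never differentiates $\mathbb{T}$; the key analytic input is Hayes' phase-uniqueness lemma \cite{Hayes1980Phase}, which yields $\mathbb{P}_{\mathbf{A}}^{-1}(\mathbf{z})=\{\beta\mathbf{z}:\beta>0\}$ and hence the one-dimensional normal cone $N_{\mathbf{A}}(\mathbf{z})=\{\alpha\mathbf{z}\}$, while super-regularity of $\mathbf{A}$ is obtained by observing that under Assumption~\ref{assumption} the set $\mathbf{A}$ is \emph{finite}, so $\mathbf{A}\cap\mathbb{B}_\delta(\mathbf{z})=\{\mathbf{z}\}$ for small $\delta$.

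That last observation is exactly where your argument breaks. With $m\geq 2(n+k)-1$ the magnitude constraints cut $\mathbf{A}$ down to isolated points, so the tangent space you build your contraction on is $T_{\mathbf{z}}\mathbf{A}=\{0\}$ and $Q=0$. Your ``dimension count'' for $(T_{\mathbf{z}}\mathbf{A})^\perp\cap V^\perp=\{0\}$ then fails outright: $(T_{\mathbf{z}}\mathbf{A})^\perp=\mathbb{R}^{n+k}$, so the intersection is $V^\perp=\{0\}^n\times\mathbb{R}^k\neq\{0\}$. Plugging $Q=0$ into your formula gives $L=I-P$, whose operator norm is $1$, not strictly less, and the Banach contraction step collapses. (Equivalently: the linearized DR operator has the whole of $V^\perp$ as a fixed subspace, so it cannot contract toward the single point $\mathbf{z}$.) The paper sidesteps this because it works with the \emph{normal cone} $N_{\mathbf{A}}(\mathbf{z})$ computed from $\mathbb{P}_{\mathbf{A}}^{-1}(\mathbf{z})$ via Hayes' lemma, which is the line $\mathbb{R}\mathbf{z}$ and meets $V^\perp$ only at $0$ since $\mathbf{x}\neq\bm{0}$; that is the condition Phan's abstract theorem needs, and it is not the same as your tangent-space transversality.
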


The sketch of the proof is to establish that $\mathbf{ A}$ and $\mathbf{B}$ are super-regular at $\mathbf{z}\in\mathbf{ A}\bigcap\mathbf{B}$, and the system $\{\mathbf{A},\mathbf{B}
\}$ is strongly regular at $\mathbf{z}$. The results can then be obtained using the theories presented in \cite{phan2016linear}. For details, please refer to Appendix \ref{appendixa}.

In this subsection, we develop the BDR method to solve \eqref{bkmodel}. Although it has been proven that the BDR method exhibits local R-linear convergence, the theoretical results presented above are not optimal. First, the convergence of BDR is local which requires the initialization to be close to the ground truth. This is challenge to achieve in practice.  Second, there is a difference between theories and applications. While the BDR method performs well in numerical tests when $m=n+k$, theories demand $m\geq2(n+k)-1$. Additionally, calculating the operator $\mathbb{P}_{\mathbf{A}}$ is challenging when $m\geq2(n+k)-1$ in the BDR method. 

As a result, in the next subsection, we introduce a convex variant of the BDR method called CBDR method. This method guarantees a global convergence rate and is easier to apply.
\subsection{Convex Background Douglas Rachford method}
In this subsection, we reformulate \eqref{bkmodel} and try to solve the problem below,
\begin{eqnarray}
\begin{array} { c } { \text { Find } \mathbf { z } } \\[5pt] { \text { s.t. } \left| \mathbf { F } _ { i } ^{\mathrm{H}} \mathbf { z } \right| ^ { 2 }\leq b_i , i = 1 , \cdots , m } \\[5pt] { z_{n + l} = y_l , l = 1,\cdots,k.} \end{array}\label{conbkmodel}
\end{eqnarray}
\eqref{conbkmodel} is actually a convex problem. Specifically if any $\mathbf{x}^1$ and $\mathbf{x}^2$ are the solutions of \eqref{conbkmodel}, for $0\leq\lambda\leq1$, we have $\lambda x_{n+l}^1+(1-\lambda)x_{n+l}^2= y_{l},l=1,\cdots,k$, and $$\left|\mathbf{F}_i^{\text{H}}\left(\lambda\mathbf{x}^2+(1-\lambda)\mathbf{x}^2\right)\right|^2\leq\left(\lambda\left|\mathbf{F}_i^{\text{H}}\mathbf{x}^1\right|+(1-\lambda)\left|\mathbf{F}_i^{\text{H}}\mathbf{x}^2\right|\right)^2\leq b_i.$$

If we can demonstrate that \eqref{bkmodel} and \eqref{conbkmodel} have the same solution for a sufficiently large length of background information $k$, we can obtain the ground truth of the non-convex problem \eqref{bkmodel} by solving the convex problem \eqref{conbkmodel}. then a batch of efficient algorithms with theoretical convergence guarantees can be applied. Theorem \ref{Convetheo1} can guarantee this property.

\begin{theorem}\label{Convetheo1}
    Assuming $k>\max\left\{\frac{n}{1-\delta}, \frac{8c\|\mathbf{x}\|_2}{\delta\sqrt{\eta}},\frac{16c(\sqrt{\delta}+2)^2\log(\frac{1}{\eta})}{\delta^2},c(n,\delta)\right\}$, then the probability of the solution being the same for both \eqref{bkmodel} and \eqref{conbkmodel} is no less than $1-\eta$. Here $\delta\in(0,1)$, $c$ is a constant and $c(n,\delta)$ is a constant that depends on $n$ and $\delta$. For details on $c(n,\delta)$, refer to \eqref{sq1}.
\end{theorem}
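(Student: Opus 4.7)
Since feasibility for \eqref{bkmodel} trivially implies feasibility for \eqref{conbkmodel}, the plan is to prove the reverse inclusion: with probability at least $1-\eta$, the only feasible point of \eqref{conbkmodel} is the ground truth $\mathbf{z}=[\mathbf{x};\mathbf{y}]$. For any candidate $\mathbf{z}^*$, set $\mathbf{h}:=\mathbf{z}^*-\mathbf{z}$; the background constraint forces $h_{n+l}=0$ for $l=1,\ldots,k$, so $\mathbf{h}$ is supported on its first $n$ entries. Expanding the convex constraint $|\mathbf{F}_i^{\mathrm{H}}\mathbf{z}^*|^2\le b_i=|\mathbf{F}_i^{\mathrm{H}}\mathbf{z}|^2$ yields
$$|\mathbf{F}_i^{\mathrm{H}}\mathbf{h}|^2+2\,\textrm{Re}\bigl(\overline{\mathbf{F}_i^{\mathrm{H}}\mathbf{z}}\,\mathbf{F}_i^{\mathrm{H}}\mathbf{h}\bigr)\le 0,\qquad i=1,\ldots,n+k,$$
so in particular $|\mathbf{F}_i^{\mathrm{H}}\mathbf{h}|\le 2|\mathbf{F}_i^{\mathrm{H}}\mathbf{z}|$. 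After a harmless rescaling (by $1/\max_i|\mathbf{F}_i^{\mathrm{H}}\mathbf{z}|$) the normalized direction $\mathbf{h}$ lies in the admissible set $\mathbf{C}_2=\{\mathbf{h}:|\mathbf{F}_i^{\mathrm{H}}\mathbf{h}|^2\le 4\}$ on which the F-RIP is to be applied.

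The core of the argument is to play the F-RIP against the Fourier-domain inequality above. The matrix $\mathbf{L}_1$ is to be constructed so that $\mathbf{L}_1\mathbf{h}$ extracts the \emph{linear-in-}$\mathbf{h}$ part of the autocorrelation difference $\mathbf{R}(\mathbf{z}^*)-\mathbf{R}(\mathbf{z})$, with random rows given by the i.i.d.\ Gaussian background $\mathbf{y}$ (the same block structure as the matrix $\mathbf{M}$ from the proof of Theorem \ref{t9}, restricted to shifts where the cross-correlation is purely linear in $\mathbf{h}_1$). The F-RIP lower bound then gives $\|\mathbf{L}_1\mathbf{h}\|_2^2\ge k(c_1(n,k,\mathbf{x})-\delta)\|\mathbf{h}\|_2^2$. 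On the other hand, the pointwise Fourier inequality forces the linear part of the autocorrelation difference to be dominated by the quadratic part, and Parseval bounds the latter by $\|\mathbf{h}\|_2^2$; this translates into a subdominant upper bound on $\|\mathbf{L}_1\mathbf{h}\|_2^2$ in terms of $\|\mathbf{h}\|_2$. Combining the two estimates compels $\mathbf{h}=\mathbf{0}$ as soon as $c_1(n,k,\mathbf{x})>\delta$, which is exactly what the threshold $k>n/(1-\delta)$ is arranged to guarantee.

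The remaining task is to verify that F-RIP itself holds for $\mathbf{L}_1$ with probability at least $1-\eta$. The natural strategy is a standard net-and-concentration argument: (i) construct an $\varepsilon$-net of $\mathbf{C}_2$ whose logarithmic cardinality is of order $n\log(1/\delta)$, giving the baseline condition $k>c(n,\delta)$ of \eqref{sq1}; (ii) apply a Hanson--Wright type concentration inequality to $\frac{1}{k}\|\mathbf{L}_1\mathbf{h}\|_2^2$ at each net point, yielding a per-point deviation of order $\delta\|\mathbf{h}\|_2^2$ with failure probability $\exp\!\bigl(-c\delta^2 k/(\sqrt{\delta}+2)^2\bigr)$, which upon union bounding produces the threshold $k>16c(\sqrt{\delta}+2)^2\log(1/\eta)/\delta^2$; and (iii) pass from the net to all of $\mathbf{C}_2$ by a Markov/perturbation step, whose tail cost forces the final condition $k>8c\|\mathbf{x}\|_2/(\delta\sqrt{\eta})$. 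The hardest part will be steps (ii)--(iii): the partial random circulant structure of $\mathbf{L}_1$ carries substantially heavier coordinate dependence than an i.i.d.\ Gaussian measurement matrix, so off-the-shelf RIP proofs for random circulant operators do not transfer directly, and the admissible set $\mathbf{C}_2$ is defined through Fourier-domain constraints that couple nontrivially with the circulant structure; producing the explicit constants $c_1(n,k,\mathbf{x})$, $c_2(n,k,\mathbf{x})$ and ensuring $c_1>\delta$ uniformly on $\mathbf{C}_2$ is the delicate item that the four lower bounds on $k$ in the theorem statement are engineered to cover.
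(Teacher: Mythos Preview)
Your additive parametrization $\mathbf{h}=\mathbf{z}^*-\mathbf{z}$ is not the one the paper uses, and the ``upper bound'' step in your plan does not actually go through. In Fourier domain you have $|\mathbf{F}_i^{\mathrm H}\mathbf{h}|^2+2\,\mathrm{Re}(\overline{\mathbf{F}_i^{\mathrm H}\mathbf{z}}\,\mathbf{F}_i^{\mathrm H}\mathbf{h})\le 0$, so the linear term is $\le -|\mathbf{F}_i^{\mathrm H}\mathbf{h}|^2\le 0$; hence $|\text{linear}_i|\ge|\text{quadratic}_i|$, which is the \emph{wrong direction} for the upper bound you claim on $\|\mathbf{L}_1\mathbf{h}\|_2$. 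There is no way to cap the cross-correlation entries from this inequality alone, so the sandwich argument collapses.

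The paper avoids this by a different reparametrization: it writes any candidate as $\mathbf{z}^*=\mathbf{L}\mathbf{h}$ where $\mathbf{L}$ is the circulant matrix generated by $\mathbf{z}$ (nonsingular with probability $1$). Then $\mathbf{F}\mathbf{L}\mathbf{h}=(\mathbf{F}\mathbf{z})\odot(\mathbf{F}\mathbf{h})$, so $|\mathbf{F}_i^{\mathrm H}\mathbf{z}^*|^2\le b_i$ becomes simply $|\mathbf{F}_i^{\mathrm H}\mathbf{h}|\le 1$, and the background constraint reads $\mathbf{L}_1\mathbf{h}=\mathbf{y}$ with $\mathbf{L}_1$ the last $k$ rows of $\mathbf{L}$. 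For two feasible points one gets $\mathbf{L}_1(\mathbf{h}_1-\mathbf{h}_2)=\mathbf{0}$ \emph{exactly}, with $\mathbf{h}_1-\mathbf{h}_2\in\mathbf{C}_2$; the F-RIP lower bound then forces $\mathbf{h}_1=\mathbf{h}_2$ directly, no competing upper bound needed. This circulant factorization is the idea you are missing.

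Your sketch of the F-RIP proof also mismatches the actual argument. The paper does not use a net plus Hanson--Wright plus union bound; it splits $\frac{1}{k}\|\mathbf{L}_1\mathbf{h}\|_2^2$ into a second-order chaos in $\mathbf{y}$, a deterministic piece $\|\bm\Phi\mathbf{h}\|_2^2$, and a cross term. The chaos is controlled by the Krahmer--Mendelson--Rauhut suprema-of-chaos theorem via the $\gamma_2$-functional and a Dudley-integral bound on $\mathbf{C}_2$ (this is what produces $c(n,\delta)$ and the $16c(\sqrt{\delta}+2)^2\log(1/\eta)/\delta^2$ threshold), while the cross term $\frac{1}{k}\langle \mathbf{P}_{\bm\Omega}\mathbf{F}^{-1}\hat{\mathbf H}\mathbf{F}_1\mathbf{x},\,\mathbf{P}_{\bm\Omega}\mathbf{F}^{-1}\hat{\mathbf H}\mathbf{F}_2\mathbf{y}\rangle$ is handled by Chebyshev, which is exactly where the $\|\mathbf{x}\|_2/\sqrt{\eta}$ threshold comes from --- not from a Markov perturbation off a net. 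The condition $k>n/(1-\delta)$ secures $c_1(n,k)\ge (k-n)/k>\delta$.
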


\begin{remark}
   In \cite{huang2022strong}, it is proven that the natural least squares formulation for affine phase retrieval is strongly convex on the entire space, provided that the measurements are complex Gaussian random vectors and the number of measurements is sufficient. 
   
   However, the main idea to convexify the problem in this subsection is to reformulate $\mathbf{A}:=\left\{\mathbf{z}\in\mathbb{R}^{n+k}: \vert\mathbf{F}^{\mathrm{H}}\mathbf{ z}\vert=\mathbf{ b}^{\frac{1}{2}} \right\}$ to be convex while ensuring that $\mathbf{A}\cap\mathbf{B}$ still belongs to the solutions of \eqref{bkmodel}. 
   To achieve this, we redefine $\mathbf{A}$ as follows: 
   $$\mathbf{A}:=\left\{\mathbf{z}\in\mathbb{R}^{n+k}: \vert\mathbf{F}_i^{\mathrm{H}}\mathbf{ z}\vert\leq b_i^{\frac{1}{2}},i=1,\cdots,n+k\right\}.$$
\end{remark} 
To prove Theorem \ref{Convetheo1}, a matrix $\mathbf{L}$ is constructed. Each column of this matrix is a circular shift of $\mathbf{z}$, which is one of the solutions of \eqref{bkmodel}.
$$\mathbf{L}=\bordermatrix{%
&&&&&&&&\cr
&z_1& z_2 &\cdots&z_n&z_{n+1}&\cdots &z_{n+k-1}&z_{n+k} \cr
&z_2&z_3&\cdots& z_{n+1}&z_{n+2}&\cdots&z_{n+k}&z_1\cr
&\vdots & \vdots & \cdots&\vdots&\vdots &\ddots&\vdots&\vdots \cr
&z_n&z_{n+1}&\cdots&z_{2n}&z_{2n+1}&\cdots&z_{n-2}&z_{n-1}\cr
&z_{n+1}&z_{n+2}&\cdots&z_{2n+1}&z_{2n+2}&\cdots&z_{n-1}&z_{n}\cr
&\vdots & \vdots & \cdots&\vdots&\vdots &\ddots&\vdots&\vdots \cr
&z_{n+k-1}&z_{n+k}&\cdots&z_{n-2}&z_{n-1}&\cdots&z_{n+k-3}&z_{n+k-2}\cr
&z_{n+k}&z_{1}&\cdots&z_{n-1}&z_{n}&\cdots&z_{n+k-2}&z_{n+k-1}\cr
}.
$$
Note that $\mathbf{z}=[\mathbf{x},\mathbf{y}]$, where $\mathbf{x}:=[z_1,z_2,\cdots,z_n]$ and $\mathbf{y}:=[z_{n+1},z_{n+2},\cdots,z_{n+k}]$, and each element of $\mathbf{y}$ satisfies a normal distribution $\mathcal{N}(0,1)$ independently. The theory below proves that $\mathbf{L}$ is non-singular with probability $1$.
\begin{lemma}\label{singularL}
Assume $\mathbf{x}\in\mathbb{R}^n$ is given, as if $k\geq1$,  then $\mathbf{L}\in\mathbf{R}^{(n+k)\times(n+k)}$ is non-singular with probability 1.
\end{lemma}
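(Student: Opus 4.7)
The plan is to mimic the template used in the proof of Theorem \ref{t9}: show that $\det(\mathbf{L})$, viewed as a polynomial in the background variables $y_1,\dots,y_k$ with $\mathbf{x}$ regarded as a fixed parameter, is not the zero polynomial, and then invoke the measure-zero lemma for the zero set of a nonzero real multivariate polynomial (Lemma 4.1 of \cite{multi}) together with the fact that $(y_1,\dots,y_k)$ has an absolutely continuous joint density under $\mathcal{N}(0,1)^{\otimes k}$. The structural input we need is a workable determinant formula for $\mathbf{L}$, since $\mathbf{L}$ is not quite a standard circulant.

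The key observation is that $\mathbf{L}_{i,j}=z_{((i+j-2)\bmod N)+1}$ with $N=n+k$, so $\mathbf{L}$ is an anti-circulant (left-circulant) matrix. Left-multiplying by the reversal permutation $J$ converts $\mathbf{L}$ into a genuine right-circulant matrix $C=J\mathbf{L}$ whose first row is a cyclic shift of $(z_1,\dots,z_{n+k})$. The standard DFT diagonalization of circulants then gives
\begin{equation*}
\det(C)=\prod_{s=0}^{N-1}\omega^{s}\,p(\omega^{s}),\qquad \omega=e^{2\pi\mathrm{i}/N},\qquad p(w):=\sum_{r=0}^{N-1}z_{r+1}w^{r}.
\end{equation*}
Since $\det(J)=\pm1$ and $\prod_{s=0}^{N-1}\omega^{s}=\omega^{N(N-1)/2}=(-1)^{N-1}$, this yields the polynomial identity $\det(\mathbf{L})=\pm\prod_{s=0}^{N-1}p(\omega^{s})$ in $\mathbb{C}[x_1,\dots,x_n,y_1,\dots,y_k]$. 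Each factor can be written as
\begin{equation*}
p(\omega^{s})=A_s(\mathbf{x})+\sum_{l=1}^{k}y_{l}\,\omega^{s(n+l-1)},\qquad A_s(\mathbf{x}):=\sum_{r=0}^{n-1}x_{r+1}\omega^{sr},
\end{equation*}
so $p(\omega^{s})$ is a degree-one polynomial in $y_1,\dots,y_k$ with coefficient of $y_1$ equal to $\omega^{sn}\neq 0$, hence a nonzero element of $\mathbb{C}[y_1,\dots,y_k]$ for every $s$ provided $k\geq 1$. Because $\mathbb{C}[y_1,\dots,y_k]$ is an integral domain, the product $\prod_{s}p(\omega^{s})$ is a nonzero polynomial, and therefore so is $\det(\mathbf{L})$; one can even read off the monomial $\pm y_{1}^{N}$ (with coefficient $\pm\prod_s\omega^{sn}=\pm 1$) which is independent of $\mathbf{x}$.

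Once $\det(\mathbf{L})$ is established to be a nonzero polynomial in the real variables $y_{1},\dots,y_{k}$, its zero locus in $\mathbb{R}^{k}$ has Lebesgue measure zero by the cited lemma, exactly as in the closing step of the proof of Theorem \ref{t9}, and the absolute continuity of the Gaussian density gives $\mathbb{P}(\det(\mathbf{L})=0)=0$. The only delicate point I anticipate is the bookkeeping in Step one—correctly converting the anti-circulant $\mathbf{L}$ to a right circulant and tracking the signs and the cyclic shift in its first row. A safer alternative, if the sign calculation becomes awkward, is to bypass the formula entirely: each $p(\omega^{s})$ is a nontrivial affine hyperplane condition on $(y_1,\dots,y_k)\in\mathbb{R}^k$, hence the union of $N$ such hyperplanes still has Lebesgue measure zero, and on the complement $\mathbf{L}$ is nonsingular because all circulant eigenvalues $p(\omega^{s})$ are nonzero.
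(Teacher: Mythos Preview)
Your proof is correct. The paper's own proof is a single sentence pointing back to the algebraic template of Theorem \ref{t9}: exhibit $\det(\mathbf L)$ as a multivariate polynomial in the background entries and invoke the measure-zero lemma. It does not spell out why that polynomial is not identically zero once $\mathbf x$ is fixed.

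You take a more explicit route. Recognizing that $\mathbf L$ is an anti-circulant and reducing to a right circulant via the reversal permutation lets you read off the factorization $\det(\mathbf L)=\pm\prod_{s}p(\omega^{s})$, and then the observation that each affine factor has $y_1$-coefficient $\omega^{sn}\neq 0$ immediately forces the $y_1^{N}$-coefficient of the product to be $\pm 1$, independent of $\mathbf x$. This is a clean, self-contained argument that the paper's sketch lacks; in particular it makes transparent that the statement holds for \emph{every} fixed $\mathbf x$, not merely generic ones. Your fallback argument---that each condition $p(\omega^{s})=0$ is a nontrivial real hyperplane in $(y_1,\dots,y_k)$ and a finite union of hyperplanes is null---is equally valid and even shorter, at the cost of not producing the explicit monomial witness. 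Either version is a genuine improvement in detail over the paper's proof.
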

\begin{proof}
To prove Lemma \ref{singularL}, we can use the similar algebraic tools used in the proof of Theorem \ref{t9} to show that the set of vectors $\mathbf{y}$ which satisfy $|\mathbf{L}| = 0$ has measure $0$ in $\mathbb{R}^k$.
\end{proof}

If $\mathbf{L}$ is non-singular, its columns can span the $\mathbf{R}^{n+k}$ space. Let us define $$\mathbf{e}_1=[1;0;\cdots,0].$$
We can find that $\mathbf{L}\mathbf{e}_1:=\mathbf{z}$, which is one of the solutions of \eqref{conbkmodel}. If there are other solutions of \eqref{conbkmodel} except $\mathbf{z}$, then $\exists \mathbf{h}\in\mathbb{R}^{n+k}$ and $\mathbf{h}\neq\bm{0}$ such that the sub-matrix $\mathbf{L}_1$ of $\mathbf{L}$ satisfies $\mathbf{L}_1\mathbf{h}=\bm{0}$, where
$$\mathbf{L}_1:=\bordermatrix{%
&&&&&&&&\cr
&z_{n+1}&z_{n+2}&\cdots&z_{2n+1}&z_{2n+2}&\cdots&z_{n-1}&z_{n}\cr
&\vdots & \vdots & \cdots&\vdots&\vdots &\ddots&\vdots&\vdots \cr
&z_{n+k-1}&z_{n+k}&\cdots&z_{n-2}&z_{n-1}&\cdots&z_{n+k-3}&z_{n+k-2}\cr
&z_{n+k}&z_{1}&\cdots&z_{n-1}&z_{n}&\cdots&z_{n+k-2}&z_{n+k-1}\cr
},$$
where $\mathbf{L}_1$ is a partial circulant measurement matrix. 

The vector $\mathbf{h}$ mentioned earlier exists inherently because $\text{Rank}(\mathbf{L}_1)\leq k$, which implies that $\text{Ker}(\mathbf{\mathbf{L_1}})\geq n$. Therefore, a space in $\mathbb{R}^{n+k}$ exists, with dimension larger than $n$, that is a subset of $\text{Ker}(\mathbf{L}_1)$.

On the other hand, if $\mathbf{L}\mathbf{h}$ is a solution of \eqref{conbkmodel}, then $|\mathbf{F}^{\text{H}}_i\mathbf{L}\mathbf{h}|^2\leq b_i,i=1,\cdots,n+k$. Note that $\mathbf{L}$ is a circulant matrix, so we have: $$\mathbf{F}\mathbf{L}\mathbf{h}=(\mathbf{F}\mathbf{z})\odot(\mathbf{F}\mathbf{h}),$$
where $\odot$ is the Hadmard product, and $\mathbf{F}$ is the Fourier matrix with the $i$th row being $\mathbf{F}_i^{\text{H}}$. Using the property of Fourier transform, we have: 
$$|\mathbf{F}^{\text{H}}_i\mathbf{L}\mathbf{h}|^2=|\mathbf{F}^{\text{H}}_i\mathbf{z}|^2|\mathbf{F}^{\text{H}}_i\mathbf{h}|^2\leq b_i,i=1,\cdots,n+k.$$
Since $|\mathbf{F}_i^{\text{H}}\mathbf{z}|^2=b_i$, we have $|\mathbf{F}^{\text{H}}_i\mathbf{h}|^2\leq 1,i=1,\cdots,n+k.$

Combining the two situations above, we can conclude that if the sets $\mathbf{C}_1$ and $\mathbf{C}_2$ satisfy $\mathbf{C}_1\cap\mathbf{C}_2=\{\bm{0}\}$, then only one point $\mathbf{z}$ satisfies the constraint in \eqref{conbkmodel}. The sets are defined as follows:
\begin{eqnarray*}
   &\mathbf{C}_1:= \left\{\mathbf{h}|\mathbf{L}_1\mathbf{h}=\bm{0}\right\},\\
&\mathbf{C}_2:=\{\mathbf{h}||\mathbf{F}^{\text{H}}_i\mathbf{h}|^2\leq4,i=1,\cdots,n+k\}.
\end{eqnarray*}

Specifically, suppose that there are two points that satisfy the constraint in \eqref{conbkmodel}. Then, there exist $\mathbf{h}_1\in\mathbb{R}^{n+k}$ and $\mathbf{h}_2\in\mathbb{R}^{n+k}$ such that $\mathbf{L}_1\mathbf{h}_1=\mathbf{L}_1\mathbf{h}_2=\mathbf{y}$, $|\mathbf{F}_i^{\text{H}}\mathbf{h}_1|^2\leq1$, $i=1,\cdots,n+k$, and $|\mathbf{F}_i^{\text{H}}\mathbf{h}_2|^2\leq1$, $i=1,\cdots,n+k$. Thus, we have $\mathbf{L}_1(\mathbf{h}_1-\mathbf{h}_2)=\bm{0}$, and $|\mathbf{F}_i^{\text{H}}(\mathbf{h}_1-\mathbf{h}_2)|^2\leq4$, $i=1,\cdots,n+k$. As a result, $\mathbf{C}_1\cap\mathbf{C}_2=\{\bm{0}\}$ is a sufficient condition to ensure the uniqueness of the solution.

In this paper, we prove that for sufficiently large $k$, and any $\mathbf{h}\in\mathbf{C}_2$, the following inequality holds with probability 1:
\begin{eqnarray}\label{F-RIP}
(c_1(n,k,\mathbf{x})-\delta)\|\mathbf{h}\|_2^2\leq\frac{1}{k}\|\mathbf{L}_1\mathbf{h}\|_2^2\leq(c_2(n,k,\mathbf{x})+\delta)\|\mathbf{h}\|_2^2,
\end{eqnarray}
where $c_1(n,k,\mathbf{x})>\delta$, $\delta\in(0,1)$, $c_1(n,k,\mathbf{x})$ and $c_2(n,k,\mathbf{x})$ are two constants that depend on $n$,$k$ and the ground truth $\mathbf{x}$. This implies that   $\mathbf{C}_1\cap\mathbf{C}_2=\{\bm{0}\}$. As a result, we can conclude that the intersection of $\mathbf{A}$ and $\mathbf{B}$ contains only one point. This verifies that \eqref{bkmodel} and \eqref{conbkmodel} have the same solution.

\begin{remark}
The formulation of \eqref{F-RIP} is similar to the Restricted Isometry Property(RIP) property \cite{candes2005decoding}, which is a standard analysis tool in Compressive Sensing (CS).  However \eqref{conbkmodel} is different. First, in CS, the signal of interest $\mathbf{h}$ is usually assumed to be $s$-sparse, i.e., $\|\mathbf{h}\|_0\leq s$, whereas in this paper, the Fourier spectrum of $\mathbf{h}$ is not large than $1$ at $\omega = \frac{2\pi l}{n+k},l=0,1,\cdots,n+k-1$. By Parseval's theorem, we can also find that $|\mathbf{F}^{\text{H}}_i\mathbf{h}|^2\leq4,i=1,\cdots,n+k$. Second, compared to the partial circulant Gaussian matrix analyzed in the RIP \cite{krahmer2014suprema,haupt2010toeplitz,rauhut2012restricted}, some of the elements in $\mathbf{L}_1$ are not random, so the techniques used to prove the theorem will be different. In summary, we only need to prove the theory below.
\end{remark}
\begin{theorem}[F-RIP]\label{Convetheo2}
   For the phase retrieval with background information model, assume $\mathbf{x}\in\mathbb{R}^n$ is given, if $k>\max\left\{\frac{n}{1-\delta}, \frac{8c\|\mathbf{x}\|_2}{\delta\sqrt{\eta}},\frac{16c(\sqrt{\delta}+2)^2\log(\frac{1}{\eta})}{\delta^2},c(n,\delta)\right\}$, $c$ is a constant and $c(n,\delta)$ is a constant depending on the $n$ and $\delta$, then with probability as least $1-\eta$, the event that 
   \begin{eqnarray}
(c_1(n,k,\mathbf{x})-\delta)\|\mathbf{h}\|_2^2\leq\frac{1}{k}\|\mathbf{L}_1\mathbf{h}\|_2^2\leq(c_2(n,k,\mathbf{x})+\delta)\|\mathbf{h}\|_2^2,~\forall \mathbf{h}\in\mathbf{C}_2
\end{eqnarray}
is held, where $\mathbf{C}_2=\{\mathbf{h}| |\mathbf{F}^{\mathrm{H}}_i\mathbf{h}|^2\leq4,i=1,\cdots,n+k\}$, $c_1(n,k,\mathbf{x})>\delta$, $\delta\in(0,1)$, $c_1(n,k,\mathbf{x})$ and $c_2(n,k,\mathbf{x})$ are two constants that depend on $n$,$k$ and the ground truth $\mathbf{x}$. 
\end{theorem}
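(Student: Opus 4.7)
The plan is to decompose $\mathbf{L}_1$ into its deterministic ($\mathbf{x}$-dependent) and random ($\mathbf{y}$-dependent) parts, compute $\mathbb{E}\bigl[\tfrac{1}{k}\|\mathbf{L}_1\mathbf{h}\|_2^2\bigr]$ explicitly to identify $c_1(n,k,\mathbf{x})$ and $c_2(n,k,\mathbf{x})$, prove pointwise concentration via a Hanson--Wright-type inequality, and finally lift it to a uniform estimate on $\mathbf{C}_2$ by an $\varepsilon$-net plus Lipschitz argument. Concretely, since the $j$th row of $\mathbf{L}_1$ is the cyclic shift of $\mathbf{z}=[\mathbf{x};\mathbf{y}]$ by $n+j-1$ positions, we can write $\mathbf{L}_1\mathbf{h}=\mathbf{A}(\mathbf{h})\mathbf{y}+\mathbf{b}(\mathbf{h},\mathbf{x})$, where $\mathbf{A}(\mathbf{h})\in\mathbb{R}^{k\times k}$ collects the coefficients of the $\mathbf{y}$-entries in each row and $\mathbf{b}(\mathbf{h},\mathbf{x})\in\mathbb{R}^{k}$ the deterministic contributions from the $\mathbf{x}$-entries. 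Because $\mathbf{y}\sim\mathcal{N}(\bm{0},\mathbf{I}_k)$,
\[
\mathbb{E}\Big[\tfrac{1}{k}\|\mathbf{L}_1\mathbf{h}\|_2^2\Big]=\tfrac{1}{k}\bigl(\|\mathbf{A}(\mathbf{h})\|_{\mathrm{F}}^2+\|\mathbf{b}(\mathbf{h},\mathbf{x})\|_2^2\bigr).
\]
Counting the number of $\mathbf{y}$-entries in each column of $\mathbf{L}_1$ shows that $\tfrac{1}{k}\|\mathbf{A}(\mathbf{h})\|_{\mathrm{F}}^2=\sum_c\tfrac{w_c}{k}h_c^2$ with weights $w_c/k\in[1-n/k,1]$, whereas a Cauchy--Schwarz bound on each row gives $\tfrac{1}{k}\|\mathbf{b}(\mathbf{h},\mathbf{x})\|_2^2\leq\tfrac{n}{k}\|\mathbf{x}\|_2^2\|\mathbf{h}\|_2^2$. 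Defining $c_1(n,k,\mathbf{x})$ and $c_2(n,k,\mathbf{x})$ as the infimum and supremum of $\mathbb{E}[\tfrac{1}{k}\|\mathbf{L}_1\mathbf{h}\|_2^2]/\|\mathbf{h}\|_2^2$ over $\mathbf{h}\in\mathbf{C}_2$ then produces the constants stated in the theorem, and the hypothesis $k>\tfrac{n}{1-\delta}$ immediately yields $c_1(n,k,\mathbf{x})>\delta$.

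For pointwise concentration at a fixed $\mathbf{h}$, the quantity $\tfrac{1}{k}\|\mathbf{L}_1\mathbf{h}\|_2^2$ expands as $\tfrac{1}{k}\bigl(\mathbf{y}^{\mathrm{T}}\mathbf{A}(\mathbf{h})^{\mathrm{T}}\mathbf{A}(\mathbf{h})\mathbf{y}+2\mathbf{b}(\mathbf{h},\mathbf{x})^{\mathrm{T}}\mathbf{A}(\mathbf{h})\mathbf{y}+\|\mathbf{b}(\mathbf{h},\mathbf{x})\|_2^2\bigr)$, a quadratic-plus-linear form in the Gaussian vector $\mathbf{y}$. Parseval's identity on $\mathbf{C}_2$ gives $\|\mathbf{h}\|_2\leq 2$, which controls both $\|\mathbf{A}(\mathbf{h})\|_2$ and $\|\mathbf{b}(\mathbf{h},\mathbf{x})\|_2\leq\|\mathbf{x}\|_2\|\mathbf{h}\|_2$; inserting these into the Hanson--Wright inequality for the quadratic part and a Gaussian Bernstein bound for the cross term yields, for every fixed $\mathbf{h}\in\mathbf{C}_2$,
\[
\mathbb{P}\Big(\big|\tfrac{1}{k}\|\mathbf{L}_1\mathbf{h}\|_2^2-\mathbb{E}[\tfrac{1}{k}\|\mathbf{L}_1\mathbf{h}\|_2^2]\big|\geq\delta\|\mathbf{h}\|_2^2\Big)\leq 2\exp\Big(-\frac{c\,k\,\delta^2}{(\sqrt{\delta}+2)^2}\Big),
\]
matching the $\log(1/\eta)/\delta^2$ scaling in the hypothesis, while the separate threshold $k>\tfrac{8c\|\mathbf{x}\|_2}{\delta\sqrt{\eta}}$ absorbs a Markov-type control of the cross term $\mathbf{b}(\mathbf{h},\mathbf{x})^{\mathrm{T}}\mathbf{A}(\mathbf{h})\mathbf{y}$.

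To pass from pointwise to uniform in $\mathbf{h}\in\mathbf{C}_2$, I would form an $\varepsilon$-net $\mathcal{N}_\varepsilon\subset\mathbf{C}_2$ in the Euclidean metric, using $\mathbf{C}_2\subset 2\mathbf{B}_2^{n+k}$ to obtain $\log|\mathcal{N}_\varepsilon|\leq(n+k)\log(6/\varepsilon)$, and combine it with the Lipschitz estimate $\bigl|\|\mathbf{L}_1\mathbf{h}\|_2^2-\|\mathbf{L}_1\mathbf{h}'\|_2^2\bigr|\leq\|\mathbf{L}_1\|_2^2\bigl(\|\mathbf{h}\|_2+\|\mathbf{h}'\|_2\bigr)\|\mathbf{h}-\mathbf{h}'\|_2$ to transfer the pointwise bound to all of $\mathbf{C}_2$ via a single union bound. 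The constant $c(n,\delta)$ in the hypothesis is then forced by balancing the entropy $(n+k)\log(1/\varepsilon)$ against $k\delta^2/(\sqrt{\delta}+2)^2$ with an appropriate choice of $\varepsilon$. The main obstacle is two-fold: first, one needs a high-probability bound on $\|\mathbf{L}_1\|_2$, which requires a separate suprema-of-chaos estimate in the spirit of Krahmer--Mendelson--Rauhut for partial random circulant matrices; and second, it is not clear a priori whether the generic Euclidean covering of $2\mathbf{B}_2^{n+k}$ is sharp enough, or whether the Fourier structure of $\mathbf{C}_2$ (namely $\|\mathbf{F}\mathbf{h}\|_\infty\leq 2$) must be exploited via generic chaining to remove a spurious factor of $(n+k)/k$ from the sample-complexity threshold.
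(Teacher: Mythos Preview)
Your decomposition $\mathbf{L}_1\mathbf{h}=\mathbf{A}(\mathbf{h})\mathbf{y}+\mathbf{b}(\mathbf{h},\mathbf{x})$, the computation of the expectation, the identification of $c_1(n,k,\mathbf{x})\geq 1-n/k$, and the Chebyshev/Markov treatment of the cross term all match the paper's argument. The genuine gap is exactly the one you flag at the end: the pointwise Hanson--Wright bound plus a Euclidean $\varepsilon$-net on $2\mathbf{B}_2^{n+k}$ cannot close. The entropy cost $(n+k)\log(1/\varepsilon)$ would have to be absorbed by the concentration exponent $\sim k\delta^2$, forcing $k\gtrsim n+k$, which is vacuous; this is precisely the spurious $(n+k)/k$ factor you suspect, and it does not disappear by sharpening the high-probability bound on $\|\mathbf{L}_1\|_2$.

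The paper avoids this by not doing pointwise-plus-union-bound at all. It applies the Krahmer--Mendelson--Rauhut suprema-of-chaos theorem directly to the family $\mathcal{A}=\{\mathbf{V}_{\mathbf{h}}:=\tfrac{1}{\sqrt{k}}\mathbf{P}_{\bm{\Omega}}\mathbf{F}^{-1}\widehat{\mathbf{H}}\mathbf{F}_2:\mathbf{h}\in\mathbf{D}\}$, so that the quantity to control is $\gamma_2(\mathcal{A},\|\cdot\|_2)$ rather than a raw covering number in $\ell_2$. The key observation is $\|\mathbf{V}_{\mathbf{h}_1}-\mathbf{V}_{\mathbf{h}_2}\|_2\leq k^{-1/2}\|\mathbf{h}_1-\mathbf{h}_2\|_{\widehat{\infty}}$ with $\|\mathbf{h}\|_{\widehat{\infty}}:=\|\mathbf{F}\mathbf{h}\|_{\infty}$; the constraint defining $\mathbf{C}_2$ is exactly $\|\mathbf{h}\|_{\widehat{\infty}}\leq 2$, so $d_2(\mathcal{A})\leq 2/\sqrt{k}$. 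The Dudley integral is then split at $u=1/k$: for large $u$, Maurey's empirical method on the extreme points $\{\pm\sqrt{2}\,\mathbf{e}_j\}$ gives $\log N\lesssim u^{-2}k^{-1}\log^2(n+k)$; for small $u$, the crude volumetric bound $\log N\lesssim(n+k)\log(1+2/u)$ suffices because the interval $[0,1/k]$ is short. Putting the two pieces together gives $\gamma_2(\mathcal{A},\|\cdot\|_2)\lesssim k^{-1/2}\log(n+k)\log(2\sqrt{k})+k^{-1}\sqrt{(n+k)\log(e(1+2k))}$, which tends to $0$ as $k\to\infty$ and is what defines the constant $c(n,\delta)$. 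Your suspicion was therefore exactly right: one must exploit the Fourier constraint through the $\widehat{\infty}$-metric and chaining, not a Euclidean net.
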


To prove Theorem \ref{Convetheo2}, we need to bound the spectral constant of $\mathbf{L}_1$. Let $\mathbf{D}:=\left\{\boldsymbol{h} \in \mathbb{R}^{n+k}:\|\boldsymbol{h}\|_2 \leq 1,\boldsymbol{h}\in\mathbf{C}_2\right\}$, we have
\begin{eqnarray*} 
\delta_s& =&\sup _{\boldsymbol{h} \in\mathbf{D}}\left|\left<\left(\mathbf{V}_{\mathbf{z}}^{\text{H}}\mathbf{V}_{\mathbf{z}}-c_1(n,k)\bm{\mathrm{I}}-\bm{\Phi}^{\text{H}}\bm{\Phi}\right)\mathbf{h},\mathbf{h}\right>\right|=\sup _{\boldsymbol{h} \in\mathbf{D}}\left|\frac{1}{k}\left\|\mathbf{L_1}\mathbf{h}\right\|_2^2-c_1(n,k)\|\boldsymbol{h}\|_2^2-\|\bm{\Phi}\mathbf{h}\|_2^2\right|\\&=&\sup _{\boldsymbol{h} \in\mathbf{D}}\left|\left\|(\mathbf{V}_{\mathbf{z}}\mathbf{h})\right\|_2^2-c_1(n,k)\|\boldsymbol{h}\|_2^2-\|\bm{\Phi}\mathbf{h}\|_2^2\right|
,\end{eqnarray*}
\noindent where $\mathbf{V}_{\mathbf{\mathbf{z}}}:=\frac{1}{\sqrt{k}}\mathbf{P}_{\bm{\Omega}}\mathbf{F}^{-1}\hat{\mathbf{Z}}\mathbf{F}$, and $\bm{\Phi}:=\frac{1}{\sqrt{k}}\mathbf{P}_{\bm{\Omega}}\mathbf{F}^{-1}\hat{\mathbf{Z}}_1\mathbf{F}$, where $\hat{\mathbf{Z}}_1$ is the Fourier transform of  $\mathbf{z}_1=[\mathbf{x};\bm{0}]\in\mathbb{R}^{n+k}.$ The second equality $\frac{1}{k}\|\mathbf{L}_1\mathbf{h}\|^2_2$ is derived by 
\begin{eqnarray*}
        \frac{1}{k}\|\mathbf{L}_1\mathbf{h}\|^2_2&=&\frac{1}{k}\left<\mathbf{L}_1^{\text{T}}\mathbf{L}_1\mathbf{h},\mathbf{h}\right>=\frac{1}{k}\left<\mathbf{L}^{\text{T}}\mathbf{P}_{\bm{\Omega}}^{\text{T}}\mathbf{P}_{\bm{\Omega}}\mathbf{L}\mathbf{h},\mathbf{h}\right>\\
    &=&\frac{1}{k}\left<(\mathbf{F}^{-1}\hat{\mathbf{Z}}\mathbf{F})^{\text{T}}\mathbf{P}_{\bm{\Omega}}^{\text{T}}\mathbf{P}_{\bm{\Omega}}\mathbf{F}^{-1}\hat{\mathbf{Z}}\mathbf{F}\mathbf{h},\mathbf{h}\right>\\ &=&\left<\mathbf{V}_{\mathbf{z}}^{\text{H}}\mathbf{V}_{\mathbf{z}}\mathbf{h},\mathbf{h}\right>.
\end{eqnarray*}

\noindent 
To perform analysis, $\mathbf{V}_{\mathbf{\mathbf{z}}}\mathbf{h}$ is divided into two parts: the first contains random elements $\mathbf{y}$, and the second contains non-random elements $\mathbf{x}$.
\begin{eqnarray*}
\mathbf{V}_{\mathbf{\mathbf{z}}}\mathbf{h}:&=&\frac{1}{\sqrt{k}}\mathbf{P}_{\bm{\Omega}}\mathbf{F}^{-1}\hat{\mathbf{Z}}\mathbf{F}\mathbf{h}=\frac{1}{\sqrt{k}}\mathbf{P}_{\bm{\Omega}}\mathbf{F}^{-1}\hat{\mathbf{H}}\mathbf{F}\mathbf{z}\\
&=&\frac{1}{\sqrt{k}}\mathbf{P}_{\bm{\Omega}}\mathbf{F}^{-1}\hat{\mathbf{H}}\mathbf{F}_1\mathbf{x}+\frac{1}{\sqrt{k}}\mathbf{P}_{\bm{\Omega}}\mathbf{F}^{-1}\hat{\mathbf{H}}\mathbf{F}_2\mathbf{y}.
\end{eqnarray*}

\noindent 
Based on this, we have 
\begin{eqnarray*}
\left\|\mathbf{V}_{\mathbf{z}}\mathbf{h}\right\|_2^2&=&\left\|\frac{1}{\sqrt{k}}\mathbf{P}_{\bm{\Omega}}\mathbf{F}^{-1}\hat{\mathbf{H}}\mathbf{F}_1\mathbf{x}+\frac{1}{\sqrt{k}}\mathbf{P}_{\bm{\Omega}}\mathbf{F}^{-1}\hat{\mathbf{H}}\mathbf{F}_2\mathbf{y}\right\|_2^2\\&=&\|\bm{\Phi}\mathbf{h}\|_2^2+\frac{1}{k}\|\mathbf{P}_{\bm{\Omega}}\mathbf{F}^{-1}\hat{\mathbf{H}}\mathbf{F}_2\mathbf{y}\|^2_2+\frac{1}{k}\left(\left(\mathbf{P}_{\bm{\Omega}}\mathbf{F}^{-1}\hat{\mathbf{H}}\mathbf{F}_1\mathbf{x}\right)^{\text{H}}\mathbf{P}_{\bm{\Omega}}\mathbf{F}^{-1}\hat{\mathbf{H}}\mathbf{F}_2\mathbf{y}\right).
\end{eqnarray*}
Notice that
\begin{eqnarray}
\frac{1}{k}\mathbb{E}\|\mathbf{P}_{\bm{\Omega}}\mathbf{F}^{-1}\hat{\mathbf{H}}\mathbf{F}_2\mathbf{y}\|^2_2&=&\frac{1}{k}\mathbb{E}\|\mathbf{P}_{\bm{\Omega}}\mathbf{F}^{-1}\hat{\mathbf{H}}\mathbf{F}\tilde{\mathbf{y}}\|^2_2\nonumber\\&=&\frac{1}{k} \sum_{\ell=n+1}^{n+k} \mathbb{E} \sum_{i, j=1}^{n+k} \tilde{y}_j \tilde{y}_i h_{\ell -j} h_{\ell - i}=\frac{1}{k} \sum_{\ell=n+1}^{n+k} \sum_{i=n+1}^{n+k}\left|h_{\ell- i}\right|^2=c_1(n,k)\|\boldsymbol{h}\|_2^2,\label{sq}
\end{eqnarray}
where $\tilde{y}_i=0,i=1,2,\cdots,n$, $h_{i}=h_{n+k+i},i=-n-k+1,-n-k,\cdots,0$ and $c_1(n,k)$ is a constant depending on the $k$ and $n$. Specifically, 
\begin{eqnarray*}
\frac{1}{k} \sum_{\ell=n+1}^{n+k} \sum_{i=n+1}^{n+k}\left|h_{\ell- i}\right|^2&=&\frac{1}{k} \sum_{\ell=n+1}^{n+k} \sum_{i=1}^{n+k}\left|h_{\ell- i}\right|^2-\frac{1}{k} \sum_{\ell=n+1}^{n+k} \sum_{i=1}^{n}\left|h_{\ell- i}\right|^2\\
&\geq&\frac{1}{k}\left(k\|\mathbf{h}\|^2_2-n\|\mathbf{h}\|^2_2\right).
\end{eqnarray*}
So $c_1(n,k)\geq\frac{k-n}{k}$. 

As a result, $\delta_s$ can be bounded by the two terms as followings: 
\begin{eqnarray*}
\delta_s&=&\sup _{\boldsymbol{h} \in\mathbf{D}}\left|\frac{1}{k}\|\mathbf{P}_{\bm{\Omega}}\mathbf{F}^{-1}\hat{\mathbf{H}}\mathbf{F}_2\mathbf{y}\|^2_2+\frac{1}{k}\left(\left(\mathbf{P}_{\bm{\Omega}}\mathbf{F}^{-1}\hat{\mathbf{H}}\mathbf{F}_1\mathbf{x}\right)^{\text{H}}\mathbf{P}_{\bm{\Omega}}\mathbf{F}^{-1}\hat{\mathbf{H}}\mathbf{F}_2\mathbf{y}\right) -c_1(n,k)\|\boldsymbol{h}\|_2^2        \right|\\
&\leq&\underbrace{\sup _{\boldsymbol{h} \in\mathbf{D}}\left|\frac{1}{k}\|\mathbf{P}_{\bm{\Omega}}\mathbf{F}^{-1}\hat{\mathbf{H}}\mathbf{F}_2\mathbf{y}\|^2_2-\frac{1}{k}\mathbb{E}\|\mathbf{P}_{\bm{\Omega}}\mathbf{F}^{-1}\hat{\mathbf{H}}\mathbf{F}_2\mathbf{y}\|^2_2\right|}_{(1)}+\underbrace{\sup _{\boldsymbol{h} \in\mathbf{D}}\left|\frac{1}{k}\left(\mathbf{P}_{\bm{\Omega}}\mathbf{F}^{-1}\hat{\mathbf{H}}\mathbf{F}_1\mathbf{x}\right)^{\text{H}}\mathbf{P}_{\bm{\Omega}}\mathbf{F}^{-1}\hat{\mathbf{H}}\mathbf{F}_2\mathbf{y}\right|}_{(2)}.
\end{eqnarray*}
\noindent
The first term (1) in the inequality above can be bounded by using the Chaos process techniques. Define $\mathbf{V}_{\mathbf{h}}:=\frac{1}{\sqrt{k}}\mathbf{P}_{\bm{\Omega}}\mathbf{F}^{-1}\hat{\mathbf{H}}\mathbf{F}_2$, and prove that 
\begin{eqnarray}\label{chaos}
\sup _{\boldsymbol{h} \in\mathbf{D}}\left|\|\mathbf{V}_{\mathbf{h}}\mathbf{y}\|^2_2-\mathbb{E}\|\mathbf{V}_{\mathbf{h}}\mathbf{y}\|^2_2\right|:=\delta_{s_1}\leq\frac{\delta}{2},
\end{eqnarray}
with probability at least $1-\eta$, if $k>\max\left\{\frac{16c(\sqrt{\delta}+2)^2\log(\frac{1}{\eta})}{\delta^2},c(n,\delta)\right\}$, where $\delta\in(0,1)$, and $c$ is a constant. 

For the last term (2), by using the Chebyshev inequality, if $k\geq\frac{8c\|\mathbf{x}\|_2}{\delta\sqrt{\eta}},$
\begin{eqnarray}\label{cheby}
\sup _{\boldsymbol{h} \in\mathbf{D}}\left|\frac{1}{k}\left(\mathbf{P}_{\bm{\Omega}}\mathbf{F}^{-1}\hat{\mathbf{H}}\mathbf{F}_1\mathbf{x}\right)^{\text{H}}\mathbf{P}_{\bm{\Omega}}\mathbf{F}^{-1}\hat{\mathbf{H}}\mathbf{F}_2\mathbf{y}\right|\leq\frac{\delta}{2},
\end{eqnarray}
with probability at least $1-\eta$. 

The proof of \eqref{chaos}, \eqref{cheby} can be found in Appendix. Combing \eqref{chaos}, \eqref{cheby}, and the definition of $\delta_s$, Theorem \ref{mainth} can bound the norm of $\frac{1}{\sqrt{k}}\mathbf{L}_1\mathbf{h}.$  

\begin{theorem}\label{mainth}
 Let the background information $\mathbf{y}=\{y_1,y_2,\cdots,y_n\}$
 be a random vector with entries that satisfy the normal distribution. If, for any $n$, and $\eta,\delta \in(0,1)$,
$$
k>\max\left\{\frac{n}{1-\delta}, \frac{8c\|\mathbf{x}\|_2}{\delta\sqrt{\eta}},\frac{16c(\sqrt{\delta}+2)^2\log(\frac{1}{\eta})}{\delta^2},c(n,\delta)\right\},
$$
 where $c$ is a constant and $c(n,\delta)$ is a constant depending on the $n$ and $\delta$, then with probability at least $1-\eta$, 
 \begin{eqnarray*}
 (c_1(n,k)-\delta)\|\mathbf{h}\|_2^2+\|\mathbf{\Phi}\mathbf{h}\|^2_2\leq\frac{1}{k}\|\mathbf{L}_1\mathbf{h}\|^2_2\leq(c_1(n,k)+\delta)\|\mathbf{h}\|_2^2+\|\mathbf{\Phi}\mathbf{h}\|^2_2, \mathbf{h}\in\mathbf{D},
\end{eqnarray*}
where $c_1(n,k)$ is the constant in \eqref{sq}, and $c_1(n,k)>\delta$. 
\end{theorem}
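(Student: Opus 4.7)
The plan is to establish the two-sided bound by combining the two high-probability estimates already in place, namely the chaos-process bound \eqref{chaos} and the Chebyshev-type cross-term bound \eqref{cheby}, via a union bound over a single random draw of $\mathbf{y}$. The starting point is the orthogonal decomposition already derived in the excerpt:
\begin{equation*}
\tfrac{1}{k}\|\mathbf{L}_1\mathbf{h}\|_2^2 \;=\; \|\mathbf{V}_{\mathbf{z}}\mathbf{h}\|_2^2 \;=\; \|\bm{\Phi}\mathbf{h}\|_2^2 \;+\; \tfrac{1}{k}\|\mathbf{P}_{\bm{\Omega}}\mathbf{F}^{-1}\hat{\mathbf{H}}\mathbf{F}_2\mathbf{y}\|_2^2 \;+\; \tfrac{2}{k}\Re\bigl\langle \mathbf{P}_{\bm{\Omega}}\mathbf{F}^{-1}\hat{\mathbf{H}}\mathbf{F}_1\mathbf{x},\ \mathbf{P}_{\bm{\Omega}}\mathbf{F}^{-1}\hat{\mathbf{H}}\mathbf{F}_2\mathbf{y}\bigr\rangle,
\end{equation*}
together with the expectation identity \eqref{sq}, which identifies the expected value of the middle term as $c_1(n,k)\|\mathbf{h}\|_2^2$. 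Subtracting this expectation shows that the quantity
$\bigl|\tfrac{1}{k}\|\mathbf{L}_1\mathbf{h}\|_2^2 - c_1(n,k)\|\mathbf{h}\|_2^2 - \|\bm{\Phi}\mathbf{h}\|_2^2\bigr|$
is bounded above by $\delta_{s_1}+\delta_{s_2}$, where $\delta_{s_1}$ is exactly the quantity in \eqref{chaos} and $\delta_{s_2}$ is the cross-term supremum in \eqref{cheby}.

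The next step is to apply the two stated bounds with thresholds halved. First I would invoke \eqref{chaos} with probability parameter $\eta/2$, so that on an event $E_1$ of probability at least $1-\eta/2$ we have $\delta_{s_1}\le \delta/2$, provided $k$ exceeds the corresponding threshold $\max\{16c(\sqrt{\delta}+2)^2\log(2/\eta)/\delta^2,\, c(n,\delta)\}$. Then I would invoke \eqref{cheby}, again with parameter $\eta/2$, to obtain $\delta_{s_2}\le \delta/2$ on an event $E_2$ of probability at least $1-\eta/2$, provided $k \ge 8c\|\mathbf{x}\|_2/(\delta\sqrt{\eta/2})$. A union bound yields $\mathbb{P}(E_1\cap E_2)\ge 1-\eta$, and on this intersection
\begin{equation*}
\Bigl|\tfrac{1}{k}\|\mathbf{L}_1\mathbf{h}\|_2^2 - c_1(n,k)\|\mathbf{h}\|_2^2 - \|\bm{\Phi}\mathbf{h}\|_2^2\Bigr| \;\le\; \delta, \qquad \forall \mathbf{h}\in\mathbf{D}.
\end{equation*}
Rearranging gives the required two-sided inequality. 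The extra assumption $k>n/(1-\delta)$ ensures, via the lower bound $c_1(n,k)\ge (k-n)/k$ derived right after \eqref{sq}, that $c_1(n,k)>\delta$, so that the constant $c_1(n,k)-\delta$ on the left-hand side is genuinely positive and the lower inequality is nontrivial.

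The main obstacle is thus entirely concentrated in establishing \eqref{chaos}, which is the uniform concentration of the random quadratic form $\|\mathbf{V}_{\mathbf{h}}\mathbf{y}\|_2^2$ around its expectation over the Fourier-box set $\mathbf{D}$. This is a suprema-of-chaos estimate of the Krahmer--Mendelson--Rauhut type, and requires controlling the $\gamma_2$ functional of the family $\{\mathbf{V}_{\mathbf{h}}:\mathbf{h}\in\mathbf{D}\}$ in operator and Frobenius norms. Unlike the standard partial random circulant setting, the matrix $\mathbf{L}_1$ here mixes the random block $\mathbf{y}$ with the deterministic block $\mathbf{x}$, so the usual bounds must be adapted by isolating the $\mathbf{y}$-dependent part $\mathbf{F}_2\mathbf{y}$ and exploiting the Fourier-domain constraint $|\mathbf{F}_i^{\mathrm{H}}\mathbf{h}|\le 2$ to replace the sparsity/$\ell_1$ ball geometry used in classical RIP analyses. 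The Chebyshev bound \eqref{cheby} is by comparison routine, obtained by computing the second moment of the cross term (linear in the mean-zero Gaussian $\mathbf{y}$) and applying Markov's inequality to its supremum via a chaining or union-bound argument over a finite $\epsilon$-net of $\mathbf{D}$.
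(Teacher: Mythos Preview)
Your proposal matches the paper's approach exactly: decompose $\tfrac{1}{k}\|\mathbf{L}_1\mathbf{h}\|_2^2 - c_1(n,k)\|\mathbf{h}\|_2^2 - \|\bm{\Phi}\mathbf{h}\|_2^2$ into the chaos term \eqref{chaos} and the cross term \eqref{cheby}, bound each by $\delta/2$ with high probability (the paper uses the Krahmer--Mendelson--Rauhut suprema-of-chaos theorem for the former and a Chebyshev bound for the latter, as you anticipate), and combine to obtain $\delta_s\le\delta$, with $k>n/(1-\delta)$ invoked precisely to force $c_1(n,k)\ge (k-n)/k>\delta$. Your explicit $\eta/2$ union bound and the factor $2\Re$ on the cross term are in fact more careful than the paper's own presentation, which writes a single inner product and simply asserts the combined probability is at least $1-\eta$; likewise your remark that uniformity of the Chebyshev step over $\mathbf{D}$ requires a net/chaining argument is a valid caveat that the paper's Appendix~C glosses over.
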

We can see that $\|\mathbf{L}_1\mathbf{h}\|^2_2=\bm{0}$ if and only if $\mathbf{h}=\bm{0}$. Under this condition, we can prove that $\mathbf{C}_1\cap\mathbf{D}=\{\bm{0}\}$. For $\mathbf{h}\in\mathbf{C}_2\setminus\mathbf{D}$, we have 
$$\left|\mathbf{F}_i^{\text{H}}\left(\frac{\mathbf{h}}{\|\mathbf{h}\|_2}\right)\right|^2\leq\left|\mathbf{F}_i^{\text{H}}\mathbf{h}\right|^2\leq4,~i=1,\cdots,n+k,$$ which implies that $\frac{\mathbf{h}}{\|\mathbf{h}\|_2}\in\mathbf{D}$. Thus, we can generalize Theorem \ref{mainth} from $\mathbf{D}$ to $\mathbf{C}_2$, and Theorem \ref{Convetheo2} is proven. 
\begin{remark}
To maintain that \eqref{bkmodel} and \eqref{conbkmodel} have the same solution, more background information is usually required than to ensure the solution of \eqref{bkmodel} is uniquely defined. This can also be verified through numerical tests in subsection \ref{Pt}, as shown in Theorem \ref{mainth}.
\end{remark}

Now that we have shown that the solution to \eqref{conbkmodel} is equivalent to \eqref{bkmodel}, we can use a series of convex methods to solve \eqref{conbkmodel} and find the ground truth. For consistency, we propose a method called the Background Douglas-Rachford (CBDR) method. Similarly, we define
\begin{eqnarray*}
	&\mathbf{A}:=\{\mathbf{z}\in\mathbb{R}^{n+k}:|\mathbf{F}^{\mathrm{H}}\mathbf{ z}|\leq\mathbf{ b}^{\frac{1}{2}} \}&\\[5pt]
	&\mathbf{B}:=\{\mathbf{z}\in\mathbb{R}^{n+k}:z_{n+i}= y_i,~i=1,\cdots, k\}.&
\end{eqnarray*}
And the details of the CBDR are shown in Algorithm \ref{A2}.
\begin{algorithm}[!htb] 
	\renewcommand{\algorithmicrequire}{\textbf{Input:}}
	\renewcommand\algorithmicensure {\textbf{Output:} }
	\caption{ The Convex Background Douglas Rachford method} 
	\label{A2} 
	\begin{algorithmic}[1] 
		\REQUIRE $\{\mathbf{b},\mathbf{y},\varepsilon,\mathrm{T}\}$ ~~\\ 
		$\mathbf{b}$: the intensity only Fourier measurement.\\
		$\mathbf{y}$: the background information.\\
		$\varepsilon$:~~the allowed error bound.\\
		$\text{T}$:~the maximum allowed iteration.\\
		\ENSURE ~~\\ 
		$\overline{\mathbf{x}}$: an estimation of the real signal $\mathbf{x}$.\\
		\vskip 4mm
		\hrule
		\vskip 2mm
	\end{algorithmic}
	$\mathbf{Initialization:}$
	\begin{algorithmic}[1]
		\STATE $\mathbf{z}^0=\mathbb{P}_{\mathbf{ B}}(\frac{1}{n+k}\mathbf{F}\cdot\mathbf{b}^{\frac{1}{2}})$,  $\mathbf{B}=\{\mathbf{z}:z_{n+i}=y_i,~i=1,\cdots,k\}$.\\
		\STATE $p=1$.
	\end{algorithmic}
	$\mathbf{General~step}$
	\begin{algorithmic}[1]
		\STATE  $\tilde{\mathbf{z}}^{p-1}=\mathbb{P}_{\mathbf{ A}}(\mathbf{z}^{p-1})$, where $\mathbf{A}=\{\mathbf{z}:|\mathbf{F}^{\mathrm{H}}\mathbf{ z}|\leq\mathbf{ b}^{\frac{1}{2}} \}$\\
		\STATE $z^{p}_t=\tilde{z}^{p-1}_t, t=1,\cdots,n$,\\ $z^{p}_{n+t}=z^{p-1}_{n+t}-\tilde{z}^{p-1}_{n+t}$+$y_t,t=1,\cdots,k$.\\
		\IF{$\big|\big|\mathbf{z}^{p}-\mathbf{ z}^{p-1}\big|\big|_2\leq\varepsilon$ \textbf{or} $p=\textrm{T}+1$}
		\STATE $\overline{\mathbf{z}}=\mathbf{z}^{p}$.\\
				\STATE $\overline{\mathbf{x}}=[\overline{z}_1,\overline{z}_2,\cdots,\overline{z}_n]^{\text{T}}$ 
		\STATE Break.
		\ENDIF
		\STATE $p = p+1$
	\end{algorithmic}
\end{algorithm}

Compared to Algorithm \ref{A1}, the main difference in Algorithm \ref{A2} is the operator $\mathbb{P}_{\mathbf{A}}(\cdot)$. In Algorithm \ref{A2}, $\tilde{\mathbf{z}}^{p-1}$ is the projection of $\mathbf{z}^{p-1}$ onto $\mathbf{A}$. We define $\hat{\tilde{\mathbf{z}}}^{p-1}$ and $\hat{\mathbf{z}}^{p-1}$ as the Fourier transforms of $\tilde{\mathbf{z}}^{p-1}$ and $\mathbf{z}^{p-1}$, respectively, which have the following property:
\begin{eqnarray}
\tilde{z}^{p-1}_i=\left\{
\begin{array}{c}
     b_i^{\frac{1}{2}}\frac{\tilde{z}^{p-1}_i}{|\tilde{z}^{p-1}_i|},~\text{if}~|\hat{\tilde{z}}^{p-1}_i|\geq b_i^{\frac{1}{2}}\\
     ~~\\
    |\hat{\tilde{z}}^{p-1}_i|\frac{\tilde{z}^{p-1}}{|\tilde{z}^{p-1}|},~\text{if}~|\hat{\tilde{z}}^{p-1}_i|< b_i^{\frac{1}{2}} 
\end{array}
\right..
\end{eqnarray}

\noindent Then, the convergence of the CBDR method can be proved by the Theorem by using the theoretical results in \cite{Bauschke2002Phase}.
\begin{theorem}
Suppose that $\mathbf{A}\cap\mathbf{B}\neq\varnothing$. Then the sequence $\left\{\mathbf{z}^p\right\}$ generated by the Algorithm \ref{A2} has the property that $\|\mathbf{z}^p-\overline{\mathbf{z}}\|_2\longrightarrow0$ as $p$ increases and $\mathbb{P}_\mathbf{B}\left(\left\{\mathbf{z}^p\right\}\right)$ $\rightarrow\mathbb{P}_\mathbf{B}(\overline{\mathbf{z}}) \in \mathbf{A} \cap \mathbf{B}$, where $\overline{\mathbf{z}}$ is one of the fixed points of the CBDR method.
\end{theorem}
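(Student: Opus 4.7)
The plan is to identify Algorithm \ref{A2} with the classical Douglas--Rachford (DR) iteration applied to two closed convex sets, and then invoke the standard convergence theory of \cite{Bauschke2002Phase}. First I would verify the structural hypotheses: $\mathbf{B}$ is an affine subspace (hence closed convex), and the relaxed set $\mathbf{A}=\{\mathbf{z}:|\mathbf{F}_i^{\mathrm{H}}\mathbf{z}|\leq b_i^{1/2},\ i=1,\ldots,n+k\}$ is the intersection of sublevel sets of the convex seminorms $\mathbf{z}\mapsto|\mathbf{F}_i^{\mathrm{H}}\mathbf{z}|$, so it is closed and convex as well. By hypothesis $\mathbf{A}\cap\mathbf{B}\neq\varnothing$. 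Crucially, since $\mathbf{A}$ is now closed convex, the projection $\mathbb{P}_{\mathbf{A}}$ is single-valued, with the explicit Fourier-domain formula given immediately before the theorem statement.

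Next I would rewrite the ``General step'' of Algorithm \ref{A2} in operator form. The very same algebraic manipulation already carried out for the BDR method in the previous subsection (combining the reflectors with the affine projector $\mathbb{P}_{\mathbf{B}}$) shows, without any change, that
\[
\mathbf{z}^{p}=\mathbb{T}(\mathbf{z}^{p-1}),\qquad \mathbb{T}:=\tfrac{1}{2}\bigl(\mathbb{R}_{\mathbf{B}}\mathbb{R}_{\mathbf{A}}+\mathbb{I}\bigr),
\]
where $\mathbb{R}_{\mathbf{A}}=2\mathbb{P}_{\mathbf{A}}-\mathbb{I}$ and $\mathbb{R}_{\mathbf{B}}=2\mathbb{P}_{\mathbf{B}}-\mathbb{I}$. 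With both sets convex, the projectors are firmly nonexpansive, the reflectors are nonexpansive, their composition $\mathbb{R}_{\mathbf{B}}\mathbb{R}_{\mathbf{A}}$ is nonexpansive, and therefore $\mathbb{T}$ is firmly nonexpansive --- this is the key structural gain over the BDR setting.

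Third, I would close the argument by the Krasnosel'ski{\u{\i}}--Mann / Bauschke--Borwein theory for DR on two convex sets. Since $\mathbf{A}\cap\mathbf{B}\neq\varnothing$ the fixed-point set $\mathrm{Fix}(\mathbb{T})$ is nonempty, so the iterates $\{\mathbf{z}^p\}$ of the firmly nonexpansive map $\mathbb{T}$ converge (weakly in general, and strongly in our finite-dimensional setting) to some $\overline{\mathbf{z}}\in\mathrm{Fix}(\mathbb{T})$; this is the first claim $\|\mathbf{z}^p-\overline{\mathbf{z}}\|_2\to 0$. The standard DR fixed-point characterization then yields $\mathbb{P}_{\mathbf{B}}(\overline{\mathbf{z}})\in \mathbf{A}\cap\mathbf{B}$, and since $\mathbb{P}_{\mathbf{B}}$ is continuous (indeed $1$-Lipschitz and affine), passing to the limit under $\mathbb{P}_{\mathbf{B}}$ gives $\mathbb{P}_{\mathbf{B}}(\mathbf{z}^p)\to\mathbb{P}_{\mathbf{B}}(\overline{\mathbf{z}})$.

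The main --- and essentially the only --- item requiring care is the verification in step two that the componentwise update displayed in Algorithm \ref{A2} really coincides with the canonical DR operator, together with the fact that $\mathbb{P}_{\mathbf{A}}$ on the convex $\mathbf{A}$ equals the explicit soft-clipping formula stated in the excerpt. Once that identification is made, no new analysis is needed: the conclusion is a direct application of the DR convergence results for two closed convex sets in \cite{Bauschke2002Phase}, and in particular no probabilistic or F-RIP-type argument is required here (those were used to justify the reformulation of \eqref{bkmodel} as \eqref{conbkmodel}, not the convergence of the solver itself).
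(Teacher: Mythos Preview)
Your proposal is correct and follows essentially the same approach as the paper: the paper gives no proof at all beyond the single sentence ``the convergence of the CBDR method can be proved by the Theorem by using the theoretical results in \cite{Bauschke2002Phase},'' and what you have written is precisely the verification that this citation applies (convexity of $\mathbf{A}$ and $\mathbf{B}$, identification of the update rule with the DR operator, firm nonexpansiveness of $\mathbb{T}$, and the standard fixed-point/shadow-sequence conclusions). Your added care in checking that the componentwise update in Algorithm~\ref{A2} coincides with $\tfrac{1}{2}(\mathbb{R}_{\mathbf{B}}\mathbb{R}_{\mathbf{A}}+\mathbb{I})$ and that $\mathbb{P}_{\mathbf{B}}(\overline{\mathbf{z}})\in\mathbf{A}\cap\mathbf{B}$ is exactly what is needed to make the citation rigorous.
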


When $\mathbf{z}\in\mathbb{R}^n$, we can further restrict the feasible set $\mathbf{A}$ and search for signals of interest by performing two parallel tasks. Specifically, we apply CBDR method on $\{\mathbf{A}_1,\mathbf{B}\}$, and $\{\mathbf{A}_2,\mathbf{B}\}$ simultaneously, where $$\mathbf{A}_1:=\{\mathbf{z}\in\mathbb{R}^{n+k}:|\mathbf{F}^{\mathrm{H}}_i\mathbf{ z}|\leq b^{\frac{1}{2}}_i,i=2,3,\cdots,n+k, \sum_{i=1}^{n+k}z_i=\sqrt{b_1}\},$$
and 
$$\mathbf{A}_2:=\{\mathbf{z}\in\mathbb{R}^{n+k}:|\mathbf{F}^{\mathrm{H}}_i\mathbf{ z}|\leq b^{\frac{1}{2}}_i,i=2,3,\cdots,n+k, \sum_{i=1}^{n+k}z_i=-\sqrt{b_1}\}.$$ The final estimation is the one with the least measurement error.

\section{Numerical test}
In this section, we will compare the BDR method with the PGD method. In all tests, the background information is generated using the Gaussian distribution. The tests were conducted on a Dell desktop with a 2.70 GHz Intel Core i7 processor and 32GB DDR3 memory.

The relative error is defined as
\begin{eqnarray}
e=\frac{||\overline{\mathbf{x}}-\mathbf{x}||_2}{||\mathbf{x}||_2},
\end{eqnarray} 
where $\overline{\mathbf{x}}$ is the estimation for the ground truth $\mathbf{x}$.

The measurement error is defined as $$me=\frac{\left\||\mathbf{F}^{\text{H}}\overline{\mathbf{z}}|^2-\mathbf{b}\right\|_2}{\|\mathbf{b}\|_2},$$ where $\overline{\mathbf{z}}=[\overline{\mathbf{x}};\mathbf{y}]$.
The length of the 1-D signals used in this test is 100. The size of the 2-D images is $256\times 256$. In all numerical tests, the Fourier measurements have no oversampling, namely $m_i=n_i+k_i,i=1,2$.
\subsection{Phase transition of different methods}\label{Pt}
First, we analyze the phase transition where the recovery rate is above $90$\% for the PGD, CBDR and BDR methods. The signal of interest is $\mathbf{x}\overset{\text{i.i.d}}{\sim}\mathcal{N}(\mathbf{0},\mathbf{I})$. The length of the signal ranges from $100$ to $1000$ with an interval of $5$. For each signal, measurements $\mathbf{b}$ are generated by \eqref{bkmodel} with $k/n$ ranging from $1$ to $7$ with an interval of $0.1$. Then, we apply the PGD, CBDR and BDR methods to recover the signal. A relative error below $10^{-5}$ is considered a successful recovery. The test are repeated $100$ times for each $n$ and $k/n$. The results are shown in Fig \ref{sf2}. 

In Fig \ref{sf2}, the figures from left to right represent the results calculated by the PGD, CBDR and BDR methods. The corresponding red line is the phase transition where the recovery rate is above $90$\%. We observe that the BDR method requires the least background information to achieve the highest recovery rate, followed by the CBDR method. This demonstrates that the DR method is more likely to find the ground truth when the background information is sufficient. Additionally, we depict the phase transition of the BDR method where the recovery rate is above $99$\% with a dashed red line. The mean of the dashed red line is $2.76$. These results reflect the superiority of BDR method which can recover the signal of interest when $k/n$ is approximately equal to the theoretical bound of $3$.

Although CBDR method is complete in theory, it still requires $k/n\approx5.5$ for a higher recovery rate in practice. This is because convexity may introduce some spurious points into the feasible solutions, and more background information is necessary to mitigate this issue. However, CBDR's performance is still better than that of the PGD method. The BDR and CBDR methods both use the Douglas-Rachford method as their backbone. Therefore, in the remaining sections, we will focus only on the BDR method, which can achieve better performance.
\begin{figure}
	\centering
	\includegraphics[width=\textwidth]{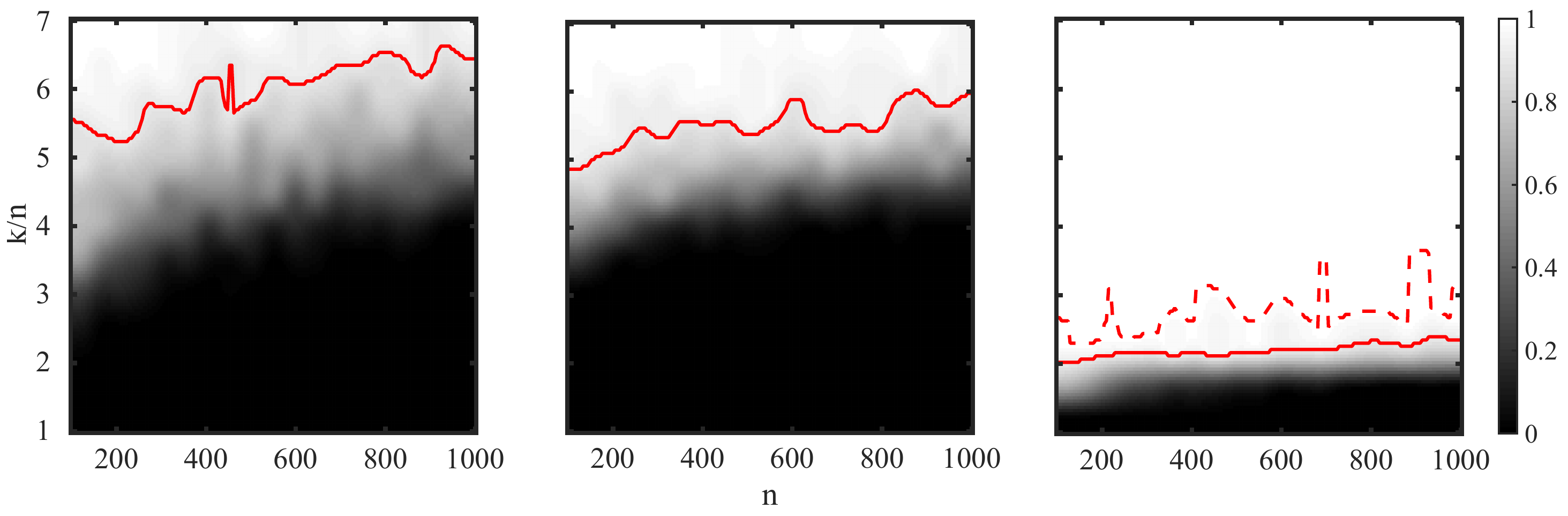}
	\caption{The phase transition of the PGD, CBDR and BDR methods. The figures from left to right represent the results calculated by the PGD, CBDR and BDR methods. The corresponding red line is the phase transition where the recovery rate is above $90$\%. The dashed red line depicts the phase transition where the recovery rate is above $99$\% for the BDR method.}
	\label{sf2}
\end{figure}
\subsection{The effect of the length of signal} 
First, we consider 1-D signals of three different types in the test.
\begin{itemize}
	\item Type 1: $\mathbf{x}\overset{\text{i.i.d}}{\sim}\mathcal{N}(\mathbf{0},\mathbf{I})$
	\item Type 2:
	$f(t)=\text{cos}(39.2\pi t-12\text{sin}2\pi t)+\text{cos}(85.4\pi t+12\text{sin}2\pi t),~t\in(0,1)$
 \item Type 3: the annual mean global surface temperature anomaly which can be seen in Fig.\ref{ano}
\end{itemize}
\begin{figure}
	\centering
	\includegraphics[width=0.5\textwidth]{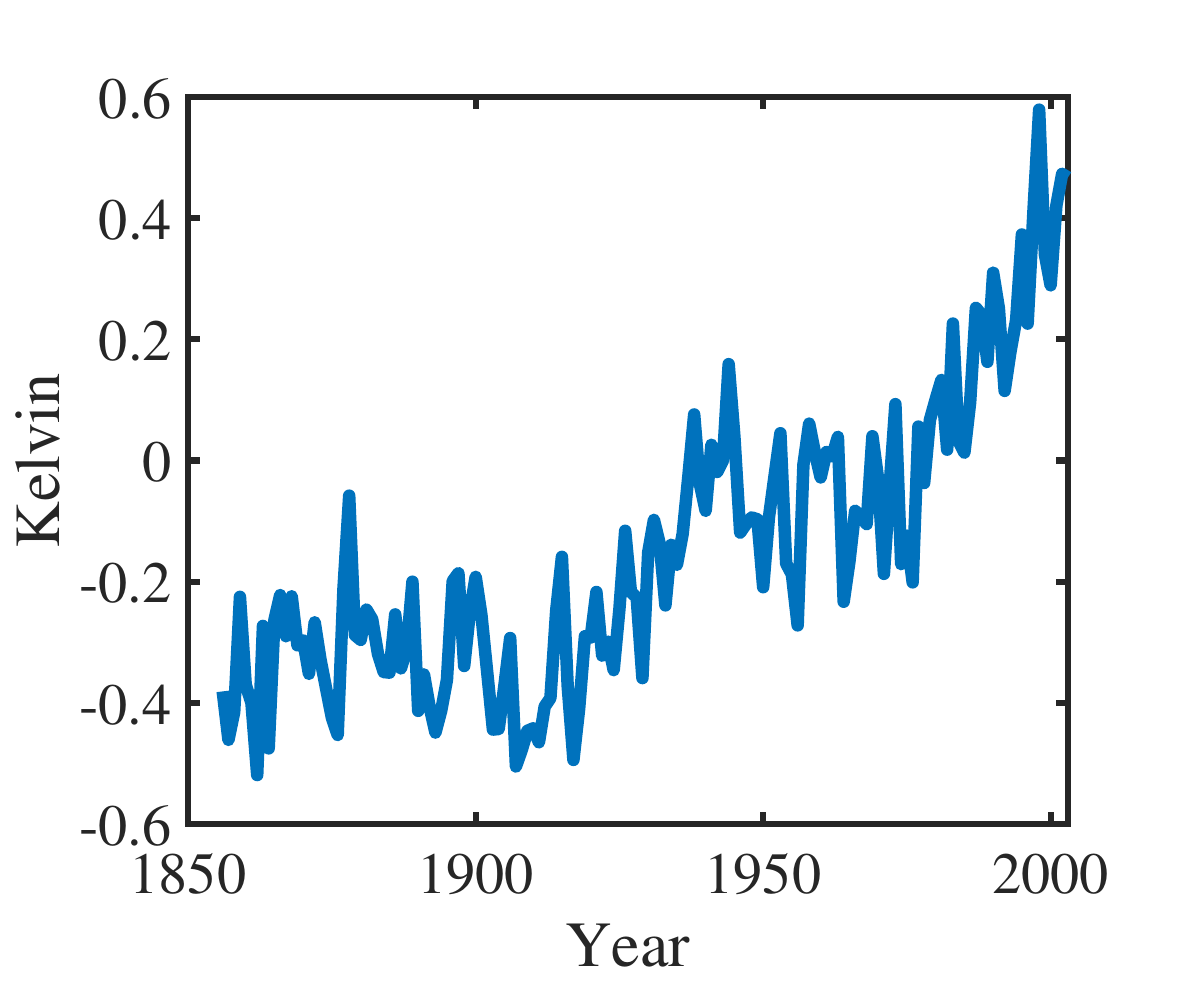}
	\caption{The annual mean global surface temperature anomaly.}
	\label{ano}
\end{figure}
Background information is generated for each signal, and its corresponding Fourier amplitude only spectrum is measured. $k/n$ ranges from $1$ to $7$ with an interval of $0.1$. At each $k/n$, the PR test is applied 100 times with different $\mathbf{y}$. The maximum iteration of both the BDR and PGD methods is 300. Success is judged if the relative error $e$ is below $10^{-5}$. The empirical recovery rate at each $k$ is calculated as the total successful recovery times divided by 100. Figure \ref{f2} shows the results of the test.
\indent
According to Figure \ref{f2}, the BDR method has a higher recovery rate than PGD. Specifically, when $k/n=3$, the BDR method achieves a recovery rate larger than 85\%, which is close to the theoretical bound of $3n-1$, while the recovery rate of PGD remains below 40\%.

In Figure \ref{1Dtest}, we observe that the signal recovered by the BDR method fits the ground truth well, while the PGD method fails to recover the outline of the signal. Moreover, the relative error and measurement error of the BDR method decrease rapidly, while the PGD method stagnates at a local minimum. Even when $k/n=2$, the BDR method still achieves a recovery rate of approximately 60\%, while the PGD method fails to recover any type of signal.

These results fully demonstrate the ability of BDR to alleviate the effect of the stagnation point. It is worth noting that type 1 signals (random signals) are harder to recover than type 2 (determined signals) and type 3 signals (signals with structure). However, in real-life applications, the signal of interest is usually not completely random, which suggests that the BDR method may perform well in practice.

\begin{figure}
	\centering
	\includegraphics[height = 2.4in,width=4.5in]{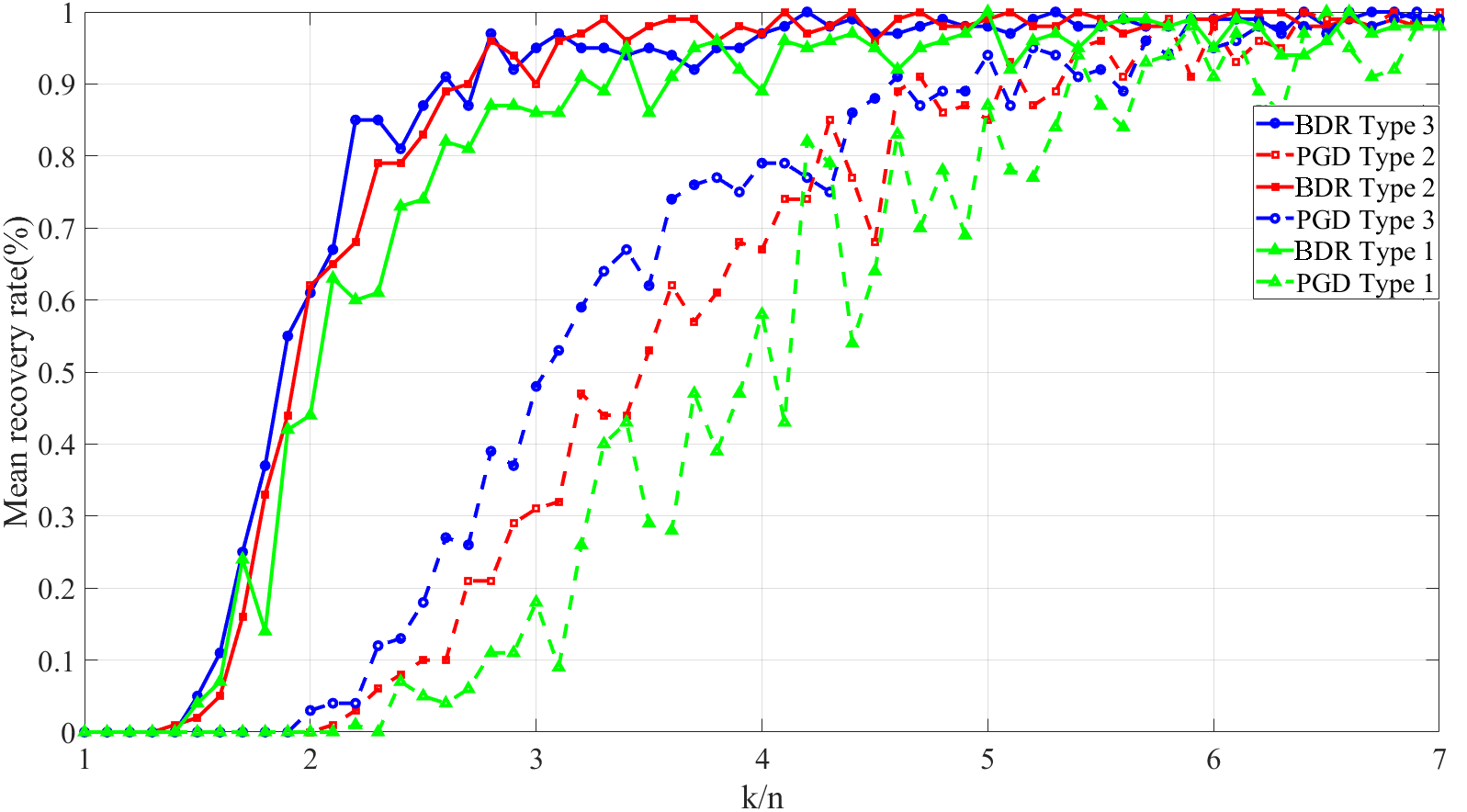}
	\caption{The mean recovery rate of BDR and PGD method. The signal has 3 different types. Type 1 is the random signal, type 2 is the harmonic signal, type 3 is the structure signal.}
	\label{f2}
\end{figure}
\begin{figure}
	\includegraphics[width = 1\textwidth]{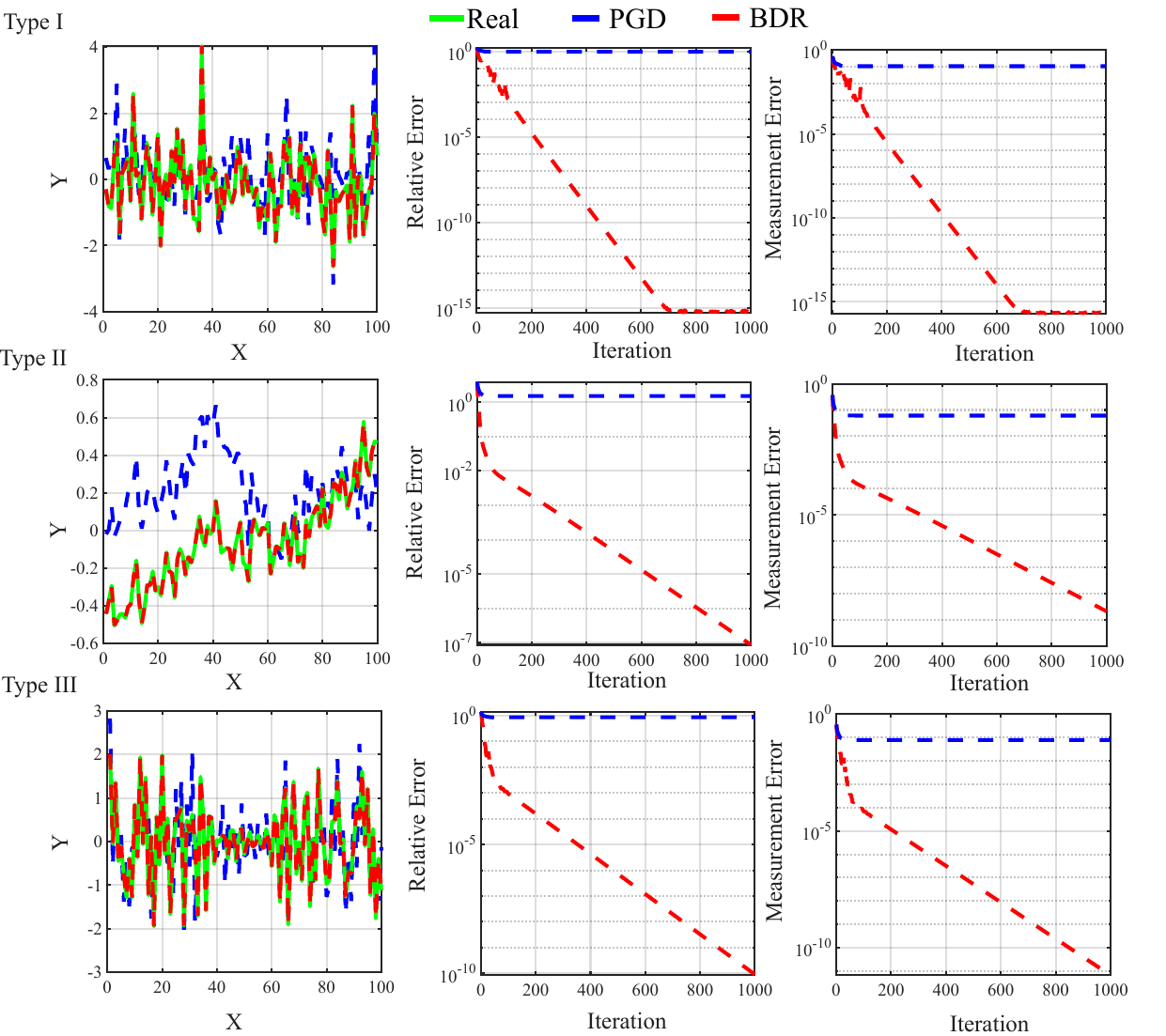}
	\centering
	\caption{Comparisons of BDR and PGD method when $k/n=3$. The first column displays the comparison between the signals recovered by the PGD method and the BDR method. The second column and last column shows the relative error and measurement error of the results in the first column respectively.}
	\label{1Dtest}
\end{figure}
\begin{figure*}[!htb]
	\centering
	\subfigure[Baboon]{\includegraphics[width=.157\textwidth]{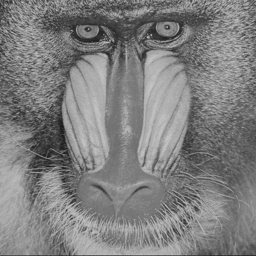}}
	\subfigure[Barbara]{\includegraphics[width=.157\textwidth]{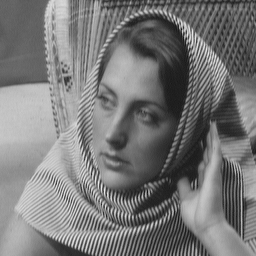}}
	\subfigure[Boat]{\includegraphics[width=.157\textwidth]{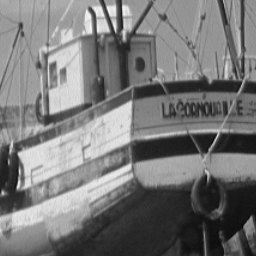}}
	\subfigure[Pepper]{\includegraphics[width=.157\textwidth]{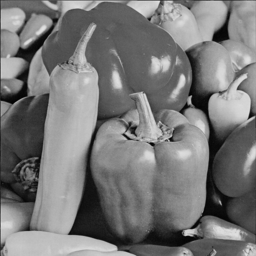}}
	\subfigure[Cameraman]{\includegraphics[width=.157\textwidth]{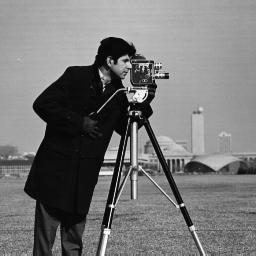}}
	\subfigure[House]{\includegraphics[width=.157\textwidth]{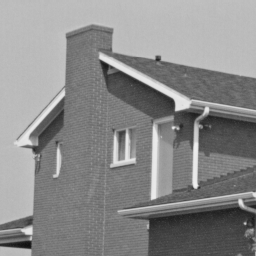}}
	\subfigure[E.~Coli]{\includegraphics[width=.157\textwidth]{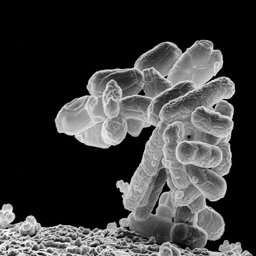}}
	\subfigure[Yeast]{\includegraphics[width=.157\textwidth]{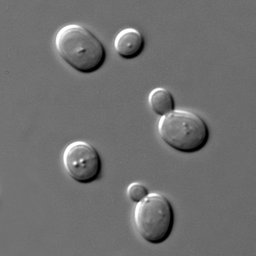}}
	\subfigure[Pollen]{\includegraphics[width=.157\textwidth]{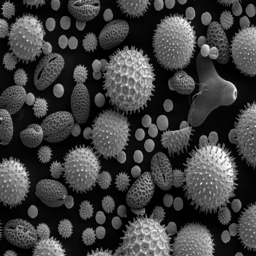}}
	\subfigure[Tadpole Galaxy]{\includegraphics[width=.157\textwidth]{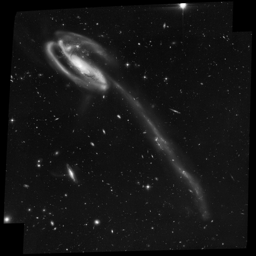}}
	\subfigure[Pillars of Creation]{\includegraphics[width=.157\textwidth]{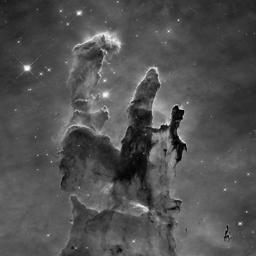}}
	\subfigure[Butterfly Nebula]{\includegraphics[width=.157\textwidth]{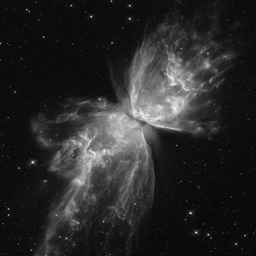}}
	\subfigure[Boston]{\includegraphics[width=.157\textwidth]{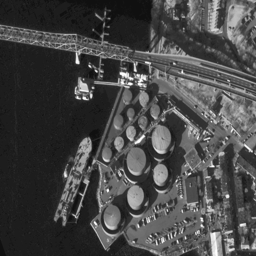}}
	\subfigure[Palm]{\includegraphics[width=.157\textwidth]{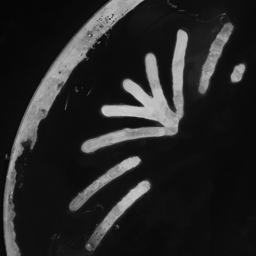}}
	\subfigure[Three Gorges]{\includegraphics[width=.157\textwidth]{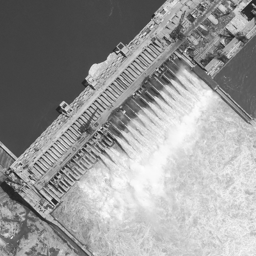}}
	\subfigure[Tokyo]{\includegraphics[width=.157\textwidth]{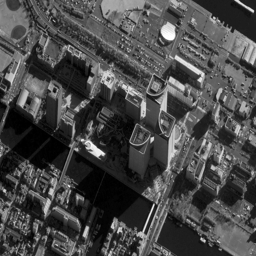}}
	\subfigure[Yokahoma]{\includegraphics[width=.157\textwidth]{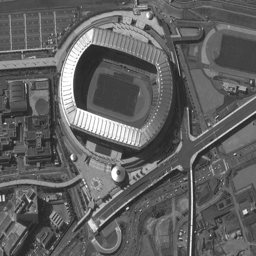}}
	\subfigure[Tucson]{\includegraphics[width=.157\textwidth]{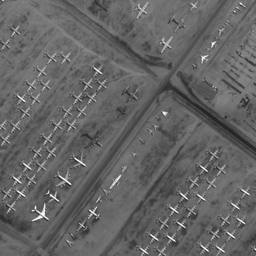}}
	\caption{We attempt to reconstruct 16 test images, including (a-f) natural images, (g-l) astronomy and microbe pictures from Wikipedia under public domain licenses, and (m-r) remote sensing pictures.}
	\label{fig:TestImages}
\end{figure*}

Next, we will test the ability of the BDR method to deal with the 2-D images. The index of peak signal to noise ratio(PSNR) and the structure similarity index(SSIM) are chosen as the criterion to judge the performance of recovery. The time cost by each methods is also recorded. The test pictures are shown in Figure \ref{fig:TestImages}. In each dimension, background information is $k/n$ folds of the object. At each $k/n$ ratio, we record the recovered images' PSNR, SSIM and time under 100 different background information. Each test is applied with 300 iterations. If the relative error is below $10^{-5}$, we stop the iterate in advance. All the images are in the center of the background information. The results are shown in Figure \ref{f3}.

From Figure \ref{f3}, we can find that the medium of the PSNR and SSIM of the BDR method is higher than the PGD method. Especially when $k/n=0.6$, which is mostly smaller than the uniqueness bound required in Theorem \ref{t9}, the medium of the PSNR for the BDR method is above 60dB, but the medium of PSNR for PGD is below 30dB. Figure \ref{f3} indicates that the BDR method can have a good performance when less background information is available. At the same time, we can also find that the variance of the PSNR and SSIM for the BDR method is smaller than PGD method. This also shows that the BDR method can be more numerical stable. With $k/n$ increasing, the time demanded by BDR method is more than PGD method. But when $k/n\leq1$, the time cost by both methods is nearly the same, but BDR method can perform better judging from the PSNR and SSIM.
\begin{figure}[!htb]
	\centering
	\includegraphics[width=0.90\textwidth]{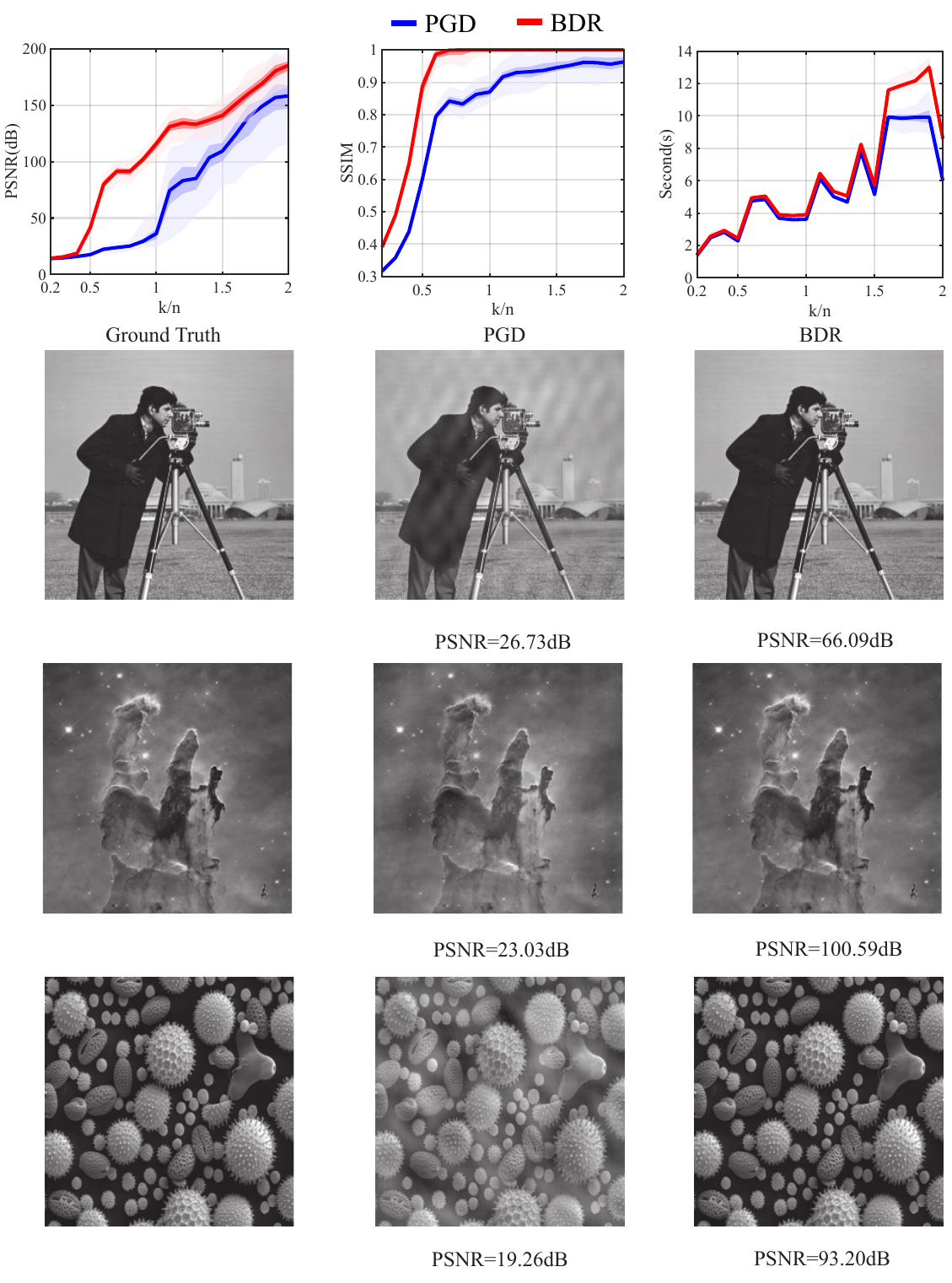}
	\caption{The comparison between PGD and BDR methods for recovering 2D images is presented. In the first row, we compare PSNR, BDR, and time for PGD and BDR. The thick line represents the median over 100 trials, the light area represents the minimum and maximum values, and the darker area indicates the 25th and 75th quantiles. In rows 2-4, we show the results recovered by PGD and BDR when $n/k=0.6$.}
	\label{f3}
\end{figure}
\subsection{The robustness test}
This subsection describes experiments to test the robustness of the BDR algorithm to the locations of the background information and the measurement noise.
  
Firstly, the influence of the locations of the background information is tested. The sample 'Baboon' is to be recovered with $k/n=2$. Seventeen different positions of Baboon are selected, from the left corner to the center in Figure \ref{locbias}(a). At each position, tests are replicated by BDR 10 times. The mean PSNR, mean SSIM, and the mean relative error at each position are recorded (see Figure \ref{locbias}(b), \ref{locbias}(c), and \ref{locbias}(d)). In Figure \ref{locbias}(b), \ref{locbias}(c), and \ref{locbias}(d), the coordinate of each pixel indicates the location of the left-top of the Baboon in the combined picture. Due to the symmetry of positions, only four-ninths parts are considered in Figure \ref{locbias}(a).

As seen from Figure \ref{locbias}(a), similar to the PGD method in \cite{Yuan_2019}, the locations of the background information also affect the performance of the BDR method. Specifically, from the left corner to the center, SSIM and PSNR of the recovered image gradually increase. However, compared to PGD in \cite{Yuan_2019}, the BDR method is less sensitive to the locations of the background information.

Next, experiments are applied to test the robustness of the algorithm to the measurement noise for 2-D pictures. The noise model is given by:
\begin{eqnarray*}
	\sqrt{b}_i = |\mathbf{F}^{\mathrm{H}}_i\mathbf{z}|+\varepsilon_i,i=1,\cdots,m,
\end{eqnarray*} 
where $\varepsilon_i$ is the Gaussian noise.

Five different single-channel pictures were selected as objects, and the BDR and PGD methods were used to recover these images from their noisy measurements. Table \ref{T1} shows the PSNR and SSIM of the recovered images.
The results in Table \ref{T1} demonstrate that the BDR method is less robust to measurement noise than the PGD method, but it requires less background information. To improve its ability to resist noise, we adapted the third step of Algorithm \ref{A1} as follows:
\begin{eqnarray*}
&z^{p}_t=\tilde{z}^{p-1}_t, t=1,\cdots,n,\\
&z^{p}_{n+t}=z^{p-1}_{n+t}-\beta\tilde{z}^{p-1}_{n+t}+y_t,t=1,\cdots,k,
\end{eqnarray*}
Here, $\beta$ is a parameter that controls the projection. This modified algorithm is named BDR1. In the numerical test, we set $\beta=0.9$. Table \ref{T1} shows that this adaptation can improve the robustness of the BDR algorithm and achieve better performance than the PGD method.
\begin{figure}
	\centering
	\includegraphics[width=5in]{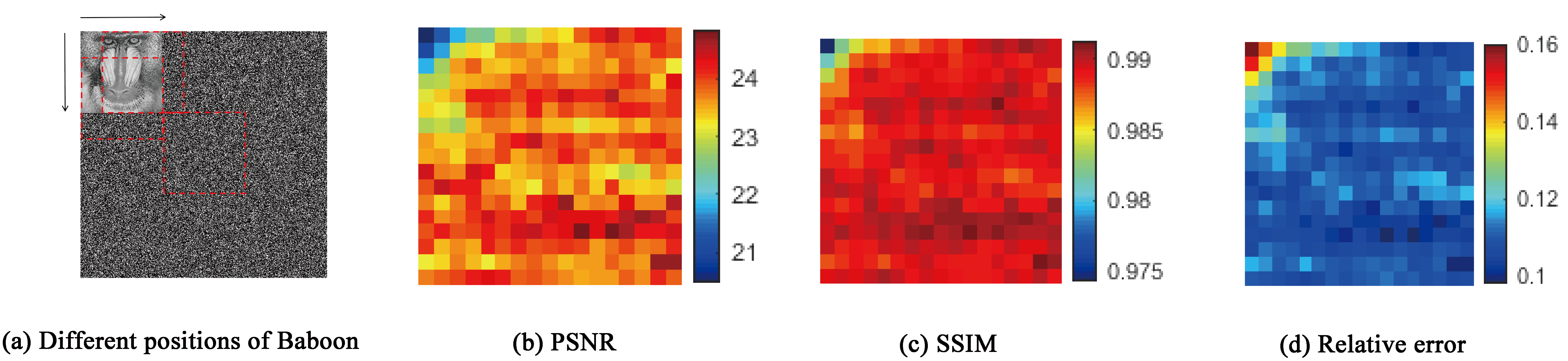}
	\caption{ Pictures recovered under different locations of the background information.
	}\label{locbias}
\end{figure}
\begin{table}
	\centering
	\caption{Results for the robustness test of 2-D Fourier PR with background information, recovering noisy measurements.}
	\vspace{0.01cm}
	\scalebox{0.73}{
		\begin{tabular}{ccccccccccc}
			\hline
			\multicolumn{2}{c}{} &\multicolumn{2}{c}{PSNR}& & \multicolumn{2}{c}{SSIM}& &\multicolumn{2}{c}{Relative error}\\
			\cline{3-4}\cline{6-7}\cline{9-10}
			\multicolumn{2}{c}{}&$\sigma=0.001$ & $\sigma=0.003$ & &$\sigma=0.001$ & $\sigma=0.003$& & $\sigma=0.001$&$\sigma=0.003$ \\
			\hline
			\multirow{3}*{{Baboon}}
			&PGD&21.87&11.33& &0.75&0.32&&0.13&\textbf{0.37}\\
			&BDR&19.78&10.45& &0.66&0.25&&0.17&0.38\\
			&BDR1&\textbf{22.23}&\textbf{12.28}& &\textbf{0.76}&\textbf{0.33}&&\textbf{0.12}&\textbf{0.37}\\
			\\
			\multirow{3}*{{Barba}}
			&PGD&17.91&14.03& &0.64&\textbf{0.34}&&0.20&0.38\\
			&BDR&21.35&12.89& &0.56&0.30&&0.17&0.45\\
			&BDR1&\textbf{23.61}&\textbf{14.48}& &\textbf{0.68}&\textbf{0.34}&&\textbf{0.14}&\textbf{0.37}\\
			\\
			\multirow{3}*{{Lena}}		 	 
			&PGD&23.40&13.72& &0.56&0.19&&0.14&0.40\\
			&BDR&21.12&12.58& &0.44&0.14&&0.18&0.46\\
			&BDR1&\textbf{23.60}&\textbf{14.35}& &\textbf{0.58}&\textbf{0.20}&&\textbf{0.13}&\textbf{0.38}\\
			\\
			\multirow{3}*{{Harbour}}
			
			&PGD&\textbf{22.45}&12.56& &0.62&\textbf{0.26}&&0.14&0.40\\
			&BDR&20.31&11.61& &0.52&0.20&&0.18&0.41\\
			&BDR1&22.25&\textbf{12.45}& &\textbf{0.63}&\textbf{0.26}&&\textbf{0.13}&\textbf{0.39}\\
			\\
			\multirow{3}*{{Golden Hill}}
			
			&PGD&23.83&\textbf{12.56}& &0.67&0.24&&0.13&0.41\\
			&BDR&21.50&11.61& &0.54&0.18&&0.17&0.46\\
			&BDR1&\textbf{23.85}&11.45& &\textbf{0.68}&\textbf{0.25}&&\textbf{0.12}&\textbf{0.37}\\
			\hline
			\label{T1}
	\end{tabular}}
\end{table}
\subsection{Empirical test of the background information model}
Finally, a physical setup is built to demonstrate the capacity of the background information model for the PR problem.
\begin{figure}
	\subfigure[ Measured $480\times480$ diffraction pattern.\label{spectrum}]{
		\includegraphics[width=0.3\textwidth]{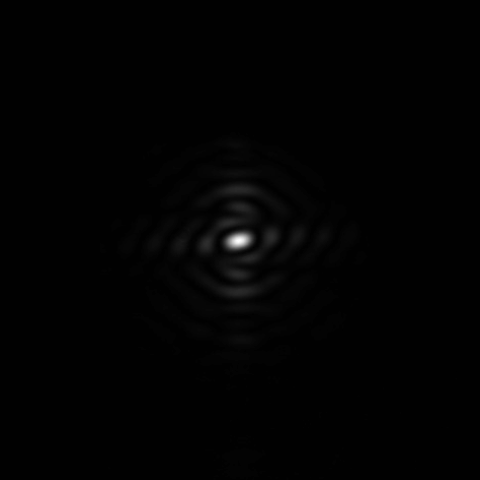}}
	\subfigure[Coherent diffraction imaging experimental setup.\label{setup}]{
		\includegraphics[width=0.3\textwidth]{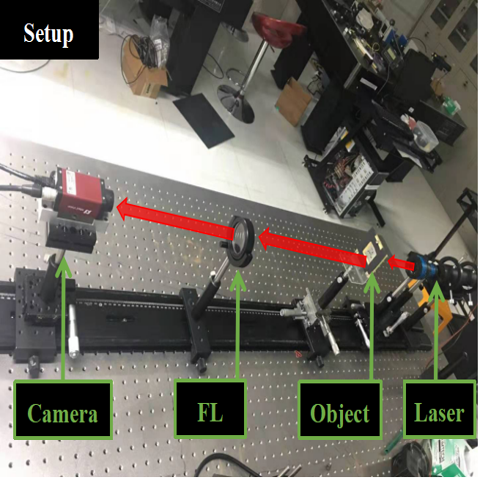}}
	\subfigure[Image recovered by $480\times480$ spectrum after denoising.\label{clearfive}]{
		\includegraphics[width=0.3\textwidth]{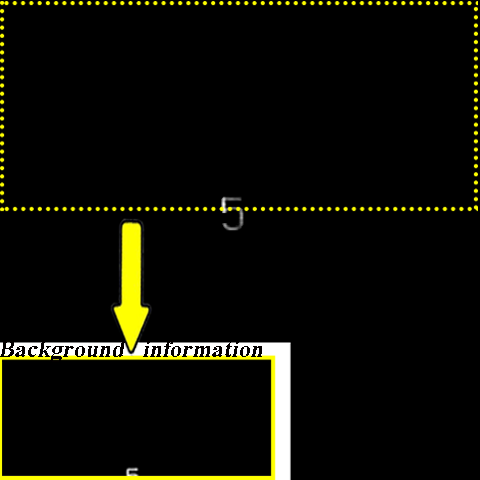}}
	\caption{Experiment setup: (a) shows the diffraction pattern measured by the CCD. (b) depicts the CDI setup. (c) displays the denoised image recovered by HIO. The yellow rectangle indicates the background information.}
\end{figure} 
 
As shown in Figure \ref{setup}, a collimated and expanded laser beam with a wavelength of $\lambda = 632.8nm$ illuminates an object that is the symbol '5' from group 1 of the negative USAF 1951 target. The size of the object is 1.2 mm\footnote{We use the character '5', which is more complex than the vertical and horizontal 3 bars in the USAF 1951, because the purpose of this experiment is to test the BDR method's ability to recover the object from its diffraction pattern, not to test the resolution of the recovered object.}. The object is manufactured by Edmund Optics. More details about the resolution targets can be found at \url{https://www.edmundoptics.cn/p/2-x-2-negative-1951-usaf-hi-resolution-target/12479/#}. The laser beam is focused by an FT lens (with a focal length of $L=300$mm and diameter of $D=30$mm). The diffraction pattern is captured by a CCD sensor (Point Grey Grasshopper2, with a pixel size of $6.45\mu m$ and $2048\times2048$ pixels) using a microscope objective (Newport, $\times10$, NA=0.25). The microscope objective is placed at the rear focal plane of the FT lens. The exposure time is $5000\mu s$.

The captured diffraction pattern has a size of $480\times480$, as shown in Figure \ref{spectrum}. Using these raw data, we have reconstructed a clear image with the HIO algorithm, as shown in Figure \ref{clearfive}. However, this reconstruction was based on a large bandwidth of the spectrum. It is crucial to devise a method to recover images with less frequency information.
For the remaining tests, we assume that the obtained region of the spectrum is $200\times200$, which can be seen in Figure \ref{smallspectrum}. As shown in Figure \ref{HIO}, using the HIO algorithm on the $200\times200$ spectrum cannot reconstruct an identifiable object image due to the loss of high-frequency information.
\begin{figure}
	\centering
	\subfigure[$200\times200$ diffraction pattern which is cropped from Figure \ref{spectrum}\label{smallspectrum}]{\includegraphics[width=0.3\textwidth]{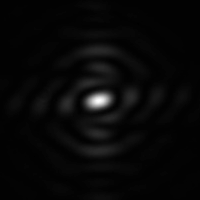}}
	\\
	\subfigure[The result of HIO from $200\times200$ spectrum after denoising.\label{HIO}]{
		\includegraphics[width=0.45\textwidth]{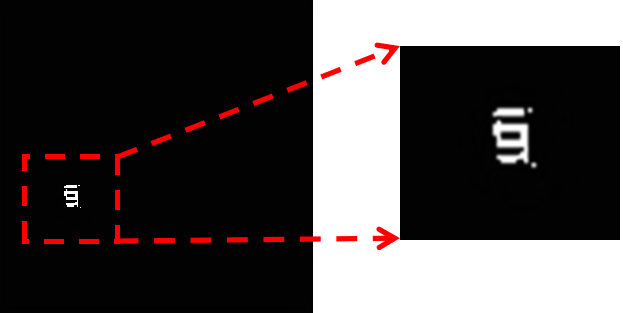}}
	\subfigure[The result of BDR from $200\times200$ spectrum after denoising.\label{BackDR}]{
		\includegraphics[width=0.45\textwidth]{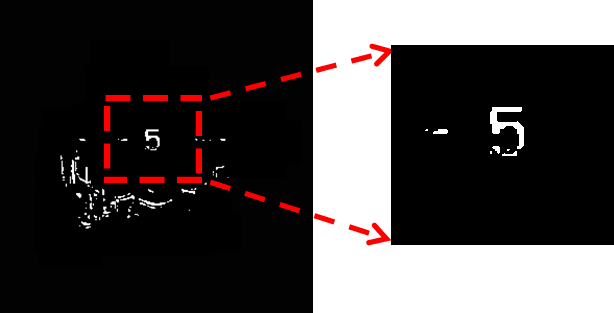}}
	\caption{Experiment results: (a) shows the diffraction pattern used to recover the image. (b) shows the result recovered via HIO. (c) shows the result recovered by the BDR method with the aid of the background information in Figure \ref{clearfive}.}
	\label{correspond}
\end{figure}  
Using the same spectrum shown in Figure \ref{smallspectrum}, we apply the background information model to recover the image. The background information, which is the part of the $200\times200$ image visible in Figure \ref{clearfive}, has a length almost equal to $0.5n$. The BDR method is then applied to recover the image, and the results are displayed in Figure \ref{BackDR}.
We observe that the outline of the number '5' can be reconstructed from the $200\times200$ spectrum with the help of the background information. Although the quality of the image recovered by the BDR algorithm is not very high, this may be attributed to insufficient background information. Nonetheless, it effectively demonstrates the advantage of the background information model.

\section{Conclusion}
This paper focuses on studying the phase retrieval with background information model. Theoretical results demonstrate that the solution's uniqueness can be guaranteed with less background information for 2-D images as compared to previous works. Additionally, the uniqueness result for the general signal is also derived, and a bound between the estimation and the ground truth is established when the measurement is corrupted by noise and suffers from the background information bias.

Two methods, BDR and CBDR, are proposed to find the ground truth. Theoretical results show that the BDR method can achieve an R-linear convergence rate if the initialization is close to the ground truth. Meanwhile, CBDR method can guarantee global convergence based on the satisfaction of the F-RIP property. Numerical simulations demonstrate the high efficiency of the BDR method, which requires less background information but offers better performance. Although the CBDR method have a better convergence guarantee, it requires more background information to avoid introducing extra solutions. The advantages of the background information model are further illustrated through the application of coherent diffraction imaging.

\section*{Acknowledgment}
This work was supported by the National Natural Science Foundation of China (61977065), National Key Research and Development Program (2020YFA0713504). The authors would like to thank Prof. Xiujian Li and Dr. Wusheng Tang for assisting in the optical tests.
\bibliography{2.bib}
\appendix
\section{The proof of Theorem \ref{mt}}\label{appendixa}
To prove the theoretical results, we introduce lemmas related to the set regularity.

\begin{definition}
	The proximal normal cone $N_{\bm{\Omega}}^{\mathrm{prox}}(\mathbf{z})$ to a set $\bm{\Omega}$ at a point $\mathbf{z}\in\bm{\Omega}$ is defined by
	$$N_{\bm{\Omega}}^{\mathrm{prox}}(\mathbf{z}):=\mathrm{cone}(\mathbb{P}^{-1}_{\bm{\Omega}}(\mathbf{z})-\mathbf{z}),$$
	\end{definition}
where $\mathbb{P}_{\bm{\Omega}}^{-1}(\mathbf{z})$ is the preimage set of $\mathbb{P}_{\bm{\Omega}}(\cdot)$ under $\mathbf{z}$.
Then the limiting normal cone or the Mordukhovich normal cone $N_{\bm{\Omega}}(\mathbf{z})$ is defined as any vector that can be the limit of the proximal normal, namely
$$
	N_{\bm{ \Omega}}(\mathbf{z}):=\{\mathbf{u}\in\mathbb{R}^{n+k}|~\exists\{\mathbf{z}^p\}\subset\bm{\Omega},\mathbf{u}^p\in N_{\bm{\Omega}}^{\mathrm{prox}}(\mathbf{ z}^p)~\text{so that}~ \mathbf{z}^p\rightarrow\mathbf{z},\mathbf{u}^p\rightarrow\mathbf{u},\text{as}~p\rightarrow\infty\}.
$$
Figure \ref{cone} provides a graphical illustration of the Mordukhovich normal cone. From its definition, we know that the limiting normal cone is the smallest cone that satisfies the following two properties:
\begin{enumerate}
\item[i.] $\mathbb{P}_{\bm{\Omega}}^{-1}(\mathbf{z}) \subseteq\left(\mathbb{I}+N_{\bm{\Omega}}\right)(\mathbf{z}).$
\item[ii.] For any sequence $\mathbf{z}^{p} \rightarrow \mathbf{z}$, where $\{\mathbf{z}^p\}\in\bm{\Omega}$, if $\mathbf{u}^{p}\in N_{\bm{\Omega}}\left(\mathbf{z}^{p}\right)$ converges to $\mathbf{u}$ as $p\rightarrow\infty$, then $\mathbf{u}$ must lie in $N_{\bm{\Omega}}(\mathbf{z}).$
\end{enumerate}
Next, we will introduce some definitions about the regularity of a set. These definitions can help us analyze the convergence rate.
\begin{definition}
	[Regularity of set systems] Let $\mathbf{A}$ and $\mathbf{B}$ be two subsets of $\mathbb{R}^{n+k}$ and let $\mathbf{z}\in\mathbf{A}\cap\mathbf{B}$. We say that the system $\{\mathbf{A},\mathbf{B}\}$ is strongly regular at $\mathbf{z}
	$ if 
	\begin{eqnarray}
	N_{\mathbf{A}}(\mathbf
	{z})\cap(-N_{\mathbf{B}}(\mathbf{z}))=\{0\}.
	\end{eqnarray}
\end{definition}
If $\mathbf{A}$ and $\mathbf{B}$ are sets defined in Theorem \ref{mt}, we can prove that $N_{\mathbf{A}}(\mathbf{z})=\{\beta\mathbf{z}\}$, where $\beta$ is some real constant, and $N_{\mathbf{B}}(\mathbf{z})=\{\mathbf{z}\in\mathbb{R}^{n+k}:\mathbf{x}=\bm{0},\mathbf{y}\in\mathbb{R}^k\}$. Then we can conclude that $	N_{\mathbf{A}}(\mathbf
{z})\cap(-N_{\mathbf{B}}(\mathbf{z}))=\{0\}$.
\begin{proof}
	First, we will prove that the set $\mathbf{ B }$ satisfies the strongly regularity condition. Recall that $\mathbf{B}=\{\mathbf{z}\in\mathbb{R}^{n+k}:z_{n+i}=y_i,i=1,...,k\}.$ Since $\mathbf{ B }$ is an affine space, every limiting sequence $\{\mathbf{z}^p\}\in\mathbf{B}\rightarrow\mathbf{ z}$ is equivalent to $\{\mathbf{x}^p\}\in\mathbb{R}^n\rightarrow\mathbf{ x}.$ Therefore, for each $\mathbf{z}^p$, $N_{\mathbf{B}}^{\mathrm{prox}}(\mathbf{z}^p)=\mathrm{cone}(\mathbb{P}_{\mathbf{ B}}^{-1}(\mathbf{z}^p)-\mathbf{z}^p)=\{\tilde{\mathbf{z}}\in\mathbb{R}^{n+k}:\mathbf{x}=\bm{0},\tilde{\mathbf{y}}\in\mathbb{ R }^k\}.$ Thus, $N_{\mathbf{B}}(\mathbf{z})=\tilde{\mathbf{ z}}$ where $\mathbf{x}=\bm{0}$, and $\tilde{\mathbf{y}}\in\mathbb{R}^k$.
	
	For $\mathbf{ A}=\{\mathbf{z}\in\mathbb{R}^{n+k}:\vert\mathbf{F}^{\mathrm{H}}\mathbf{z}\vert=\mathbf{b}^{\frac{1}{2}}\}$, because $\mathbb{P}_{\mathbf{A}}^{-1}(\mathbf{z})=\{\mathbf{z}_1\in\mathbb{R}^{n+k}:\theta(\mathbf{F}^{\mathrm{H}}\mathbf{z}_1)=\theta(\mathbf{F}^{\mathrm{H}}\mathbf{z})\},$ where $\theta(\cdot)$ is the element-wise phase operator. The structure of set $\mathbb{P}_\mathbf{A}^{-1}(\mathbf{z})$ can be derived using theoretical results from \cite{Hayes1980Phase}.
	\begin{lemma}\label{Hay}\cite{Hayes1980Phase}
Let $\mathbf{z}_1\in\mathbb{R}^n$ be a vector with a $z$-transform that has no zeros in reciprocal pairs. For any $\mathbf{z}_2\in\mathbb{R}^n$, if $m \geq 2n-1$ and $\theta(\mathbf{F}^{\mathrm{H}}\mathbf{z}_1)=\theta(\mathbf{F}^{\mathrm{H}}\mathbf{z}_2)$, then $\mathbf{z}_1=\beta\mathbf{z}_2$ for some positive constant $\beta$.
		\end{lemma}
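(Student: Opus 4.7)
The plan is to convert the equal-phase condition into a polynomial identity, then use the no-reciprocal-pairs hypothesis to force the two $z$-transforms to have the same zero set.

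First, I would translate the hypothesis algebraically. Let $Z_1(z),Z_2(z)$ denote the $z$-transforms of $\mathbf{z}_1,\mathbf{z}_2$, which are Laurent polynomials in $z^{-1}$ of degree $n-1$. The rows of $\mathbf{F}^{\mathrm{H}}$ evaluate these transforms at the $m$ roots of unity $w_k = e^{2\pi\mathrm{i} k/m}$. Since $\mathbf{z}_1,\mathbf{z}_2$ are real, $\overline{Z_j(w_k)}=Z_j(1/w_k)$. The condition $\theta(\mathbf{F}^{\mathrm{H}}\mathbf{z}_1)=\theta(\mathbf{F}^{\mathrm{H}}\mathbf{z}_2)$ means the complex numbers $Z_1(w_k)$ and $Z_2(w_k)$ are positive real multiples of one another, so $\mathrm{Im}\bigl[Z_1(w_k)\overline{Z_2(w_k)}\bigr]=0$, which rewrites as
\begin{equation*}
Z_1(w_k)Z_2(1/w_k) \;=\; Z_1(1/w_k)Z_2(w_k), \qquad k=1,\ldots,m.
\end{equation*}

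Second, I would promote this identity from the sample points to a polynomial identity. The Laurent polynomial $P(z):=Z_1(z)Z_2(1/z)-Z_1(1/z)Z_2(z)$ has powers of $z$ ranging from $-(n-1)$ to $n-1$, so $z^{n-1}P(z)$ is an ordinary polynomial of degree at most $2n-2$. By the previous step it vanishes at the $m\ge 2n-1$ distinct points $w_k$, and therefore must be identically zero. Hence
\begin{equation*}
Z_1(z)Z_2(1/z) \;\equiv\; Z_1(1/z)Z_2(z)
\end{equation*}
as rational functions on $\mathbb{C}\setminus\{0\}$.

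Third, I would factor $Z_j(z)=c_j z^{-(n-1)}\prod_\ell (z-\alpha_\ell^{(j)})$ and read off the zero sets on both sides of the identity. The zeros of the left-hand side form the multiset $\{\alpha_\ell^{(1)}\}\cup\{1/\alpha_\ell^{(2)}\}$, while those of the right-hand side form $\{1/\alpha_\ell^{(1)}\}\cup\{\alpha_\ell^{(2)}\}$. Matching these multisets shows that every zero of $Z_2$ is either a zero of $Z_1$ or the reciprocal of one. The no-reciprocal-pairs hypothesis on $Z_1$ (combined with conjugate symmetry, since $\mathbf{z}_1$ is real) rules out the possibility of swapping any zero $\alpha$ of $Z_1$ for its conjugate reciprocal $1/\overline{\alpha}$ without producing a zero already present; the only consistent matching is the identity one, so $Z_2$ has precisely the zero set of $Z_1$ (with multiplicity). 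This forces $Z_2(z)=\beta Z_1(z)$ for some real $\beta$, and positivity follows because the common phase forbids a global sign flip.

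I expect the main obstacle to be step three: the bookkeeping of the zero sets, especially when $Z_1$ has repeated zeros, zeros on the unit circle, or complex-conjugate pairs, requires careful case analysis to show that the no-reciprocal-pairs assumption really does eliminate every nontrivial rearrangement. The first two steps are essentially forced by linear algebra, but step three is where the minimum-phase-type hypothesis does the real work.
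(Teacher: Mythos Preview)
The paper does not actually prove this lemma; it is quoted from Hayes (1980) and used as a black box. Your proposal correctly reconstructs the classical argument behind that reference: the equal-phase condition at $m\ge 2n-1$ roots of unity forces the Laurent-polynomial identity $Z_1(z)Z_2(1/z)\equiv Z_1(1/z)Z_2(z)$, and the no-reciprocal-pairs hypothesis then pins down the zero set of $Z_2$.

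One simplification for your step three: writing the identity as $q_1\tilde q_2=\tilde q_1 q_2$ (where $\tilde q$ is the reversal polynomial), the no-reciprocal-pairs hypothesis is exactly the statement that $\gcd(q_1,\tilde q_1)$ is constant, so $q_1\mid q_2$; since $\deg q_2\le n-1=\deg q_1$ this gives $q_2=\beta q_1$ directly. This $\gcd$ formulation absorbs the multiplicity and unit-circle cases you were worried about, so the ``main obstacle'' you anticipated largely disappears. You should still note that the hypothesis implicitly excludes zeros at $\pm 1$ (self-reciprocal) and that leading or trailing zeros in $\mathbf{z}_1$ must be excluded or treated separately for the degree count to go through.
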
  
		Recalling the Assumption \ref{assumption}, we can conclude that $\mathbb{P}_{\mathbf{A}}^{-1}(\mathbf{z})=\{\beta\mathbf{z},\beta>0\}$ by using Lemma \ref{Hay}. As a result, $N_{\mathbf{A}}(\mathbf{z})=\{\alpha\mathbf{z}\}$, $\alpha$ is some real number. Because $\mathbf{x}\neq\bm{0}$, $\alpha\mathbf{z}$ cannot be lied in the $N_{\mathbf{B}}(\mathbf{z})$ only if $\alpha=0$. As a result,  $N_{\mathbf{A}}(\mathbf{ z})\cap(-N_{\mathbf{B}}(\mathbf{z}))=\{\mathbf{0}\}$.
\end{proof}

%
\begin{figure}
	\subfigure[$\mathbf{A}:=\{z\in\mathbb{C}:\vert z\vert=1\}$.]{
		\includegraphics[width=0.45\textwidth]{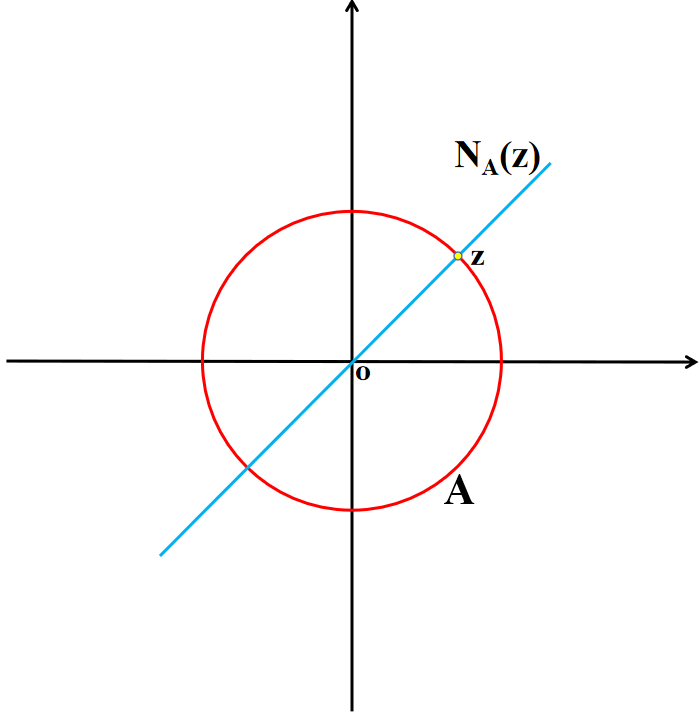}}
	\subfigure[$\mathbf{B}:=\{\mathbf{z}\in\mathbb{R}^2:z_2=1\}$.]{
		\includegraphics[width=0.45\textwidth]{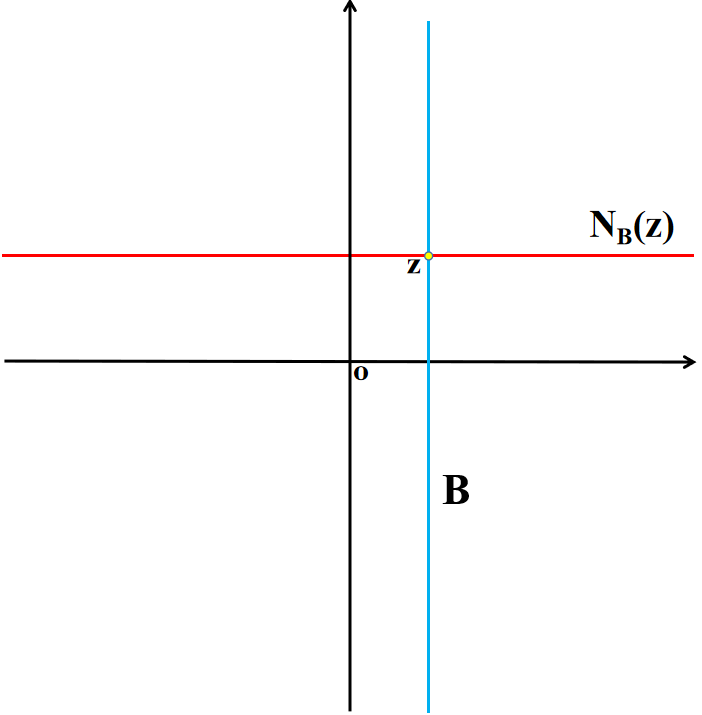}}
\caption{The graphical illustration of the Mordukhovich normal cone.}
	\label{cone}
\end{figure}

\indent
Next, the super-regularity condition will be introduced to aid in the analysis of the convergence rate of the BDR method.
\begin{definition}
	[super-regularity]Let $\bm{\Omega}$ be a closed subset of $\mathbf{R}^{n+k}$. We say $\bm{\Omega}$ is $(\varepsilon,\delta)$-regular at $\mathbf{z}$ if there exist $\varepsilon\geq0$ and $\delta\geq0$ such that
	$$\left\{\begin{array}{l}{\mathbf{z}_1, \mathbf{z}_2 \in\bm{\Omega} \cap \mathbb{B}_{\delta}(\mathbf{z})} \\ {\mathbf{u} \in N_{\bm{\Omega}}^{\operatorname{prox}}(\mathbf{z}_1)}\end{array} \right.\text{leads to}~\langle \mathbf{u}, \mathbf{z}_2-\mathbf{z}_1\rangle \leq \varepsilon\|\mathbf{u}\|_2 \cdot\|\mathbf{z}_2-\mathbf{z}_1\|_2,$$
  then we say $\bm{\Omega}$ is $(\varepsilon,\delta)$-regular at $\mathbf{z}
	$. 
	
	Moreover, if for any $\varepsilon>0$,  there exists $\delta>0$ so that $\bm{\Omega}$ is $(\varepsilon,\delta)$-regular at $\mathbf{z}$, then $\bm{\Omega}$ is said to be super-regular at $\mathbf{z}$. 
\end{definition}
Super-regularity is a property that can be thought of as a smoothness condition. In this paper, we will prove that $\mathbf{A}$ and $\mathbf{B}$ in Theorem \ref{mt} satisfy the super-regularity condition.
\begin{proof}
	For any two elements $\mathbf{z}_1$ and $\mathbf{z}_2$ in $\mathbf{B}$, as proven above, $\mathbf{u}\in N_{\mathbf{B}}^{\text{prox}}(\mathbf{z}_1)=\{\tilde{\mathbf{z}}|\mathbf{x}=\bm{0},\tilde{\mathbf{y}}\in\mathbb{R}^k\}$. As a result, $\langle\mathbf{u},\mathbf{z}_2-\mathbf{z}_1\rangle=0\leq\varepsilon\|\mathbf{u}\|_2\cdot\|\mathbf{z}_2-\mathbf{z}_1\|_2$. Thus, for any $\varepsilon>0$, there always exists $\delta>0$, which satisfies the inequality.
	
	Next, when $\mathbf{z}$ satisfies the conditions in Assumption \ref{assumption}, there are at most $2^{n+k}$ different solutions in $\mathbf{R}^{n+k}$. So $|\mathbf{A}|\leq2^{n+k}$. Therefore, there exists $\delta>0$ such that $\mathbf{ A}\cap\mathbb{B}_{\delta}(\mathbf{ z})=\mathbf{z}$. As a result, we can conclude that $\mathbf{A}$ is super-regular at $\mathbf{z}$. The proof is complete.
\end{proof}

By combining the super-regularity of feasible sets $\mathbf{A}$ and $\mathbf{B}$ with the strong regularity of $\{\mathbf{A},\mathbf{B}\}$, the local R-linear convergence rate of BDR can be deduced using Theorem 4.3 in \cite{phan2016linear}.

\textit{Remark:
Theorem \ref{mt} guarantees that the sequence $\{\mathbf{z}^p\}$ converges to $\mathbf{A}\cap\mathbf{B}$. As stated in Theorem \ref{t12}, finding the fixed points of $\mathbb{T}(\cdot)$ is sufficient to find $\mathbf{A}\cap\mathbf{B}$. Thus, we can loosen the constraints in Theorem \ref{mt} accordingly. This will be considered in future work.
It should be noted that calculating $\mathbb{P}_{\mathbf{A}}(\mathbf{z}^{p-1})$ is difficult under these conditions and usually requires iterative algorithms such as gradient descent to approximate the projection.} 

\section{The proof of the \texorpdfstring{\eqref{chaos}}{}}
We will use the techniques presented in \cite{krahmer2014suprema} to prove \eqref{chaos}. Before the proof, an important theorem will be introduced first.   

\begin{theorem}\cite{krahmer2014suprema}\label{citeKra}
	Let $\mathcal{A}$ be a set of matrices, and let $\boldsymbol{\xi}$ be a random vector whose entries $\xi_j$ are independent, mean-zero, variance 1 , and $L$-subgaussian random variables. Set
	$$
	\begin{aligned}
	& E=\gamma_2\left(\mathcal{A},\|\cdot\|_{2}\right)\left(\gamma_2\left(\mathcal{A},\|\cdot\|_{2}\right)+d_F(\mathcal{A})\right)+d_F(\mathcal{A}) d_{2}(\mathcal{A}), \\
	& V=d_{2}(\mathcal{A})\left(\gamma_2\left(\mathcal{A},\|\cdot\|_{2}\right)+d_F(\mathcal{A})\right), \quad \text { and } \quad U=d_{2}^2(\mathcal{A})
	\end{aligned}
	$$
	Then, for $t>0$,
	$$
	\mathbb{P}\left(\sup _{\boldsymbol{A} \in \mathcal{A}}\left|\|\boldsymbol{A} \boldsymbol{\xi}\|_2^2-\mathbb{E}\|\boldsymbol{A} \boldsymbol{\xi}\|_2^2\right| \geq c_1 E+t\right) \leq 2 \exp \left(-c_2 \min \left\{\frac{t^2}{V^2}, \frac{t}{U}\right\}\right) .
	$$
	The constants $c_1$ and $c_2$ depend only on $L$.
\end{theorem}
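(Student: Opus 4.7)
I would view the family $\{Z_A := \|A\boldsymbol{\xi}\|_2^2 - \mathbb{E}\|A\boldsymbol{\xi}\|_2^2 : A \in \mathcal{A}\}$ as a second-order chaos process indexed by $\mathcal{A}$ and attack it by generic chaining in Talagrand's majorizing-measure framework. The starting point is a Bernstein-type bound on the \emph{increments} of the process: for any fixed $A,B \in \mathcal{A}$, the difference
$$ Z_A - Z_B = \boldsymbol{\xi}^{\mathrm{T}}(A^{\mathrm{T}}A - B^{\mathrm{T}}B)\boldsymbol{\xi} - \mathbb{E}\bigl[\boldsymbol{\xi}^{\mathrm{T}}(A^{\mathrm{T}}A - B^{\mathrm{T}}B)\boldsymbol{\xi}\bigr] $$
is a centered quadratic form in an $L$-sub-Gaussian vector, so a Hanson--Wright-type inequality yields a mixed tail of the form $\mathbb{P}(|Z_A - Z_B|\ge t) \le 2\exp(-c\min(t^2/\alpha_{AB}^2,\, t/\beta_{AB}))$, where $\alpha_{AB}$ scales like $\|A-B\|_F(\|A\|_2+\|B\|_2)$ and $\beta_{AB}$ like $\|A-B\|_2(\|A\|_2+\|B\|_2)$. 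Thus the process has a sub-Gaussian component controlled by the Frobenius distance and a sub-exponential component controlled by the operator distance, both weighted by the operator diameter $d_2(\mathcal{A})$.

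Given this two-regime increment bound, the second step is a two-metric generic chaining along an admissible sequence $(\mathcal{A}_n)$ for $(\mathcal{A},\|\cdot\|_2)$. At each level of the chaining I split every chain increment $Z_{\pi_n(A)} - Z_{\pi_{n-1}(A)}$ at the threshold $t_n$ where the Gaussian regime in Hanson--Wright crosses over into the exponential regime, apply a union bound over the $2^{2^n}$ points in $\mathcal{A}_n$, and sum across levels. The Gaussian contributions accumulate to an order $d_F(\mathcal{A})\,\gamma_2(\mathcal{A},\|\cdot\|_2)$ term (Frobenius diameter multiplied by the operator $\gamma_2$ functional), while the exponential contributions produce a $\gamma_1(\mathcal{A},\|\cdot\|_2)$ term which, after an application of Talagrand's comparison $\gamma_1 \lesssim \gamma_2^{\,2} + d_2 \cdot d_F$, is absorbed into $\gamma_2(\mathcal{A},\|\cdot\|_2)^2 + d_F(\mathcal{A})\,d_2(\mathcal{A})$. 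This chain of estimates yields the in-expectation bound $\mathbb{E}\sup_{A\in\mathcal{A}}|Z_A| \le c_1 E$.

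To upgrade this expectation estimate to the stated deviation inequality I would invoke a Talagrand/Bousquet-type concentration inequality for suprema of empirical chaos processes, which promotes any bound on $\mathbb{E}\sup_A Z_A$ to a sub-Gaussian--sub-exponential tail whose variance proxy and sub-exponential parameter are precisely the $V$ and $U$ of the statement, since those are exactly the natural uniform bounds $\sup_A \|A\|_F \le d_F(\mathcal{A})$ and $\sup_A \|A\|_2 \le d_2(\mathcal{A})$ combined with the chaining radius. If $\boldsymbol{\xi}$ is only $L$-sub-Gaussian rather than standard Gaussian, the reduction to the Gaussian case is accomplished either by decoupling (de la Pe\~na--Montgomery-Smith) or by a sub-Gaussian/Gaussian comparison inequality, both of which affect the final constants only through an $L$-dependent factor, as claimed.

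The principal obstacle is the two-metric chaining: single-metric chaining with only $\|\cdot\|_F$ or only $\|\cdot\|_2$ produces a strictly worse bound, and one has to split admissible-sequence increments so that they decrease geometrically in both metrics simultaneously; this is precisely the device that produces the product term $\gamma_2(\mathcal{A},\|\cdot\|_2)\cdot d_F(\mathcal{A})$ in $E$ rather than a much larger $\gamma_2(\mathcal{A},\|\cdot\|_F)$. A secondary technical difficulty is the Hanson--Wright step for $A^{\mathrm{T}}A - B^{\mathrm{T}}B$: using the linearization $A^{\mathrm{T}}A - B^{\mathrm{T}}B = (A-B)^{\mathrm{T}}A + B^{\mathrm{T}}(A-B)$ one must show that both its operator and Frobenius norms are controlled by $\|A-B\|$ in the corresponding metric times $d_2(\mathcal{A})$, without spurious diameter factors inflating the constants as the chaining depth increases.
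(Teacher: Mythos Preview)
The paper does not prove this theorem at all: it is quoted verbatim from \cite{krahmer2014suprema} and used as a black-box tool in Appendix~B to bound the chaos term~\eqref{chaos}. There is therefore no ``paper's own proof'' to compare your proposal against.

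That said, your sketch is a reasonable outline of the actual argument in Krahmer--Mendelson--Rauhut. The core ideas---Hanson--Wright on the increments $Z_A-Z_B$, generic chaining with a mixed sub-Gaussian/sub-exponential tail, and a concentration step to pass from expectation to deviation---are correct. One point to be careful about: the comparison ``$\gamma_1 \lesssim \gamma_2^2 + d_2\cdot d_F$'' you invoke is not a standard Talagrand inequality and would need justification; in the original paper the $\gamma_2^2$ term arises more directly from the exponential regime of the chaining rather than via a $\gamma_1$-to-$\gamma_2$ conversion. Also, the deviation step in \cite{krahmer2014suprema} is not a black-box Bousquet inequality but is built into the chaining itself (a high-probability chaining bound, not just an expectation bound followed by concentration). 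These are refinements rather than gaps, but they matter if you intend to write out a full proof.
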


 In Theorem \ref{citeKra}, $d_F(\mathcal{A}):=\sup _{\boldsymbol{A} \in \mathcal{A}}\|\boldsymbol{A}\|_\mathrm{F}$, $d_{2}(\mathcal{A}):=\sup _{\boldsymbol{A} \in \mathcal{A}}\|\boldsymbol{A}\|_{2}$, and $\|\boldsymbol{A}\|_{2}:=\sup _{\|\boldsymbol{x}\|_2 \leq 1}\|\boldsymbol{A} \boldsymbol{x}\|_2.$ $\gamma_(\mathcal{A},\|\cdot\|_2)$ is the $\gamma_2$-functional of a set of matrices $\mathcal{A}$ endowed with the operator norm, which is defined as follows.
 
 \begin{definition}\cite{krahmer2014suprema}
     For a metric space $(T, d)$, an admissible sequence of $T$ is a collection of subsets of $T$ denoted by $\left\{T_r: r \geq 0\right\}$, such that for every $s \geq 1,\left|T_r\right| \leq 2^{2^r}$ and $\left|T_0\right|=1$. Then define
$$
\gamma_2(T, d):=\inf \sup _{t \in T} \sum_{r=0}^{\infty} 2^{r/2} d\left(t, T_r\right),
$$
where the infimum is taken with respect to all admissible sequences of $T$.
 \end{definition}
 
To prove \eqref{chaos}, we will apply Theorem \ref{citeKra} to analyze $\sup _{\boldsymbol{A} \in \mathcal{A}}\left|\|\boldsymbol{A} \boldsymbol{\xi}\|_2^2-\mathbb{E}\|\boldsymbol{A} \boldsymbol{\xi}\|_2^2\right|$ when $\mathcal{A}=\{\mathbf{V}_{\mathbf{h}},\mathbf{h}\in\mathbf{D}\}$, where $\mathbf{D}:=\left\{\boldsymbol{h} \in \mathbb{R}^{n+k}:\|\boldsymbol{h}\|_2 \leq 1,\boldsymbol{h}\in\mathbf{C}_2\right\}$, where $\mathbf{C}_2:=\{\mathbf{h}||\mathbf{F}^{\text{H}}_i\mathbf{h}|^2\leq4,i=1,\cdots,n+k\}.$
\begin{proof}
Recall that
$$\mathbf{V}_{\mathbf{h}}:=\frac{1}{\sqrt{k}}\mathbf{P}_{\bm{\Omega}}\mathbf{F}^{-1}\hat{\mathbf{H}}\mathbf{F}_2,$$
where $\mathbf{F}_2$ represents the last $k$ columns of Fourier matrix $\mathbf{F}$. 
Using the convolutional theorem, we can obtain $$\mathbf{F}^{-1}\hat{\mathbf{H}}\mathbf{F}=\mathbf{F}^{-1}\mathbf{F}(\mathbf{H}_{\text{circ}}\mathbf{I})=\mathbf{H}_{\text{circ}},$$
where each row of $\mathbf{H}_{\text{circ}}$ is a circular shift of $\mathbf{h}$ and $\mathbf{H}_{\text{circ}}$ is symmetry. Therefore, 
$$\|\mathbf{V}_{\mathbf{h}}\|_{\mathrm{F}}^2=\frac{1}{k}\|\mathbf{P}_{\bm{\Omega}}\mathbf{F}^{-1}\hat{\mathbf{H}}\mathbf{F}_2\|^2_{\mathrm{F}}\leq\frac{1}{k}\|\mathbf{P}_{\bm{\Omega}}\mathbf{F}^{-1}\hat{\mathbf{H}}\mathbf{F}\|_{\mathrm{F}}^2\leq\|\mathbf{h}\|_2^2\leq1.$$
Thus, for all $\mathbf{h}\in\mathbf{D}$, we have
$$
d_F(\mathcal{A})\leq1.
$$
At the same time, for every $\boldsymbol{h} \in \mathbf{D}$, we have
\begin{eqnarray*}
\left\|\mathbf{V}_{\mathbf{h}}\right\|_{2} &=&\frac{1}{\sqrt{k}}\left\|\mathbf{P}_{\bm{\Omega}}\mathbf{F}^{-1}\hat{\mathbf{H}}\mathbf{F}_2\right\|_{2}\\
&\leq& \frac{1}{\sqrt{k}}\|\widehat{\mathbf{H}}\|_{2}=\frac{1}{\sqrt{k}}\|\mathbf{h}\|_{\widehat{\infty}}\leq\frac{2}{\sqrt{k}},
\end{eqnarray*}
where $\|\mathbf{h}\|_{\widehat{\infty}}:=\|\mathbf{F}^{\text{H}}\mathbf{h}\|_{\infty}$. So for every $\mathbf{h}\in \mathbf{D}$, we have
$$
d_{2}(\mathcal{A}) \leq \frac{2}{\sqrt{k}}.
$$

 The $\gamma_2$-function can be bounded in terms of covering numbers by the well-known Dudley integral as below\cite{krahmer2014suprema}
\begin{eqnarray}\label{integral}    \gamma_2\left(\mathcal{A},\|\cdot\|_{2}\right) \lesssim \int_0^{d_{2}(\mathcal{A})} \log ^{1 / 2} N\left(\mathcal{A},\|\cdot\|_{2}, u\right) d u,
\end{eqnarray}
where $A \lesssim B$ means that there is an absolute constant $c_1$ such that $A \leq c_1 B$. For a metric space $(T, d)$ and $u>0$, the covering number $N(T, d, u)$ is defined as the minimum number of open balls with radius $u$ in $(T, d)$ required to cover $T$.

Because
$$
\left\|\mathbf{V}_{\mathbf{h}_1}-\mathbf{V}_{\mathbf{h}_2}\right\|_{2}=\left\|\mathbf{V}_{\mathbf{h}_1-\mathbf{h}_2}\right\|_{2} \leq\frac{1}{\sqrt{k}}\|\mathbf{h}_1-\mathbf{h}_2\|_{\widehat{\infty}}.
$$
and hence for every $u>0, N\left(\mathcal{A},\|\cdot\|_{2}, u\right) \leq N\left(\mathbf{D}, \frac{1}{\sqrt{k}}\|\cdot\|_{\widehat{\infty}}, u\right)$.

To bound the integral \eqref{integral} above, we will divide the interval of integral into two parts.
\begin{eqnarray}\label{mainintegral}
    \gamma_2\left(\mathcal{A},\|\cdot\|_{2}\right)& \lesssim& \int_0^{\frac{2}{\sqrt{k}}} \log ^{1 / 2} N\left(\mathbf{D}, k^{-1 / 2}\|\cdot\|_{\widehat{\infty}}, u\right) d u\\
    &\lesssim&\underbrace{\int_{\frac{1}{k}}^{\frac{2}{\sqrt{k}}} \log ^{1 / 2}N\left(\mathbf{D}, k^{-1 / 2}\|\cdot\|_{\widehat{\infty}}, u\right)d u}_{(1)}\nonumber\\
    &&+ \underbrace{\int_0^{\frac{1}{k}} \log ^{1 / 2} N\left(\mathbf{D}, k^{-1 / 2}\|\cdot\|_{\widehat{\infty}}, u\right) d u}_{(2)}\nonumber.
\end{eqnarray}

For (1), we use Lemma \ref{V1} to bound $\log ^{1 / 2}N\left(\mathbf{D}, k^{-1 / 2}\|\cdot\|_{\widehat{\infty}}, u\right)d u$ in the interval $[\frac{1}{k},\frac{2}{\sqrt{k}}].$
\begin{lemma}\label{V1}\cite{krahmer2014suprema}
    There exists an absolute constant c for which the following holds. Let $X$ be a normed space, consider a finite set $\mathcal{U} \subset X$ of cardinality $N$, and assume that for every $L \in \mathbb{N}$ and $\left(\boldsymbol{u}_1, \ldots, \boldsymbol{u}_L\right) \in \mathcal{U}^L, \mathbb{E}_\epsilon\left\|\sum_{j=1}^L \epsilon_j \boldsymbol{u}_j\right\|_X \leq A \sqrt{L}$, where $\left(\epsilon_j\right)_{j=1}^L$ denotes a Rademacher vector. Then for every $u>0$, $$
\log N\left(\operatorname{conv}(\mathcal{U}),\|\cdot\|_X, u\right) \leq c(A / u)^2 \log N.
$$
\label{lemma3.4}
\end{lemma}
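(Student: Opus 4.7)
The plan is to establish Lemma~\ref{V1} via the empirical method of Maurey, which is the standard route to a Carl-style entropy bound for convex hulls in Banach spaces. The key idea is that any $\boldsymbol{v}$ in the convex hull of $\mathcal{U}$ can be approximated, in the norm of $X$, by a ``sparse'' convex combination of only $L$ atoms from $\mathcal{U}$; since there are at most $N^L$ such length-$L$ empirical means, this will directly produce a $u$-net once $L$ is calibrated in terms of $A$ and $u$.

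First I would fix $\boldsymbol{v}\in\operatorname{conv}(\mathcal{U})$ and write $\boldsymbol{v}=\sum_{i=1}^{N}\lambda_i\boldsymbol{u}_i$ with $\lambda_i\geq 0$, $\sum_i\lambda_i=1$, treating $(\lambda_i)$ as a probability distribution on the finite set $\mathcal{U}$. Draw i.i.d.\ samples $\boldsymbol{X}_1,\ldots,\boldsymbol{X}_L$ from this distribution, so each $\boldsymbol{X}_j$ has mean $\boldsymbol{v}$ and the empirical mean $\bar{\boldsymbol{X}}_L=\tfrac{1}{L}\sum_j\boldsymbol{X}_j$ should concentrate at $\boldsymbol{v}$. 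Next I would run a standard symmetrization: introduce an independent copy $(\boldsymbol{X}'_j)$ of the sample and an independent Rademacher vector $(\epsilon_j)$, and use the triangle inequality together with Jensen's inequality to obtain
$$\mathbb{E}\bigl\|\bar{\boldsymbol{X}}_L-\boldsymbol{v}\bigr\|_X\leq\tfrac{1}{L}\mathbb{E}\Bigl\|\sum_j(\boldsymbol{X}_j-\boldsymbol{X}'_j)\Bigr\|_X=\tfrac{1}{L}\mathbb{E}\Bigl\|\sum_j\epsilon_j(\boldsymbol{X}_j-\boldsymbol{X}'_j)\Bigr\|_X\leq\tfrac{2}{L}\mathbb{E}\Bigl\|\sum_j\epsilon_j\boldsymbol{X}_j\Bigr\|_X.$$
Conditioning on $(\boldsymbol{X}_j)\in\mathcal{U}^L$ and applying the hypothesis $\mathbb{E}_\epsilon\|\sum_j\epsilon_j\boldsymbol{u}_j\|_X\leq A\sqrt{L}$ pathwise yields $\mathbb{E}\|\bar{\boldsymbol{X}}_L-\boldsymbol{v}\|_X\leq 2A/\sqrt{L}$.

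Since the expectation is attained by at least one realization, there exists $(\boldsymbol{x}_1,\ldots,\boldsymbol{x}_L)\in\mathcal{U}^L$ with $\|\tfrac{1}{L}\sum_j\boldsymbol{x}_j-\boldsymbol{v}\|_X\leq 2A/\sqrt{L}$. Collecting every such length-$L$ empirical mean over $\mathcal{U}$ produces a set of cardinality at most $N^L$, which lies in $\operatorname{conv}(\mathcal{U})$ and forms a $(2A/\sqrt{L})$-net for it. Calibrating $L=\lceil 4A^2/u^2\rceil$ turns this into a $u$-net, so $\log N(\operatorname{conv}(\mathcal{U}),\|\cdot\|_X,u)\leq L\log N\leq c(A/u)^2\log N$ for an absolute constant $c>0$.

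The main obstacle is purely organizational rather than technical: one must be careful that the Rademacher hypothesis applies to an \emph{arbitrary} deterministic tuple from $\mathcal{U}^L$ (so that it can be invoked inside the conditional expectation after symmetrization), and one must verify that all the approximating empirical means really lie inside $\operatorname{conv}(\mathcal{U})$ so the net is admissible. Beyond the triangle inequality and Jensen's inequality, no Banach-space structure or sharp concentration bound is needed; the proof is essentially a probabilistic existence argument paired with the trivial counting bound $|\mathcal{U}^L|=N^L$.
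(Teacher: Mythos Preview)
Your argument is correct and is precisely the Maurey empirical-method proof that underlies this lemma. The paper itself does not supply a proof at all: it merely states that ``The proof of Lemma~\ref{lemma3.4} can be shown in \cite{krahmer2014suprema}'' and moves on to apply the result. Your symmetrization-plus-counting argument is exactly the route taken in that cited reference, so there is nothing to contrast.
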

The proof of Lemma \ref{lemma3.4} can be shown in \cite{krahmer2014suprema}. It is applied for the set
$$
\mathcal{U}=\left\{ \pm \sqrt{2} e_1, \ldots, \pm \sqrt{2} e_{n+k}\right\}.
$$
where the $e_i$ are the standard basis vectors. Noting that $\mathbf{D}\subset\mathbf{B}_2^{n+1}\subset \operatorname{conv}(\mathcal{U})$ and that we may choose $A=c \sqrt{\log (n+k)}$, so we have

\begin{eqnarray}\label{boundpart1}
\log N\left(\mathbf{D}, k^{-1 / 2}\|\cdot\|_{\widehat{\infty}}, u\right) & \leq \log N\left(\operatorname{conv}(\mathcal{U}), k^{-1 / 2}\|\cdot\|_{\widehat{\infty}}, u\right) \nonumber\\
& \lesssim\left(\frac{1}{u\sqrt{k}}\right)^2 \log ^2(n+k) .
\end{eqnarray}

Using the bound \eqref{boundpart1} to calculate part (2) will result in an improper integral. Therefore, we need to provide a new bound of $\log N\left(\mathbf{D}, k^{-1 / 2}\|\cdot\|_{\widehat{\infty}}, u\right)$. To do so, we apply the standard volumetric argument and obtain the following lemma:
\begin{lemma}
    Let $\|\cdot\|$ be some semi-norm on $\mathbb{R}^n$ and let $U$ be a subset of the unit ball $\mathbf{B}_1^n$. Then the covering numbers satisfy, for $t>0$,
    $$
N(U,\|\cdot\|, t) \leq\left(1+\frac{2}{t}\right)^n.
$$
\end{lemma}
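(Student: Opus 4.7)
The plan is to use the standard volumetric (or packing-versus-covering) argument. First I would reduce covering to packing: let $\{u_1,\ldots,u_N\}\subset U$ be a maximal $t$-separated set with respect to $\|\cdot\|$, meaning $\|u_i-u_j\|\geq t$ for all $i\neq j$. Maximality forces every point of $U$ to lie within $\|\cdot\|$-distance $t$ of some $u_i$; otherwise that point could be added to the set, contradicting maximality. Hence $N(U,\|\cdot\|,t)\leq N$, and it suffices to upper bound $N$.

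Next I would run the volume comparison. Let $B=\{x\in\mathbb{R}^n:\|x\|\leq 1\}$ denote the unit ball of the semi-norm. The $\|\cdot\|$-balls $u_i+(t/2)B$ are pairwise disjoint by the separation property (if a point lay in two of them, the triangle inequality would give $\|u_i-u_j\|<t$). On the other hand, each $u_i+(t/2)B$ is contained in $U+(t/2)B\subset \mathbf{B}_1^n+(t/2)B\subset (1+t/2)\mathbf{B}_1^n$, where the last inclusion uses that $B$ is contained in (or appropriately compared to) $\mathbf{B}_1^n$ in the normed case. Taking Lebesgue volumes in $\mathbb{R}^n$ and using the homogeneity $\mathrm{vol}(\lambda S)=\lambda^n\mathrm{vol}(S)$ yields
$$N\cdot (t/2)^n\,\mathrm{vol}(B)\;\leq\;(1+t/2)^n\,\mathrm{vol}(\mathbf{B}_1^n),$$
from which $N\leq (1+2/t)^n$ after rearranging the ratio.

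The main obstacle I anticipate is the \emph{semi-norm} wrinkle: if $\|\cdot\|$ has a nontrivial kernel $K$ of dimension $d>0$, then $B$ is unbounded along $K$ and the raw volume comparison in $\mathbb{R}^n$ breaks. The fix is to pass to the quotient space $\mathbb{R}^n/K\cong\mathbb{R}^{n-d}$, on which $\|\cdot\|$ descends to a genuine norm. The packing argument carries through in the quotient and produces the bound $(1+2/t)^{n-d}$; since $n-d\leq n$ and $1+2/t\geq 1$, the stated estimate $(1+2/t)^n$ follows \emph{a fortiori}. The only other routine care needed is whether to use open or closed balls (switching between them does not affect the packing-versus-covering conversion) and whether the set $U\subset\mathbf{B}_1^n$ is assumed compact; by passing to the closure if necessary, one may assume a maximal packing exists and is finite for every $t>0$.
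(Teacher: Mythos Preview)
The paper does not actually prove this lemma; it simply labels it ``the standard volumetric argument'' and states the bound. Your packing-to-covering reduction followed by a volume comparison is exactly that standard argument, and your quotient-space fix for a nontrivial kernel is the right way to handle the semi-norm case.

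There is, however, a real gap in your volume step, and it is not just a matter of care with open versus closed balls. You contain the disjoint balls $u_i+(t/2)B$ inside $(1+t/2)\mathbf{B}_1^n$ and then take volumes. Even granting the inclusion $B\subset\mathbf{B}_1^n$ (which itself is not a hypothesis), the resulting inequality is
\[
N\cdot (t/2)^n\,\mathrm{vol}(B)\ \le\ (1+t/2)^n\,\mathrm{vol}(\mathbf{B}_1^n),
\]
so that $N\le (1+2/t)^n\cdot \mathrm{vol}(\mathbf{B}_1^n)/\mathrm{vol}(B)$. The ratio $\mathrm{vol}(\mathbf{B}_1^n)/\mathrm{vol}(B)$ does not cancel and can be arbitrarily large: take $\|x\|=M\|x\|_1$ with $M>1$, so $B=M^{-1}\mathbf{B}_1^n$ and the ratio equals $M^n$. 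In that example, covering $U=\mathbf{B}_1^n$ by $\|\cdot\|$-balls of radius $t$ is the same as covering by $\ell_1$-balls of radius $t/M$, which genuinely requires on the order of $(M/t)^n$ balls --- so the lemma as written is false without an extra hypothesis. The standard textbook statement assumes $U$ lies in the unit ball \emph{of the same (semi-)norm} $\|\cdot\|$; then both sides of the volume inequality carry $\mathrm{vol}(B)$ and your argument goes through verbatim to give $(1+2/t)^n$. The paper's use of $\mathbf{B}_1^n$ here appears to be a misprint for that unit ball (and indeed its application of the lemma, claiming $\mathbf{B}_2^{n+k}\subset\mathbf{B}_1^{n+k}$, has a related slip).
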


Because $\mathbf{D}\subset\mathbf{B}_2^{n+k}\subset\mathbf{B}_1^{n+k}$, and $k^{-1 / 2}\|\cdot\|_{\widehat{\infty}}$ is a semi-norm, we have
\begin{eqnarray}\label{boundpart2}
\log N\left(\mathbf{D}, k^{-1 / 2}\|\cdot\|_{\widehat{\infty}}, u\right) \lesssim (n+k)\log (1+\frac{2}{u}).
\end{eqnarray}
Combing \eqref{boundpart1} and \eqref{boundpart2} into \eqref{mainintegral}, we have
\begin{eqnarray}
    \gamma_2\left(\mathcal{A},\|\cdot\|_{2}\right)& \lesssim& \int_0^{d_{2}(\mathcal{A})} \log ^{1 / 2} N\left(\mathbf{D}, k^{-1 / 2}\|\cdot\|_{\widehat{\infty}}, u\right) d u\nonumber\\
    &\lesssim&\int_{\frac{1}{k}}^{\frac{2}{\sqrt{k}}} \log ^{1 / 2}N\left(\mathbf{D}, k^{-1 / 2}\|\cdot\|_{\widehat{\infty}}, u\right)d u\nonumber \\
    &&+\int_0^{\frac{1}{k}} \log ^{1 / 2} N\left(\mathbf{D}, k^{-1 / 2}\|\cdot\|_{\widehat{\infty}}, u\right) d u\nonumber\\
    &\lesssim&\int_{\frac{1}{k}}^{\frac{2}{\sqrt{k}}} \left(\frac{1}{u\sqrt{k}}\right) \log (n+k) d u \nonumber\\
    &&+\sqrt{(n+k)}\int_0^{\frac{1}{k}}\sqrt{\log (1+\frac{2}{u})} d u\nonumber\\
&\lesssim&\frac{\log(n+k)\log(2\sqrt{k})}{\sqrt{k}}+\frac{\sqrt{n+k}}{\sqrt{2}k}\sqrt{\log(e(1+2k))},\label{sq1}
\end{eqnarray}
where the inequality of $\int_{0}^{\frac{1}{k}}\sqrt{\log (1+\frac{2}{u})}du$ above is held by the Cauthy-Schwarz.

$$\int_{0}^{\frac{1}{k}}\sqrt{\log (1+\frac{2}{u})}du\leq \sqrt{\int_{0}^{\frac{1}{k}}1du\int_{0}^{\frac{1}{k}}\log (1+\frac{2}{u})du}=\frac{1}{\sqrt{k}} \sqrt{\int_{0}^{\frac{1}{k}}\log (1+\frac{2}{u})du}.$$

A change of variables and integration by parts yields
\begin{eqnarray*}
\int_0^\frac{1}{k} \log \left(1+\frac{2}{u}\right) d u&=&\frac{1}{2}\int_{k}^{\infty} t^{-2} \log (1+2t) d t \\
& =&-\frac{1}{2}\left.t^{-1} \log (1+2t)\right|_{k} ^{\infty}+\int_{k}^{\infty}\frac{1}{2t^2+t} d t\\ 
&\leq& \frac{1}{2k} \log \left(1+2k\right)+\int_{k}^{\infty} \frac{1}{2t^2} d t \\
& =&\frac{1}{2k} \log \left(e(1+2k)\right).
\end{eqnarray*}

\begin{figure}
		\includegraphics[width=0.7\textwidth]{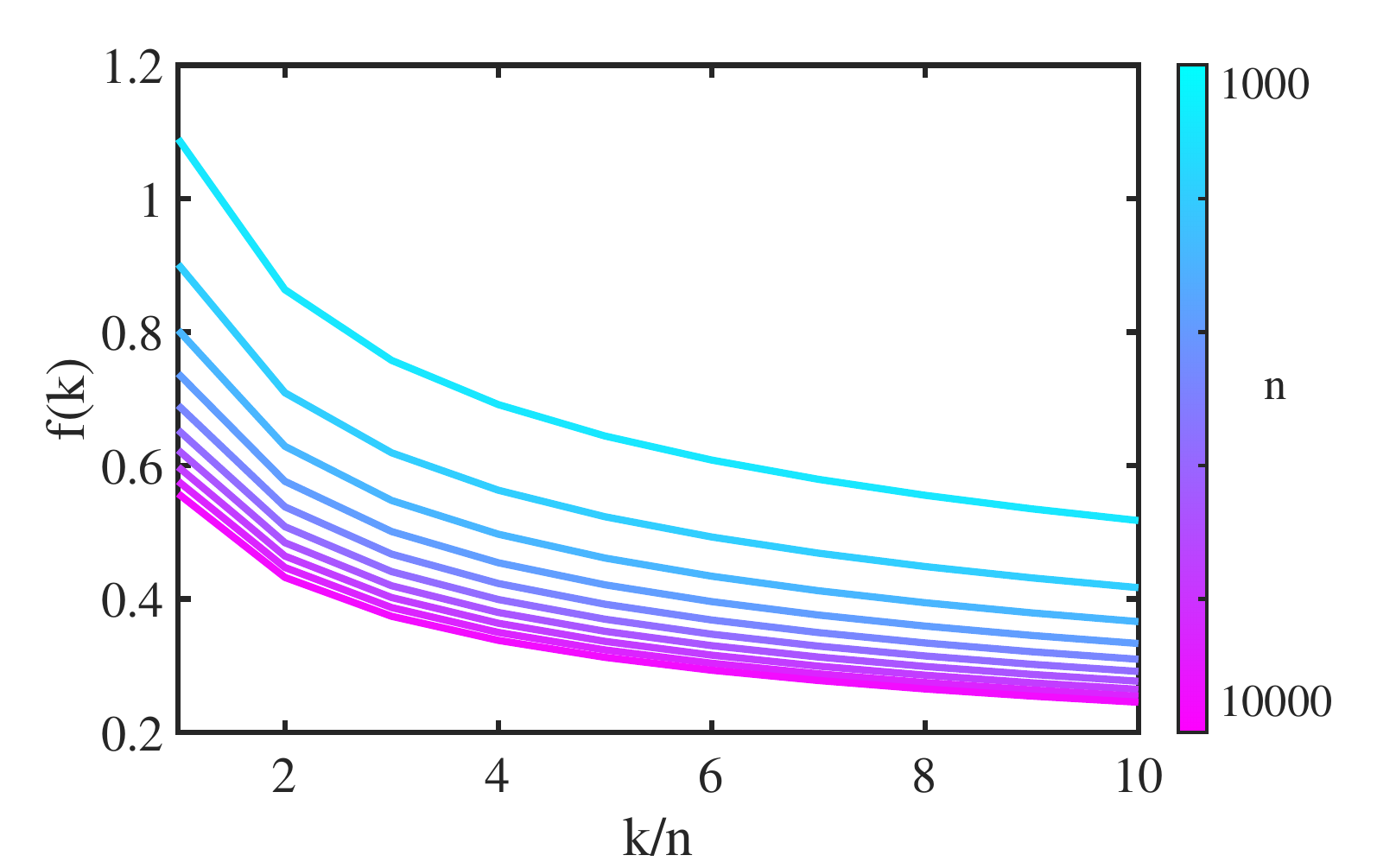}
\centering
 \caption{The plot shows $f(k)$ as $k/n$ increases from $1$ to $10$ with an interval of $1$, and $n$ varies from $1000$ to $10000$.}	\label{plot}
\end{figure} 

Using L'Hospital's rule, we can derive that $\lim_{k\rightarrow+\infty}\gamma_2(\mathcal{A},\|\cdot\|_2)=0$. Therefore, for any $\delta>0$, as long as $k$ is sufficiently large, we can guarantee that $\gamma_2\left(\mathcal{A},\|\cdot\|_{2}\right) \lesssim \frac{\sqrt{\delta}}{2}$ for the given value of $k$. Define $$f(k):=\frac{\log(n+k)\log(2\sqrt{k})}{\sqrt{k}}+\frac{\sqrt{n+k}}{\sqrt{2}k}\sqrt{\log(e(1+2k))}.$$
We plot the $f(k)$ in Fig.\ref{plot}, where $k/n$ increases from $1$ to $10$ with an interval of $1$, and $n$ varies from $1000$ to $10000$. We can find that $f(k)<1$ as long as $k/n$ is large enough.

Using the definition of $E$, we can derive that $\lim_{k\rightarrow+\infty}E=0$. Therefore, by properly choosing the constant $c$ in Theorem \ref{mainth} such that $k$ is sufficiently large, we have:
$$
E \leq \frac{\delta}{4 c_1},
$$
where $E$ and $c_1$ are chosen as in Theorem \ref{citeKra}. 
By Theorem \ref{citeKra} and let $t=\delta/4$, we obtain

$$
\mathbb{P}\left(\delta_s \geq \frac{\delta}{2}\right) \leq \mathbb{P}\left(\delta_s \geq c_1 E+\delta / 4\right) \leq \exp \left(-c_2\left(\frac{k\delta^2}{16(\sqrt{\delta}+2)^2}\right)\right) \leq \eta,
$$
which completes the proof after possibly increasing the value of $c$ sufficiently to compensate for $c_2$.
\end{proof}
\section{The proof of the \texorpdfstring{\eqref{cheby}}{}}

Next, we will evaluate that 

$$\sup _{\boldsymbol{h} \in\mathbf{C}_2}\left|\frac{1}{k}\left(\mathbf{P}_{\bm{\Omega}}\mathbf{F}^{-1}\hat{\mathbf{H}}\mathbf{F}_1\mathbf{x}\right)^{\text{H}}\mathbf{P}_{\bm{\Omega}}\mathbf{F}^{-1}\hat{\mathbf{H}}\mathbf{F}_2\mathbf{y}\right|.$$
By the Chebyshev inequality, we can conclude that
\begin{eqnarray*}
\mathbb{P}(\left|\frac{1}{k}\left(\mathbf{P}_{\bm{\Omega}}\mathbf{F}^{-1}\hat{\mathbf{H}}\mathbf{F}_1\mathbf{x}\right)^{\text{H}}\mathbf{P}_{\bm{\Omega}}\mathbf{F}^{-1}\hat{\mathbf{H}}\mathbf{F}_2\mathbf{y}\right|\geq\frac{\delta}{2})&\leq&\frac{4\left\|\left(\mathbf{P}_{\bm{\Omega}}\mathbf{F}^{-1}\hat{\mathbf{H}}\mathbf{F}_2\right)^{\text{H}}\mathbf{P}_{\bm{\Omega}}\mathbf{F}^{-1}\hat{\mathbf{H}}\mathbf{F}_1\mathbf{x}\right\|^2_2}{\sigma^2k^2}\\
&\leq&\frac{64\left\|\mathbf{x}\right\|^2_2}{\delta^2k^2}\leq\eta.    
\end{eqnarray*}

\end{document}